\font\msbm=msbm10
\numberwithin{equation}{section}
\theoremstyle{plain}
\newtheorem{theorem}{Theorem}[section]
\newtheorem{lemma}[theorem]{Lemma}
\newtheorem{condition}[theorem]{Condition}
\newtheorem{remark}[theorem]{Remark}
\def\mathbb#1{\hbox{\msbm{#1}}}
\renewcommand{\epsilon}{\varepsilon}
\newcommand{\ba}{\boldsymbol{a}}
\newcommand{\bb}{\boldsymbol{b}}
\newcommand{\bc}{\boldsymbol{c}}
\newcommand{\be}{\boldsymbol{e}}
\newcommand{\bff}{\boldsymbol{f}}
\newcommand{\bg}{\boldsymbol{g}}
\newcommand{\bh}{\boldsymbol{h}}
\newcommand{\bn}{\boldsymbol{n}}
\newcommand{\bq}{\boldsymbol{q}}
\newcommand{\bu}{\boldsymbol{u}}
\newcommand{\bv}{\boldsymbol{v}}
\newcommand{\bw}{\boldsymbol{w}}
\newcommand{\bx}{\boldsymbol{x}}
\newcommand{\by}{\boldsymbol{y}}
\newcommand{\bz}{\boldsymbol{z}}
\newcommand{\BA}{\boldsymbol{A}}
\newcommand{\BB}{\boldsymbol{B}}
\newcommand{\BC}{\boldsymbol{C}}
\newcommand{\BD}{\boldsymbol{D}}
\newcommand{\BF}{\boldsymbol{F}}
\newcommand{\BI}{\boldsymbol{I}}
\newcommand{\BH}{\boldsymbol{H}}
\newcommand{\BS}{\boldsymbol{S}}
\newcommand{\BU}{\boldsymbol{U}}
\newcommand{\BV}{\boldsymbol{V}}
\newcommand{\BX}{\boldsymbol{X}}
\newcommand{\BZ}{\boldsymbol{Z}}
\newcommand{\tF}{\widetilde{F}}
\newcommand{\hbh}{\hat{\boldsymbol{h}}}
\newcommand{\hbx}{\hat{\boldsymbol{x}}}
\newcommand{\Keps}{ \mathcal{N}_{\epsilon}}
\newcommand{\Kmu}{ \mathcal{N}_{\mu}}
\newcommand{\Kd}{ \mathcal{N}_{d_0}}
\newcommand{\KF}{ \mathcal{N}_{\widetilde{F}}}
\newcommand{\Kint}{\MN_{d_0} \cap \MN_{\mu} \cap \MN_{\epsilon}}
\newcommand{\bzero}{\boldsymbol{0}}
\newcommand{\A}{\mathcal{A}}
\newcommand{\PP}{\mathcal{P}}
\newcommand{\pa}{\partial}
\newcommand{\CC}{\mathbb{C}}
\newcommand{\CZ}{\mathcal{Z}}
\newcommand{\Dh}{\Delta\boldsymbol{h}}
\newcommand{\Dx}{\Delta\boldsymbol{x}}
\newcommand{\MS}{\mathcal{S}}
\newcommand{\MN}{\mathcal{N}}
\newcommand{\I}{\boldsymbol{I}}
\newcommand{\RR}{\mathbb{R}}
\newcommand{\lag}{\left\langle}
\newcommand{\rag}{\right\rangle}
\newcommand{\lp}{\left(} 
\newcommand{\rp}{\right)} 
\newcommand{\ls}{\left[} 
\newcommand{\rs}{\right]} 
\newcommand{\lc}{\left\{} 
\newcommand{\rc}{\right\}} 
\newcommand{\Tr}{\text{Tr}}
\newcommand{\eps}{\epsilon}
\newcommand{\TB}{T^{\bot}}
\newcommand{\mi}{\mathrm{i}}
\DeclareMathOperator{\Real}{Re}
\DeclareMathOperator{\E}{E}
\DeclareMathOperator{\diag}{diag}
\DeclareMathOperator{\subjectto}{\text{subject to}}
\DeclareMathOperator{\rank}{\text{rank}}
\renewcommand{\qed}{\rule{2.5mm}{2.5mm}}
\newcommand{\lsy}[1]{\textcolor{blue}{#1}}
\newcommand{\vct}[1]{\bm{#1}}
\newcommand{\mtx}[1]{\bm{#1}}
\definecolor{xl}{RGB}{200,50,50}
\newcommand{\XL}[1]{\textcolor{xl}{#1}}
\begin{document}
\title{\bf Rapid, Robust, and Reliable Blind Deconvolution via Nonconvex Optimization}


\author{Xiaodong Li\thanks{Department of Statistics, University of California Davis, Davis CA 95616}, Shuyang Ling\thanks{Department of Mathematics, University of California Davis, Davis CA 95616}, Thomas Strohmer$^{\dagger}$, and Ke Wei$^{\dagger}$}

\maketitle

\begin{abstract} 
We study the question of reconstructing two signals $\bff$ and $\bg$ from their convolution $\by = \bff\ast \bg$.
This  problem, known as {\em blind deconvolution}, pervades many areas of science and technology, including astronomy, medical imaging, optics, and wireless communications. 
A key challenge of this intricate non-convex optimization problem is that it  might exhibit many local minima.
We present an efficient numerical algorithm that is guaranteed to recover the exact solution, when the number of measurements is 
(up to log-factors) slightly larger than the information-theoretical minimum, and under reasonable conditions on $\bff$ and $\bg$.
The proposed regularized gradient descent algorithm converges at a geometric rate and is provably robust in the presence of noise.
To the best of our knowledge, our algorithm is the first blind deconvolution algorithm that is numerically
efficient, robust against noise, and comes with rigorous recovery guarantees under certain subspace conditions.
Moreover, numerical experiments do not only provide empirical verification of our theory, but they also demonstrate that our
method yields excellent performance even in situations beyond our theoretical framework.
\end{abstract}

\section{Introduction}
\label{s:intro}

Suppose we are given the convolution of two signals, $\by = \bff\ast \bg$.
When, under which conditions, and how can we reconstruct $\bff$ and $\bg$ from the knowledge of $\by$ 
if both $\bff$ {\em and} $\bg$
are unknown? This challenging problem, commonly referred to as {\em blind deconvolution} problem, arises
in many areas of science and 
technology, including astronomy, medical imaging, optics, and communications
engineering, see
e.g.~\cite{jefferies1993restoration,Tong95,chan1998total,wang1998blind,campisi2007blind,LWD11,CVR14,WBSJ14}.
Indeed, the quest for finding a fast and reliable algorithm for blind deconvolution has confounded
researchers for many decades. 

It is clear that without any additional assumptions, the blind deconvolution problem is ill-posed.
One common and useful assumption is to stipulate that $\bff$ and $\bg$ belong to known
subspaces~\cite{RR12,LLB15BIP,LLJB15,LS15}. This assumption is reasonable in various applications, provides
flexibility and at the same time lends itself to mathematical rigor. We also adopt this subspace assumption 
in our algorithmic framework (see Section~\ref{s:prelim} for details). But even with this assumption, blind deconvolution 
is a very difficult non-convex optimization problem that suffers from an overabundance of local minima, making its numerical
solution rather challenging.

 In this paper, we present a  numerically efficient  blind deconvolution algorithm that  converges geometrically
to the optimal solution.
Our regularized gradient descent algorithm comes with rigorous mathematical convergence guarantees.
The number of measurements required for the algorithm to succeed is only slightly larger than the information theoretic minimum.
Moreover, our algorithm is also robust against noise. To the best of our knowledge, the proposed algorithm  is the first blind deconvolution algorithm that is numerically
efficient, robust against noise, and comes with rigorous recovery guarantees under certain subspace conditions.


\subsection{State of the art}

Since blind deconvolution problems are ubiquitous in science and engineering, it is not surprising that there
is extensive literature on this topic. It is beyond the scope of this paper to review the existing literature;
instead we briefly discuss those results that are closest to our approach. We gladly acknowledge that those papers
that are closest to ours, namely~\cite{RR12,CLS14,KMO09IT,SL15}, are also the ones that greatly influenced our research in this project.

In the inspiring article~\cite{RR12}, Ahmed, Recht, and Romberg develop a convex optimization framework for blind 
deconvolution. The formal setup of our blind deconvolution problem follows essentially their setting. Using the meanwhile
well-known lifting trick,~\cite{RR12} transforms the blind deconvolution problem into the problem
of recovering a rank-one matrix from an underdetermined system of linear equations. By replacing the rank
condition by a nuclear norm condition, the computationally infeasible rank minimization problem turns into a
convenient convex problem.
The authors provide explicit conditions under which the resulting semidefinite program is guaranteed to have the
same solution as the original problem. In fact, the number of required measurements is not too far from the theoretical
minimum. The only drawback of this otherwise very appealing convex optimization approach is that the computational complexity of solving
the semidefinite program is  rather high for large-scale data and/or for applications where computation time
is of the essence. Overcoming this drawback was one of the main motivations for our paper. While~\cite{RR12} does
suggest a fast matrix-factorization based algorithm to solve the semidefinite program, the convergence of this
algorithm to the optimal solution is not established in that paper. The theoretical number of measurements required in~\cite{RR12}
for the semidefinite program to succeed is essentially comparable to that for our non-convex algorithm
to succeed. The advantage of the proposed non-convex algorithm is of course that it is dramatically faster. Furthermore, numerical 
simulations indicate that the empirically observed number of measurements for our non-convex approach is actually
 smaller than for the convex approach.

The philosophy underlying the method presented in our paper is strongly motivated by the non-convex optimization 
algorithm for phase retrieval proposed in~\cite{CLS14}, see also~\cite{CC15,CLM15}. In the pioneering paper~\cite{CLS14} 
the authors use a two-step approach: 
(i) Construct in a numerically efficient manner a good initial guess; (ii) Based on this initial guess, show
that simple gradient descent will converge to the true solution. Our paper follows a similar two-step scheme.
At first glance one would assume that many of the proof techniques from~\cite{CLS14} should carry over to the blind deconvolution problem.
Alas, we quickly found out that despite some general similarities between the two problems, phase retrieval and blind
deconvolution are indeed surprisingly different. At the end, we mainly adopted some of the general ``proof
principles'' from~\cite{CLS14} (for instance we also have a notion of local regularity condition - although it
deviates significantly from the one in~\cite{CLS14}), but the actual proofs are quite different.  For instance, in~\cite{CLS14} and~\cite{CC15}
convergence of the gradient descent algorithm is shown by directly proving that the distance between the true solution and the iterates decreases.
The key conditions (a local regularity condition and a local smoothness condition) are tailored to this aim. For the blind deconvolution problem
we needed to go a different route. We first show that the objective function decreases during the iterations and then use a certain local
restricted isometry property to transfer this decrease to the iterates to establish convergence  to the solution. 

We also gladly acknowledge being influenced by the papers~\cite{KMO09IT,KMO09noise} by Montanari and coauthors on matrix completion via 
non-convex methods. While the setup and analyzed problems are quite different from ours, their  approach informed our strategy in various ways. In~\cite{KMO09IT,KMO09noise}, 
the authors propose an algorithm which comprises a two-step procedure. First, an initial guess is computed
via a spectral method, and then a nonconvex problem is formulated and solved via an iterative method. The authors prove convergence to a low-rank solution,
but do not establish a rate of convergence.  As mentioned before, we also employ a two-step strategy. Moreover, our approach to prove stability 
of the proposed algorithm draws from ideas in~\cite{KMO09noise}. 

We also benefitted tremendously from~\cite{SL15}. In that paper, Sun and Luo devise a non-convex algorithm for
low-rank matrix completion and provide theoretical guarantees for convergence to the correct solution. We got the idea of
adding a penalty term to the objective function from~\cite{SL15} (as well as from~\cite{KMO09IT}). Indeed, the
particular structure of our penalty term closely resembles that in~\cite{SL15}. In addition, the introduction of the
various neighborhood regions around the true solution, that are used to eventually characterize a ``basin of
attraction'', stems
from~\cite{SL15}. These correspondences may not be too surprising, given
the connections between low-rank matrix completion and blind deconvolution.
Yet, like also discussed in the previous paragraph, despite some obvious similarities between the two problems, it
turned out that many steps and tools in our proofs differ significantly from those in~\cite{SL15}. Also this
should not come as a surprise, since the measurement matrices and setup differ significantly\footnote{Anyone who has experience in the proofs behind compressive sensing and matrix
completion is well aware of the substantial challenges one can already face by ``simply'' changing the sensing matrix from,
say, a Gaussian random matrix to one with less randomness.}.
Moreover, unlike~\cite{SL15}, our paper also provides robustness guarantees for the case of noisy data. Indeed, it
seems plausible that some of our techniques to establish robustness
against noise are applicable to the analysis of various recent matrix completion algorithms, such as e.g.~\cite{SL15}.

We briefly discuss other interesting papers that are to some extent related to our work. 
~\cite{LLJB15} proposes a projected gradient descent algorithm 
based on matrix factorizations and provide a convergence analysis to recover sparse signals from subsampled convolution.
However, this projection step can be hard to implement, which does impact the efficiency and practical use of this method. 
As suggested in~\cite{LLJB15}, one can avoid this expensive projection step by resorting 
to a heuristic approximate projection, but then the global convergence is not fully guaranteed. On the other hand, the papers~\cite{LLB15BIP,LLB15ID} consider identifiability issue of blind deconvolution problem with both $\bff$ and $\bg$ in random linear subspaces and achieve nearly optimal result of sampling complexity in terms of information theoretic limits. Very recently,~\cite{KK16} improved the result from~\cite{LLB15BIP,LLB15ID} by using techniques from algebraic geometry. 

The past few years have also witnessed an increasing number of excellent works other than blind deconvolution but related to nonconvex optimization~\cite{SQW16,WCCL15,Sun16,CLM15,LSJR16}. 
The paper~\cite{TBSR15} analyzes
the problem of recovering a low-rank positive semidefinite matrix from linear measurements via a gradient descent algorithm. 
The authors assume that the measurement matrix fulfills the standard and convenient restricted isometry property, a condition 
that is not suitable for the blind deconvolution problem (besides the fact that the positive semidefinite assumption is not satisfied in our setting). 
 In~\cite{CM15}, Chen and Wainwright study various the solution of low-rank estimation 
problems by projected gradient descent. 
The very recent paper~\cite{ZL16} investigates matrix completion for rectangular matrices. By ``lifting'', they convert the unknown matrix into a positive semidefinite one and apply matrix factorization combined with gradient descent to reconstruct the unknown entries of the matrix. ~\cite{CJ16} considers an interesting blind calibration problem with a special type of measurement matrix via nonconvex optimization. 
Besides some general similarities, there is little overlap of the aforementioned papers with our framework.
Finally, a convex optimization approach to blind deconvolution and self-calibration that extends the work of~\cite{RR12} can be found in~\cite{LS15}, 
while~\cite{LS15Blind} also covers the joint blind deconvolution-blind demixing problem.


\subsection{Organization of our paper}
This paper is organized as follows. We introduce some notation 
used throughout the paper in the remainder of this section.
The model setup and problem formulation are presented in Section~\ref{s:prelim}. Section~\ref{s:Algo} describes the proposed algorithm and our main theoretical
result establishing the convergence of our algorithm.  Numerical simulations can be found in Section~\ref{s:numerics}. 
Section~\ref{s:proof} is devoted to the proof of the main theorem.  Since the proof is quite involved, we have split this section 
into several subsections. Some auxiliary results are collected in the Appendix.

\subsection{Notation}
We introduce notation which will be used throughout the paper. Matrices and vectors are denoted in boldface such as
$\BZ$ and $\bz$. The individual entries of a matrix or a vector are denoted in normal font such as $Z_{ij}$ or
$z_i.$
For any matrix $\BZ$, $\|\BZ\|_*$ denotes its nuclear norm, i.e., the sum of its singular values; $\|\BZ\|$
denotes its operator norm, i.e., the largest singular value, and $\|\BZ\|_F$ denotes its the Frobenius norm, i.e.,
$\|\BZ\|_F =\sqrt{\sum_{ij} |Z_{ij}|^2 }$. For any vector $\bz$, $\|\bz\|$ denotes its Euclidean norm. For both
matrices and vectors, $\BZ^\top$ and $\bz^\top$ stand for the transpose of $\BZ$ and $\bz$ respectively while $\BZ^*$ and
$\bz^*$ denote their complex conjugate transpose.  We equip the matrix space $\CC^{K\times N}$ with the inner
product defined as $\lag \BU, \BV\rag : =\Tr(\BU^*\BV).$ A special case is the inner product of two vectors, i.e.,
$\lag \bu, \bv\rag = \Tr(\bu^*\bv) = \bu^*\bv.$  For a given vector $\bv$, $\diag(\bv)$ represents the diagonal
matrix whose diagonal entries are given by the vector $\bv$. For any $z\in \RR$, denote $z_+$ as $z_+ = \frac{z +
|z|}{2}.$
$C$ is an absolute constant and $C_{\gamma}$ is a constant which depends linearly on $\gamma$, but on no other parameters.

\section{Problem setup}
\label{s:prelim}

We consider the  blind deconvolution model
\begin{equation}\label{eq:model-0}
\by = \vct{f} \ast \bg + \bn,
\end{equation}
where $\by$ is given, but $\bff$ and $\bg$ are unknown. Here ``$\ast$" denotes circular convolution\footnote{As usual, ordinary convolution can be well approximated by circulant convolution,  as long as the function $\bff$ decays sufficiently fast~\cite{Str00}.}.  We will usually consider $\bff$ as the ``blurring function'' and $\bg$ as the signal of interest.
It is clear that without any further assumption it is impossible to recover $\bff$ and $\bg$ from $\by$.
We want to impose conditions on $\bff$ and $\bg$ that are realistic, flexible, and not tied to one particular
application (such as, say, image deblurring). At the same time, these conditions should be concrete enough to 
lend themselves to meaningful mathematical analysis. 

A natural setup that fits these demands is to assume that
$\bff$ and $\bg$ belong to known linear subspaces. Concerning the blurring function, it is reasonable in many applications to assume
that $\bff$ is either compactly supported or that $\bff$ decays sufficiently fast so that it can be well approximated by a compactly supported function. Therefore, we 
assume that $\vct{f}\in \CC^L$ satisfies  
\begin{equation}\label{def:ff}
\vct{f}: = 
\begin{bmatrix}
\bh \\ {\bf 0}_{L-K}
\end{bmatrix}
\end{equation}
where $\bh\in\CC^K$, i.e., only the first $K$ entries of $\bff$ are nonzero and $f_l = 0$ for all $l = K+1, K+2, \ldots, L$. 
Concerning the signal of interest, we assume that  $\bg$ belongs to
a linear subspace spanned by the columns of a known matrix $\BC$, i.e., $\bg = \BC\bar{\bx}$ for  some matrix $\BC$ of size $L \times N$. Here we use $\bar{\bx}$ instead of $\bx$ for the simplicity of notation later.
For theoretical purposes we assume that $\BC$ is a Gaussian random matrix. Numerical simulations suggest that this assumption is clearly not necessary. For example, we observed excellent performance of the proposed algorithm also in cases when $\BC$ represents a wavelet subspace
(appropriate for images) or when $\BC$ is a Hadamard-type matrix (appropriate for communications). We hope to address these other, more realistic,
choices for $\BC$ in our future research.
Finally, we assume that $\bn\sim \mathcal{N}(\bzero, \frac{\sigma^2d_0^2}{2}\I_L) + \mi\mathcal{N}(\bzero, \frac{\sigma^2d_0^2}{2}\I_L) $ as the additive white complex Gaussian noise where $d_0 = \|\bh_0\| \|\bx_0\|$ and $\bh_0$ and $\bx_0$ are the true blurring function and the true signal of interest, respectively. In that way $\sigma^{-2}$ actually serves as a measure of SNR (signal to noise ratio).
 
For our theoretical analysis as well as for numerical purposes, it is much more convenient to express~\eqref{eq:model-0} in the Fourier domain, see
also~\cite{RR12}.
To that end, let $\BF$ be the $L\times L$ unitary Discrete Fourier Transform (DFT) matrix and let the $L \times K$ matrix $\BB$ be given by
 the first $K$ columns of $\BF$ (then $\BB^{*} \BB = \BI_{K}$ ). By applying the scaled DFT matrix $\sqrt{L}\BF$ to  both sides of~\eqref{eq:model-0} we get
\begin{equation}
\label{conv2diag}
\sqrt{L}\BF {\by}= \diag(\sqrt{L} \BF \vct{f} ) (\sqrt{L}\BF \bg) + \sqrt{L}\BF \bn,
\end{equation}
which follows from the property of  circular convolution and Discrete Fourier Transform. 
Here, $ \diag(\sqrt{L} \BF \vct{f} ) (\sqrt{L}\BF \bg)= (\sqrt{L} \BF \vct{f} )\odot(\sqrt{L}\BF \bg)$ where $\odot$ denotes pointwise product. 
By definition of $\bff$ in~\eqref{def:ff}, we have 
\begin{equation*}
\BF \vct{f} = \BB\bh.
\end{equation*}
Therefore,~\eqref{conv2diag} becomes,
\begin{equation}
\label{conv2diag-2}
\frac{1}{\sqrt{L}}\hat{\by} = \diag(\BB\bh) \overline{\BA\bx} + \frac{1}{\sqrt{L}}\BF\bn
\end{equation}
where $\overline{\BA} = \BF\BC$ (we use $\overline{\BA}$ instead of $\BA$ simply because it gives rise to a more convenient notation later, see e.g.~\eqref{def:F}).

Note that if $\BC$ is a Gaussian matrix, then so is $\BF\BC$. Furthermore, $\be = \frac{1}{\sqrt{L}} \BF\bn\sim \mathcal{N}(\bzero, \frac{\sigma^2d_0^2}{2L}\I_L) + \mi\mathcal{N}(\bzero, \frac{\sigma^2d_0^2}{2L}\I_L) $ is again a complex Gaussian random vector.
Hence by replacing $\frac{1}{\sqrt{L}}\hat{\by}$ in~\eqref{conv2diag-2} by $\by$,  we arrive with a slight abuse of notation at
\begin{equation}\label{eq:model}
\by = \diag(\BB\bh) \overline{\BA\bx} + \be.
\end{equation}

For the remainder of the paper, instead of considering the original blind deconvolution problem~\eqref{eq:model-0}, we 
focus on its mathematically equivalent version~\eqref{eq:model},
where $\bh\in\CC^{K\times 1}$, $\bx \in \CC^{N\times 1}$, $\by\in\CC^{L\times 1}$, $\BB\in\CC^{L\times K}$ and $\BA\in\CC^{L\times N}$. 
As mentioned before, $\bh_0$ and $\bx_0$ are the ground truth. Our
goal is to recover $\bh_0$ and $\bx_0$ when $\BB$, $\BA$ and $\by$ are given. 
 It is clear that  if  $(\bh_0, \bx_0)$ is a solution to the blind deconvolution problem.
then so is $(\alpha \bh_0, \alpha^{-1} \bx_0)$ for any $\alpha \neq 0$. Thus, all we can hope for in the absence of any further information, is to recover
a solution from the equivalence class $(\alpha \bh_0, \alpha^{-1}\bx_0), \alpha \neq 0$. Hence, we can as well assume that  $\|\bh_0\| = \|\bx_0\| := \sqrt{d}_0$.


As already mentioned, we choose $\BB$ to be the ``low-frequency" discrete Fourier matrix, i.e., the first $K$ columns of an $L\times L$ unitary DFT (Discrete Fourier Transform) matrix. 
Moreover, we choose $\BA$ to be an $L\times N$ complex Gaussian random matrix, i.e., 
\begin{equation*}
A_{ij} \sim \mathcal{N}\left(0, \frac{1}{2}\right) + \mi \mathcal{N}\left(0, \frac{1}{2}\right),
\end{equation*}
where ``$\mi$" is the imaginary unit.

\bigskip

We define the matrix-valued linear operator $\A(\cdot)$ via
\begin{equation}\label{def:A}
\A: \A(\BZ) = \{\bb_l^*\BZ \ba_l\}_{l=1}^L,
\end{equation}
where $\bb_l$ denotes the $l$-th column of $\BB^*$ and $\ba_l$ is the $l$-th column of $\BA^*.$ Immediately, we have  $\sum_{l=1}^L \bb_l\bb_l^* = \BB^*\BB = \I_K$, $\|\bb_l\| = \frac{K}{L}$ and $\E(\ba_l\ba_l^*) = \I_N$ for all $1\leq l\leq L$. 
This is essentially a smart and popular trick called ``lifting"~\cite{CSV11,RR12,LS15}, which is able to convert a class of nonlinear models into linear models at the costs of increasing the dimension
of the solution space.

It seems natural and tempting to recover $(\bh_0, \bx_0)$ obeying~\eqref{eq:model} by solving the following optimization problem
\begin{equation}\label{eq:minF}
\min_{(\bh, \bx)}  \,\, F(\bh, \bx),
\end{equation}
where 
\begin{equation}\label{def:F}
F(\bh, \bx) := \|\diag(\BB\bh)\overline{\BA\bx} - \by\|^2 = \|\A(\bh\bx^* - \bh_0\bx_0^*) - \be\|^2.
\end{equation}
We also let 
\begin{equation}\label{def:F0}
F_0(\bh, \bx) : = \|\A(\bh\bx^* - \bh_0\bx_0^*)\|_F^2
\end{equation}
which is a special case of $F(\bh, \bx)$ when $\be = \bzero.$  Furthermore, we define $\delta(\bh, \bx)$, an important quantity throughout our discussion, via
 \begin{equation}\label{def:delta}
 \delta(\bh, \bx) : = \frac{\|\bh\bx^* - \bh_0\bx_0^*\|_F}{d_0}.
 \end{equation}
When there is no danger of ambiguity, we will often denote $\delta(\bh, \bx)$  simply by $\delta$. But let us remember that $\delta(\bh, \bx)$ is always a function of $(\bh, \bx)$ and measures the relative approximation error of $(\bh, \bx)$.

Obviously, minimizing~\eqref{def:F} becomes a nonlinear least square problem, i.e.,
one wants to find a pair of vectors $(\bh,\bx)$ or a rank-1 matrix $\bh\bx^*$ which fits the measurement equation in~\eqref{def:F} best. 
Solving~\eqref{eq:minF} is a challenging optimization problem since it is highly nonconvex and most of the available algorithms, such as alternating minimization and gradient descent, may suffer from getting easily trapped in some local minima. Another possibility is to consider a convex relaxation of~\eqref{eq:minF} at the cost of having to solve an expensive semidefinite program. In the next section we will describe how to avoid this dilemma and design an efficient gradient descent algorithm that, under reasonable conditions, will always converge to the true solution.

\if 0
\subsection{Incoherence}

However, in this particular problem, minimizing~\eqref{def:F} might not be enough to guarantee recovery. Before moving to further discussion, we define the incoherence parameter between $\{\bb_l\}_{l=1}^L$ and $\bh_0$ as 
\begin{equation}\label{def:muh}
\mu^2_h = \frac{L \|\BB\bh_0\|_{\infty}^2}{\|\bh_0\|^2}
\end{equation}
and it is easy to see that $1\leq \mu_h^2 \leq K$. Both lower and upper bounds are tight, i.e.,  $\mu^2_h = K$  if $\bh_0$ is parallel to one of $\{\bb_l\}_{l=1}^L$ and $\mu^2_h = 1$ if $\bh_0$ is a 1-sparse vector of length $K$.

One may naturally question if $\mu^2_h$ is really necessary here. Now recall that in matrix completion~\cite{CR08,Recht11MC}, we know the left and right singular vectors of ``ground truth"  cannot be ``too aligned" with those of the measurement matrices. The same philosophy works here as well. 
It has been shown numerically that the incoherence parameter is not only an important ingredient for establishing exact recovery via convex programming~\cite{RR12,LS15Blind} but also it affects the performance of algorithm and sampling complexity. 
In other words, an important underlying assumption here is that $\BB$ and the ground truth $\bh_0$ are well ``incoherent" to each other and thus $\mu^2_h$ is relatively small. 

So more precisely, we are looking for a pair of $(\bh_0, \bx_0)$ via
\begin{equation}\label{eq:F-2}
\min_{(\bh, \bx)} F(\bh, \bx) \quad \text{ subject to } \quad L\|\BB\bh\|_{\infty}^2 \leq p \|\bh\|^2 
\end{equation}
where $p > 0$ is a tuning parameter which controls the incoherence. Now one may want to ask a similar question, i.e., for this nonconvex optimization problem, whether adding this incoherence constraint will enhance the performance or it is just artifact of proof? The answer is that the incoherence constraint does improve empirical performance  under some extreme cases. \lsy{LSY: numerically, we see improvement of empirical result if adding regularizer term... especially if the underlying signals are quite aligned with some rows of $\BB$, for both $\BA$ being Gaussian matrices and random Hadamard matrix.} \XL{XL: This is a quite strong claim. The numerical advantage of penalization and regularization might be empirically important, but by far we don't have very convincing evidence. If we hope to claim this, we'd better expend some effort to seek for more convincing evidence.}

But it is obvious that $\{ \bh | L\|\BB\bh\|_{\infty}^2 \leq p \|\bh\|^2  \}$ is a nonconvex cone. Therefore, this extra constraint creates new difficulties although it helps enforce the incoherence constraint.

\fi

\section{Algorithm and main result}
\label{s:Algo}

In this section we introduce our algorithm as well as our main theorems which establish  convergence of the proposed
algorithm to the true solution.
As mentioned above, in a nutshell our algorithm consists of two parts: First we use a carefully chosen initial guess, 
and second we use a variation of gradient descent, starting at the initial guess
to converge to the true solution. One of the most critical aspects is of course that we must avoid getting stuck 
in local minimum or saddle point. Hence, we need to ensure that our iterates are inside some properly chosen
{\em basin of attraction} of the true solution. The appropriate characterization of such a basin of attraction
requires some diligence, a task that will occupy us in the next subsection. We will then proceed to introducing
our algorithm and analyzing its convergence.

\subsection{Building a basin of attraction}

The road toward designing a proper basin of attraction is basically paved by three observations, described below. 
These observations prompt us to introduce three neighborhoods (inspired by~\cite{KMO09IT,SL15}), whose intersection
will form the desired basin of attraction of the solution.

\medskip
\noindent
{\bf Observation 1 - Nonuniqueness of the solution:}
As pointed out earlier,  if $(\bh,\bx)$ is a solution to~\eqref{eq:model}, then so is $(\alpha \bh, \alpha^{-1}\bx)$ for any
$\alpha \neq 0$. Thus, without any prior information about $\|\bh\|$ and/or $\|\bx\|$, it is clear that we can only
recover the true solution up to such an unknown constant $\alpha$. Fortunately, this suffices for most
applications. From the viewpoint of numerical stability however, we do want to avoid, while $\|\bh\|\|\bx\|$
remains bounded, that $\|\bh\|\to 0$ and $\|\bx\| \to \infty$ (or vice versa). To that end we introduce the
following neighborhood:
\begin{equation}
\Kd :=  \{ (\bh, \bx) : \|\bh\| \leq 2\sqrt{d_0}, \|\bx\| \leq 2\sqrt{d_0} \} \label{def:Kd}.
\end{equation}
(Recall that $d_0 = \|\bh_0\|  \|\bx_0\|$.)


\medskip
\noindent
{\bf Observation 2 - Incoherence:}
Our numerical simulations indicate that the number of measurements required for solving the blind deconvolution problem
with the proposed algorithm does depend (among others) on how much 
$\bh_0$ is correlated with the rows of the matrix $\BB$ --- the smaller 
the correlation the better.
A similar effect has been observed in blind deconvolution via convex programming~\cite{RR12,LS15Blind}.
We quantify this property by defining the {\em incoherence} between the rows of $\BB$ and $\bh_0$ via
\begin{equation}\label{def:muh}
\mu^2_h = \frac{L \|\BB\bh_0\|_{\infty}^2}{\|\bh_0\|^2}.
\end{equation}
It is easy to see that $1\leq \mu_h^2 \leq K$ and both lower and upper bounds are tight; i.e.,  $\mu^2_h = K$  
if $\bh_0$ is parallel to one of $\{\bb_l\}_{l=1}^L$ and $\mu^2_h = 1$ if $\bh_0$ is a 1-sparse vector of length $K$.
Note that in our setup, we assume that $\BA$ is a random matrix and $\bx_0$ is fixed, thus with high probability, $\bx_0$ is already
sufficiently incoherent with the rows of $\BA$ and thus we only need to worry about the incoherence between $\BB$ and
$\bh_0$. 

It should not come as a complete surprise that the incoherence between $\bh_0$ and the rows of $\BB$ is important.
The reader may recall that in matrix completion~\cite{CR08,Recht11MC} the left and right singular vectors of the
solution cannot be ``too aligned" with those of the measurement matrices. A similar philosophy seems to apply here.
Being able to control the incoherence of the solution is instrumental in deriving rigorous convergence guarantees
of our algorithm. For that reason, we introduce  the neighborhood
\begin{equation}
\Kmu :=  \{ \bh : \sqrt{L} \|\BB\bh\|_{\infty} \leq 4\sqrt{d_0}\mu \}, \label{def:Kmu}
\end{equation}
where $\mu_h \leq \mu$.

\medskip
\noindent
{\bf Observation 3 - Initial guess:} It is clear that due to the non-convexity of the objective
function, we need a carefully chosen initial guess. 
We quantify the distance to the true solution via the following neighborhood
\begin{equation}
\Keps  :=  \{ (\bh, \bx) : \|\bh\bx^* - \bh_0\bx_0^*\|_F \leq \eps d_0 \}. \label{def:Keps}
\end{equation}
where $\eps$ is a predetermined parameter in $(0, \frac{1}{15}]$.

\bigskip
It is evident that the true solution $(\bh_0,\bx_0) \in \Kd \cap \Kmu$.
Note that $(\bh, \bx)\in \Kd \bigcap \Keps$ implies $\|\bh\bx^*\|\geq (1 - \eps)d_0$ and $\frac{1}{\|\bh\|} \leq \frac{\|\bx\|}{(1 - \eps)d_0} \leq \frac{2}{(1 - \eps)\sqrt{d_0}}$. Therefore, for any element $(\bh, \bx)\in \Kint,$ its incoherence can be well controlled by 
\begin{equation*}
\frac{\sqrt{L}\|\BB\bh\|_{\infty}}{\|\bh\|} \leq \frac{4\sqrt{d_0}\mu}{ \|\bh\| } \leq \frac{ 8\mu }{1 - \eps}.
\end{equation*}

\subsection{Objective function and key ideas of the algorithm}

Our approach consists of two parts: We first construct an initial guess that is inside the ``basin of attraction'' 
$\Kint$. We then apply
a carefully regularized gradient descent algorithm that will ensure that all the iterates remain inside $\Kint$.

Due to the difficulties of directly projecting onto $\Kd\cap \Kmu$ (the neigbourhood $\Keps$ is easier to manage) 
we add instead a regularizer $G(\bh,\bx)$
to the objective function $F(\bh, \bx)$ to enforce that the iterates remain inside $\Kd\cap\Kmu$.
While the idea of adding a penalty function to control incoherence is proposed in different forms
to solve matrix completion problems, see e.g.,~\cite{Sun15,LRST10}, our version is mainly 
inspired by~\cite{SL15,KMO09IT}. 

Hence, we aim to minimize the following regularized objective function to solve the blind deconvolution problem:
\begin{equation}\label{def:FG}
\tF(\bh, \bx) = F(\bh, \bx) + G(\bh, \bx),
\end{equation}
where $F(\bh, \bx)$ is defined in~\eqref{def:F} and $G(\bh, \bx)$, the penalty function, is of the form 
\begin{equation}\label{def:G}
G(\bh, \bx) = \rho \left[ G_0\left(\frac{\|\bh\|^2}{2d}\right) + G_0\left(\frac{\|\bx\|^2}{2d}\right) + \sum_{l=1}^L G_0\left(\frac{L|\bb_l^*\bh|^2}{8d\mu^2}\right) \right],
\end{equation}
where $G_0(z) = \max\{z - 1, 0 \}^2$ and $\rho \geq d^2  + 2\|\be\|^2$. Here we assume $\frac{9}{10}d_0 \leq d \leq \frac{11}{10}d_0$ and $\mu \geq \mu_h$.

The idea behind this, at first glance complicated, penalty function  is quite simple. The first two terms in~\eqref{def:G} enforce the
projection of $(\bh, \bx)$ onto $\Kd$ while the last term is related to $\Kmu$; it will be shown later that any $(\bh, \bx)\in \frac{1}{\sqrt{3}}\Kd \bigcap \frac{1}{\sqrt{3}}\Kmu$ gives $G(\bh, \bx) = 0$ if $\frac{9}{10}d_0 \leq d\leq \frac{11}{10}d_0$. 
Since $G_0(z)$ is a truncated quadratic function, it is obvious that $G_0'(z) = 2\sqrt{G_0(z)}$ and $G(\bh, \bx)$ is a continuously differentiable function. Those two properties play a crucial role in proving geometric convergence of our algorithm presented later. 

Another important issue concerns the selection of parameters. We have three unknown parameters in $G(\bh, \bx)$, i.e., $\rho$, $d$, and $\mu$. Here,  $d$ can be obtained via Algorithm~1 and $\frac{9}{10}d_0 \leq d\leq \frac{11}{10}d_0$ is guaranteed by Theorem~\ref{thm:init}; $\rho \geq d^2 + 2\|\be\|^2 \approx d^2 +  2\sigma^2 d_0^2$ because $\|\be\|^2 \sim \frac{\sigma^2 d_0^2}{2L}\chi^2_{2L}$ and $\|\be\|^2$ concentrates around $\sigma^2 d_0^2$. In practice, $\sigma^2$ which is the inverse of the SNR, can often be well estimated.
Regarding $\mu$, we require $\mu_h \leq \mu$ in order to make sure that $\bh_0\in \Kmu$. It will depend on the specific application how well one can estimate $\mu_h^2$. For instance, in wireless communications, a very common channel model for $\bh_0$ is to assume a {\em Rayleigh fading model}~\cite{TV05}, i.e., $\bh_0 \sim \mathcal{N}(\bzero, \frac{\sigma^2_h}{2}\I_K) + \mi \mathcal{N}(\bzero, \frac{\sigma^2_h}{2}\I_K)$. In that case it is easy to see that $\mu_h^2 = \mathcal{O}(\log L).$


\subsection{Wirtinger derivative of the objective function and algorithm}

Since we are dealing with complex variables, instead of using regular derivatives it is more convenient to utilize Wirtinger derivatives~\footnote{For any complex function $f(\bz)$ where $\bz = \bu + \mi \bv\in\CC^n$ and $\bu, \bv\in\RR^n$, the Wirtinger derivatives are defined as $\frac{\pa f}{\pa \bz} = \frac{1}{2}\left( \frac{\pa f}{\pa \bu} - \mi \frac{\pa f}{\pa \bv}\right)$ and $\frac{\pa f}{\pa \bar{\bz}} = \frac{1}{2}\left( \frac{\pa f}{\pa \bu} + \mi \frac{\pa f}{\pa \bv}\right).$ Two important examples used here are $\frac{\pa \|\bz\|^2}{\pa \bar{\bz}} = \bz$ and $\frac{\pa (\bz^*\bw) }{\pa \bar{\bz}} = \bw$.
}, which has become increasingly popular since~\cite{CLS14,CLM15}. 
Note that $\tF$ is a real-valued function and hence we only need to consider the derivative of $\tF$ with respect to $\bar{\bh}$ and $\bar{\bx}$ and the corresponding updates of $\bh$ and $\bx$ because a simple relation holds, i.e.,
\begin{equation*}
\frac{\pa \tF}{\pa \bar{\bh}} = \overline{ \frac{\pa \tF}{\pa \bh}} \in \CC^{K\times 1}, \quad \frac{\pa \tF}{\pa \bar{\bx}} = \overline{ \frac{\pa \tF}{\pa \bx}} \in \CC^{N\times 1}.
\end{equation*}
In particular, we denote $\nabla \tF_{\bh} := \frac{\pa \tF}{\pa \bar{\bh}}$ and $\nabla \tF_{\bx} := \frac{\pa \tF}{\pa \bar{\bx}}$.

We also introduce the adjoint operator of $\A: \CC^L \rightarrow \CC^{K\times N}$, given by
\begin{equation}\label{def:AD}
\A^*(\bz) = \sum_{l=1}^L z_l \bb_l\ba_l^*.
\end{equation}

Both $\nabla \tF_{\bh}$ and  $\nabla \tF_{\bx}$  can now be expressed as
\begin{equation*}
\nabla \tF_{\bh} = \nabla F_{\bh} + \nabla G_{\bh}, \quad \nabla \tF_{\bx} = \nabla F_{\bx} + \nabla G_{\bx},
\end{equation*}
where each component yields
\begin{eqnarray}
\nabla F_{\bh} & = & \A^*(\A(\bh\bx^*) - \by)\bx = \A^*(\A(\bh\bx^* - \bh_0\bx_0^*) -\be)\bx, \label{eq:WFh} \\
\nabla F_{\bx} & = & [\A^*(\A(\bh\bx^*) - \by)]^*\bh = [\A^*(\A(\bh\bx^* - \bh_0\bx_0^*) - \be)]^*\bh, \label{eq:WFx}\\
\nabla G_{\bh}
& = & \frac{\rho}{2d}\left[G'_0\left(\frac{\|\bh\|^2}{2d}\right) \bh 
+ \frac{L}{4\mu^2} \sum_{l=1}^L G'_0\left(\frac{L|\bb_l^*\bh|^2}{8d\mu^2}\right) \bb_l\bb_l^*\bh \right], \label{eq:WGh} \\
\nabla G_{\bx} & = & \frac{\rho}{2d} G'_0\left( \frac{\|\bx\|^2}{2d}\right) \bx. \label{eq:WGx}
\end{eqnarray}


\bigskip

Our algorithm consists of two steps: initialization and gradient descent with constant stepsize. The initialization is achieved via a spectral method followed by projection. The idea behind spectral method is that 
\begin{equation*}
\E \A^*(\by) = \E \A^*\A(\bh_0\bx_0^*) + \E \A^*(\be) = \bh_0\bx_0^*
\end{equation*}
and hence one can hope that the leading singular value and vectors of $\A^*(\by)$ can be a good approximation of $d_0$ and $(\bh_0, \bx_0)$ respectively. 
The projection step ensures $\bu_0\in \Kmu$, which the spectral method alone might not guarantee. 
We will address the implementation and computational complexity issue in Section~\ref{s:numerics}.
\begin{algorithm}[h!]
\caption{Initialization via spectral method and projection}
\label{Initial}
\begin{algorithmic}[1]
\State Compute $\A^*(\by).$
\State Find the leading singular value, left and right singular vectors of $\A^*(\by)$, denoted by $d$, $\hat{\bh}_0$ and $\hat{\bx}_0$ respectively. 
\State Solve the following optimization problem:
\begin{equation*}
\bu_0 := \text{argmin}_{\bz} \|\bz - \sqrt{d}\hbh_0\|^2, \quad \subjectto \sqrt{L}\|\BB\bz\|_{\infty} \leq 2\sqrt{d}\mu
\end{equation*}
and $\bv_0 = \sqrt{d}\hat{\bx}_0.$ 
\State Output: $(\bu_0, \bv_0).$
\end{algorithmic}
\end{algorithm}

\begin{algorithm}[h!]
\caption{Wirtinger gradient descent with constant stepsize $\eta$}
\label{AGD}
\begin{algorithmic}[1]
\State {\bf Initialization:} obtain $(\bu_0, \bv_0)$ via Algorithm~\ref{Initial}.
\For{ $t = 1, 2, \dots, $}
\State $\bu_t = \bu_{t-1} - \eta \nabla \tF_{\bh}(\bu_{t-1}, \bv_{t-1})$
\State $\bv_t = \bv_{t-1} - \eta \nabla \tF_{\bx}(\bu_{t-1}, \bv_{t-1})$
\EndFor

\end{algorithmic}
\end{algorithm}

\subsection{Main results}

Our main finding is that with a diligently chosen initial guess $(\bu_0, \bv_0)$, simply running gradient descent to minimize the regularized non-convex objective function
$\tF(\bh,\bx)$
will not only guarantee  linear convergence of the sequence $(\bu_t, \bv_t)$ to the global minimum $(\bh_0, \bx_0)$ in the noiseless case, but also provide robust recovery in the presence of noise. The results are summarized in the following two theorems. 

\begin{theorem}\label{thm:init}
The initialization obtained via Algorithm~\ref{Initial} satisfies
\begin{equation}\label{eq:init-val}
(\bu_0, \bv_0) \in \frac{1}{\sqrt{3}}\Kd\bigcap \frac{1}{\sqrt{3}} \MN_{\mu} \bigcap \MN_{\frac{2}{5}\eps },
\end{equation}
and  
\begin{equation}\label{eq:d}
\frac{9}{10}d_0 \leq d\leq \frac{11}{10}d_0
\end{equation}
holds with probability at least
$1 - L^{-\gamma}$ if the number of measurements satisfies
\begin{equation}
\label{Lbound}
L \geq C_{\gamma} (\mu_h^2 + \sigma^2)\max\{K, N\} \log^2 (L)/\eps^2.
\end{equation}
Here $\eps$ is any predetermined constant in $(0, \frac{1}{15}]$, and $C_{\gamma}$ is a constant only linearly depending on $\gamma$ with $\gamma \geq 1$. 
\end{theorem}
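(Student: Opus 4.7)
\medskip

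\noindent\textbf{Proof proposal for Theorem~\ref{thm:init}.}

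The overall strategy is to show that $\mathcal{A}^*(\by)$ concentrates tightly around $\bh_0\bx_0^*$ in operator norm, then argue that both the top singular data $(d,\hbh_0,\hbx_0)$ of $\mathcal{A}^*(\by)$ and the projected vector $\bu_0$ inherit this closeness. Since $\mathbb{E}\,\mathcal{A}^*\mathcal{A}(\BZ)=\BZ$ and $\mathbb{E}\,\mathcal{A}^*(\be)=0$, we may split
\[
\mathcal{A}^*(\by) - \bh_0\bx_0^* \;=\; \bigl[\mathcal{A}^*\mathcal{A}(\bh_0\bx_0^*) - \bh_0\bx_0^*\bigr] \;-\; \mathcal{A}^*(\be).
\]
The first step is to prove that $\|\mathcal{A}^*(\by)-\bh_0\bx_0^*\|\le \xi\, d_0$ for some $\xi\lesssim \eps$. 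For the signal term I would write it as a sum of $L$ independent rank-one matrices $(\bb_l^*\bh_0)(\bx_0^*\ba_l)\,\bb_l\ba_l^* - \mathbb{E}(\cdot)$ and apply a truncated matrix Bernstein inequality; the per-summand Orlicz-$\psi_1$ norm is controlled by $|\bb_l^*\bh_0|\,\|\bx_0\|\cdot\|\bb_l\|\,\|\ba_l\|$, which is where both the factor $\mu_h^2$ (through $\sqrt{L}\|\BB\bh_0\|_\infty\le\mu_h\sqrt{d_0}$) and the factor $K/L$ enter. For the noise term $\mathcal{A}^*(\be)=\sum_l e_l\,\bb_l\ba_l^*$, since $\be$ is complex Gaussian with variance $\sigma^2 d_0^2/L$, a standard matrix Bernstein/Gaussian-chaining argument yields $\|\mathcal{A}^*(\be)\|\lesssim \sigma d_0 \sqrt{\max(K,N)\log L / L}$. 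Putting both together gives $\xi\le c\eps$ under the stated sample complexity~\eqref{Lbound}.

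The second step converts the operator-norm estimate on $\mathcal{A}^*(\by)-\bh_0\bx_0^*$ into the statements about $d$ and about $d\hbh_0\hbx_0^*$. Weyl's inequality immediately yields $|d-d_0|\le\xi d_0$, so~\eqref{eq:d} follows provided $\xi\le 1/10$. For the singular vectors, I would note that $d\hbh_0\hbx_0^*$ is the best rank-one approximation of $\mathcal{A}^*(\by)$, so $\|d\hbh_0\hbx_0^*-\mathcal{A}^*(\by)\|\le\|\bh_0\bx_0^*-\mathcal{A}^*(\by)\|\le\xi d_0$ and hence $\|d\hbh_0\hbx_0^*-\bh_0\bx_0^*\|\le 2\xi d_0$; since the difference has rank at most $2$, the same bound holds in Frobenius norm up to a factor $\sqrt{2}$. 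This in turn implies (after picking the correct global phase $\alpha$ of unit modulus) that $\sqrt{d}\,\hbh_0$ is close in Euclidean norm to $\alpha\bh_0\sqrt{d/d_0}$, and similarly for $\hbx_0$.

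The third step handles the projection. The key observation is that the phase-rotated and rescaled vector $\bh_0':=\alpha\,\bh_0\sqrt{d/d_0}$ is itself feasible for the projection program, because $\sqrt{L}\|\BB\bh_0'\|_\infty=\mu_h\sqrt{d}\le 2\mu\sqrt{d}$. Since $\bu_0$ is the Euclidean projection of $\sqrt{d}\hbh_0$ onto the (convex) constraint set, $\|\bu_0-\sqrt{d}\hbh_0\|\le\|\bh_0'-\sqrt{d}\hbh_0\|$, which was just bounded. Setting $\bv_0=\sqrt{d}\hbx_0$ and using the triangle inequality,
\[
\|\bu_0\bv_0^*-\bh_0\bx_0^*\|_F \;\le\; \|\bu_0-\sqrt{d}\hbh_0\|\sqrt{d}\;+\;\|d\hbh_0\hbx_0^*-\bh_0\bx_0^*\|_F \;\le\; C\xi d_0,
\]
and choosing $\xi$ a sufficiently small multiple of $\eps$ gives $(\bu_0,\bv_0)\in\mathcal{N}_{2\eps/5}$. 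Finally, the incoherence $(\bu_0\in\tfrac{1}{\sqrt{3}}\mathcal{N}_\mu)$ is built into the constraint since $2\sqrt{d}\mu\le 2\sqrt{11/10}\sqrt{d_0}\mu\le \tfrac{4}{\sqrt{3}}\sqrt{d_0}\mu$, and the norm bounds $(\bu_0,\bv_0)\in\tfrac{1}{\sqrt{3}}\mathcal{N}_{d_0}$ follow from $\|\bv_0\|=\sqrt{d}\le\sqrt{11/10}\sqrt{d_0}<\tfrac{2}{\sqrt{3}}\sqrt{d_0}$ together with $\|\bu_0\|\le\|\sqrt{d}\hbh_0\|+\|\bu_0-\sqrt{d}\hbh_0\|$.

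\medskip

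\noindent\textbf{Main obstacle.} The hard part is the signal concentration bound $\|\mathcal{A}^*\mathcal{A}(\bh_0\bx_0^*)-\bh_0\bx_0^*\|\le\xi d_0$. The summands are products of four Gaussian/deterministic factors and are only sub-exponential, so a naive matrix Bernstein cannot be applied directly; a truncation or Bernstein-for-sub-exponential-matrices argument is needed, and it is there that the $\mu_h^2$ dependence and the $\log^2 L$ factor arise. Everything else (Weyl, best-rank-one perturbation, feasibility of the rotated truth) is essentially a deterministic post-processing of this single probabilistic estimate.
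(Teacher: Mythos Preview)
Your proposal is correct and follows essentially the same route as the paper: a sub-exponential matrix Bernstein inequality gives $\|\A^*(\by)-\bh_0\bx_0^*\|\le\xi d_0$, Weyl and a rank-one perturbation argument transfer this to $d$ and to $d\hbh_0\hbx_0^*$, and feasibility of a rotated copy of $\bh_0$ in $Q$ controls the projection step. The only notable differences are cosmetic. First, for $\|\bu_0\|$ the paper uses the Kolmogorov criterion with $\bw=\bzero\in Q$, which yields $\|\bu_0\|^2\le d$ in one line; your triangle-inequality route $\|\bu_0\|\le\sqrt{d}+\|\bu_0-\sqrt{d}\hbh_0\|$ works too but is a little looser. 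Second, the paper does not invoke a general Davis--Kahan/Wedin statement but instead writes directly $\|(\I-\bh_0\bh_0^*)\hbh_0\|=\|(\I-\bh_0\bh_0^*)(\A^*(\by)-\bh_0\bx_0^*)\hbx_0\hbh_0^*\|\le\xi$, which gives the singular-vector perturbation with the sharp constant; your phrasing ``after picking the correct global phase'' is the same idea, just less explicit. Your identification of the main obstacle is spot on: the paper uses precisely a $\psi_1$-matrix Bernstein (their Theorem~7.1) on $\sum_l\CZ_l$, and that is where both $\mu_h^2$ and the extra $\log L$ factor enter.
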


The proof of Theorem~\ref{thm:init} is given in Section~\ref{s:init}. 
While the initial guess is carefully chosen, it is in general not of sufficient accuracy to already be used as good approximation to the true solution. 
The following theorem establishes that as long as the initial guess lies inside the basin of attraction of the true solution, regularized gradient descent
will indeed converge to this solution (or to a solution nearby in case of noisy data).

\begin{theorem}\label{thm:main}
Consider the model in~\eqref{eq:model} with the ground truth $(\bh_0, \bx_0)$, $\mu^2_h = \frac{L\|\BB\bh_0\|_{\infty}^2}{\|\bh_0\|^2}\leq\mu^2$ 
and the noise $\be\sim \mathcal{N}(\bzero, \frac{\sigma^2d_0^2}{2L}\I_L) + \mi\mathcal{N}(\bzero, \frac{\sigma^2d_0^2}{2L}\I_L)$. 
Assume that the initialization  $(\bu_0, \bv_0)$ belongs to 
$\frac{1}{\sqrt{3}}\Kd\bigcap \frac{1}{\sqrt{3}} \Kmu \bigcap \MN_{\frac{2}{5}\eps }$ and that
\begin{equation*}
L \geq C_{\gamma} (\mu^2 + \sigma^2)\max\{K, N\} \log^2 (L)/\eps^2,
\end{equation*}
Algorithm~\ref{AGD} will create a sequence $(\bu_t, \bv_t)\in \Kint$ which converges geometrically to $(\bh_0, \bx_0)$ in the sense that with probability at least $1 - 4L^{-\gamma} - \frac{1}{\gamma}\exp(-(K + N))$, there holds
\begin{equation}\label{eq:main-res1}
\max\{ \sin \angle(\bu_t, \bh_0), \sin \angle(\bv_t, \bx_0)\} \leq \frac{1}{d_t}\left( \frac{2}{3}(1 - \eta\omega)^{t/2}\eps d_0 + 50\|\A^*(\be)\| \right)
\end{equation}
and 
\begin{equation}\label{eq:main-res2}
|d_t - d_0| \leq \frac{2}{3}(1 - \eta\omega)^{t/2}\eps d_0 + 50 \|\A^*(\be)\|,
\end{equation}
where $d_t := \|\bu_t\|\|\bv_t\|$, $\omega  > 0$, $\eta$ is the fixed stepsize and  $\angle(\bu_t, \bh_0)$ is the angle between $\bu_t$ and $\bh_0$.
Here 
\begin{equation}\label{eq:Ae-1}
\|\A^*(\be)\| \leq C_0\sigma d_0 \max\Big\{ \sqrt{\frac{(\gamma + 1)\max\{K,N\} \log L}{L}}, \frac{(\gamma + 1)\sqrt{KN}\log^2 L }{L}  \Big\}.
\end{equation}
holds with probability $1 - L^{-\gamma}.$
\end{theorem}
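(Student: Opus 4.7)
The plan is to establish Theorem~\ref{thm:main} by combining four ingredients: (i) an induction showing that the iterates $(\bu_t,\bv_t)$ stay inside the basin of attraction $\Kint$; (ii) a local restricted isometry property for $\A$ valid on rank-two error matrices whose ``column factors'' satisfy the incoherence bound defining $\Kmu$; (iii) a local regularity condition lower-bounding $\|\nabla\tF_\bh\|^2+\|\nabla\tF_\bx\|^2$ by a suboptimality gap $\tF(\bh,\bx)-c$ with $c=O(\|\A^*(\be)\|^2)$; and (iv) a local smoothness bound giving the standard descent inequality. Together with the initialization guarantee Theorem~\ref{thm:init}, these will yield $\tF(\bu_t,\bv_t)-c\le(1-\eta\omega)^{t}(\tF(\bu_0,\bv_0)-c)$, and the Frobenius bound on $\bu_t\bv_t^*-\bh_0\bx_0^*$ extracted from this geometric decay will translate into~\eqref{eq:main-res1}--\eqref{eq:main-res2}.

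First I would prove inductively that if $(\bu_{t-1},\bv_{t-1})\in\Kint$ then $(\bu_t,\bv_t)\in\Kint$ too. The role of the penalty $G(\bh,\bx)$ in~\eqref{def:G} is precisely to provide this: whenever an iterate threatens to leave $\frac{1}{\sqrt{3}}\Kd\cap\frac{1}{\sqrt{3}}\Kmu$, the derivatives $\nabla G_\bh,\nabla G_\bx$ in~\eqref{eq:WGh}--\eqref{eq:WGx} activate (because $G_0'(z)=2\sqrt{G_0(z)}$ vanishes below $z=1$ but grows above) and push $(\bu_t,\bv_t)$ back toward $\Kd\cap\Kmu$. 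Membership in $\Keps$ follows a posteriori from the monotone decrease of $\tF$ combined with the local lower RIP, since $F_0(\bh,\bx)\gtrsim\|\bh\bx^*-\bh_0\bx_0^*\|_F^2$. With $\rho\ge d^2+2\|\be\|^2$, this closure under a gradient step is the key structural property of the regularization.

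Next I would prove the local RIP
\begin{equation*}
\tfrac{2}{3}\|\bh\bx^*-\bh_0\bx_0^*\|_F^2\;\le\;\|\A(\bh\bx^*-\bh_0\bx_0^*)\|^2\;\le\;\tfrac{4}{3}\|\bh\bx^*-\bh_0\bx_0^*\|_F^2
\end{equation*}
uniformly over $(\bh,\bx)\in\Kint$, using an $\epsilon$-net over the rank-two manifold together with the Bernstein-type concentration adapted to the partial Fourier structure of $\BB$; the incoherence constraint $\sqrt{L}\|\BB\bh\|_\infty\le 4\sqrt{d_0}\mu$ bounds the Orlicz norms of the summands $|\bb_l^*(\cdot)\ba_l|^2$ so that $L\gtrsim\mu^2\max\{K,N\}\log^2 L$ suffices. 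Using this, together with the bound~\eqref{eq:Ae-1} on $\|\A^*(\be)\|$ (which itself follows from a matrix Bernstein inequality on $\sum z_l\bb_l\ba_l^*$ with subexponential summands, giving the stated tail $1-L^{-\gamma}$), I would then derive the local regularity condition. The argument, inspired by but different from~\cite{CLS14,SL15}, starts from expanding $\langle\nabla F_\bh,\bh-\alpha\bh_0\rangle+\langle\nabla F_\bx,\bx-\alpha^{-1}\bx_0\rangle$ for a cleverly chosen scalar $\alpha=\alpha(\bh,\bx)$ that quotients out the scaling ambiguity; the main term is handled by the lower RIP, the noise cross-term by Cauchy--Schwarz against $\|\A^*(\be)\|$, and the penalty terms $\nabla G_\bh,\nabla G_\bx$ contribute with the right sign because each $G_0'$ is nonnegative and points outward. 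After Cauchy--Schwarz on the left side this gives $\|\nabla\tF_\bh\|^2+\|\nabla\tF_\bx\|^2\ge\omega\,(\tF-c\|\A^*(\be)\|^2)$ for an explicit $\omega>0$.

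The remaining step is the local smoothness bound $\|\nabla\tF_\bh\|^2+\|\nabla\tF_\bx\|^2\le M\,\tF$ on $\Kint$, which follows from $\|\A\|\lesssim\sqrt{(K+N)\log L}$ and a direct bound on $\nabla G$ using the cutoff structure of $G_0'$. Plugging this together with the regularity condition into the one-step descent identity $\tF(\bu_t,\bv_t)\le\tF(\bu_{t-1},\bv_{t-1})-\eta\|\nabla\tF\|^2+O(\eta^2)\|\nabla\tF\|^2M$ with a sufficiently small constant stepsize $\eta$ gives $\tF(\bu_t,\bv_t)-c\le(1-\eta\omega)(\tF(\bu_{t-1},\bv_{t-1})-c)$, hence geometric convergence. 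The lower RIP then converts this into a bound on $\|\bu_t\bv_t^*-\bh_0\bx_0^*\|_F$, which in turn controls the angles and $|d_t-d_0|$ via elementary rank-one matrix identities (write the error in the $(\bh_0,\bx_0)$-basis and compute). The main obstacle I anticipate is the regularity condition: the scaling degeneracy means $\nabla F$ vanishes along a two-dimensional direction, so the descent direction must be chosen to factor it out, and simultaneously one must track how the penalty $G$ and the noise perturbation $\A^*(\be)$ interact without blowing up the constant $c$ beyond $O(\|\A^*(\be)\|^2)$; coordinating these three effects, all while remaining in $\Kint$, is the delicate part of the argument.
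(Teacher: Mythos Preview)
Your overall architecture matches the paper's: the four conditions (local RIP, robustness, regularity, smoothness) plus an induction that the iterates remain in the basin, then geometric decay of $\tF-c$ converted via RIP to a Frobenius bound, and finally the angle/$|d_t-d_0|$ consequences. Two places in your sketch are imprecise, though, and both matter.

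First, the smoothness condition you need is \emph{Lipschitz continuity of the gradient}, $\|\nabla\tF(\bz+\Delta\bz)-\nabla\tF(\bz)\|\le C_L\|\Delta\bz\|$, not the gradient-norm bound $\|\nabla\tF\|^2\le M\tF$ you wrote. The descent inequality $\tF(\bz_{t+1})\le\tF(\bz_t)-\eta\|\nabla\tF(\bz_t)\|^2+C_L\eta^2\|\nabla\tF(\bz_t)\|^2$ comes from the former, not the latter; your ``$O(\eta^2)\|\nabla\tF\|^2M$'' term is correct in form but the $M$ there has to be the Lipschitz constant $C_L$, not a function-value ratio. (Also note $c=\|\be\|^2+a\|\A^*(\be)\|^2$ in the regularity condition, not just $O(\|\A^*(\be)\|^2)$; the $\|\be\|^2$ shift is harmless but must be tracked.)

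Second, and more subtle, the confinement argument cannot be just ``the penalty pushes back'': gradient descent is a discrete step, so a single update could jump across $\partial\Kd$ or $\partial\Kmu$ before any gradient force acts. The paper's fix is to work with the sublevel set $\KF=\{\tF\le\tfrac{1}{3}\eps^2d_0^2+\|\be\|^2\}$ and prove (a) $\KF\subset\Kd\cap\Kmu$ directly from the size of $G$ outside $\Kd\cap\Kmu$, and (b) the entire line segment from $\bz_t$ to $\bz_{t+1}$ stays in $\KF$, via a continuity argument (if it left, there would be a first crossing where $\tF$ returns to $\tF(\bz_t)$, contradicting the descent lemma applied up to that point). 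This line-segment argument is what actually guarantees $\bz_{t+1}\in\Keps\cap\KF$ and makes the induction close; your ``a posteriori from monotone decrease'' is the right spirit but you should make the mechanism explicit.
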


\medskip
\noindent
{\bf Remarks:}
\begin{enumerate}
\item
While the setup in~\eqref{eq:model} assumes that $\BB$ is a matrix consisting of the first $K$ columns of the DFT matrix, this is actually not necessary
for Theorem~\ref{thm:main}. As the proof will show, the only conditions on $\BB$ are that $\BB^{\ast} \BB = \BI_K$ and that the norm of the $l$-th row
of $\BB$ satisfies $\| \bb_l \|^2 \leq C \frac{K}{L}$ for some numerical constant $C$.
\item
The minimum number of measurements required for our method to succeed is roughly comparable to that of the convex approach proposed in~\cite{RR12}
(up to log-factors). Thus there is no price to be paid for trading a slow, convex-optimization based approach with a
fast non-convex based approach. Indeed, numerical experiments indicate that the non-convex approach even requires a smaller number of measurements compared
to the convex approach, see Section~\ref{s:numerics}.
\item
The  convergence rate of our algorithm is completely determined by $\eta\omega$. Here, the {\em regularity constant} $\omega = \mathcal{O}(d_0)$ is specified in~\eqref{eq:reg} and $\eta \leq \frac{1}{C_L}$ where $C_L = \mathcal{O}(d_0 ( N\log L + \frac{\rho L}{d_0^2 \mu^2}))$. The attentive reader may have noted that $C_L$ depends essentially linearly on $\frac{\rho L}{\mu^2}$, which actually reflects a tradeoff between sampling complexity (or statistical estimation quality) and computation time. Note that if $L$ gets larger, the number of constraints is also increasing and hence leads to a larger $C_L$. However, this issue can be solved by choosing parameters smartly. Theorem~\ref{thm:main} tells us that $\mu^2$ should be roughly between $\mu_h^2$ and $\frac{L}{(K + N)\log^2L}$. Therefore, by choosing $\mu^2 = \mathcal{O}(\frac{L}{(K + N)\log^2L})$ and $\rho \approx d^2 + 2\|\be\|^2$, $C_L$ is optimized and
\begin{equation*}
\eta\omega = \mathcal{O}(((1 + \sigma^2)(K + N) \log^2L)^{-1}),
\end{equation*}
which is shown in details in Section~\ref{s:smooth}.
\item
Relations~\eqref{eq:main-res1} and~\eqref{eq:main-res2} are basically equivalent to the following:
\begin{equation}\label{eq:main-result}
\| \bu_t\bv_t^* - \bh_0\bx_0^*\|_F \leq \frac{2}{3}(1 - \eta\omega)^{t/2}\eps d_0 + 50 \|\A^*(\be)\|,
\end{equation}
which says that $(\bu_t, \bv_t)$ converges to an element of the equivalence class associated with the true solution $(\bh_0,\bx_0)$  (up to a deviation 
governed by the amount of additive noise).
\item
The matrix $\A^*(\be) = \sum_{l=1}^L e_l \bb_l\ba_l^*$, as a sum of $L$ rank-1 random matrices, has nice concentration of measure properties under the assumption of Theorem~\ref{thm:main}. Asymptotically, $\|\A^*(\be)\|$ converges to $0$ with rate $\mathcal{O}(L^{-1/2})$, which will be justified in Lemma~\ref{lem:Ay-hx} of Section~\ref{s:init} (see also~\cite{CLM15}).
Note that 
\begin{equation}\label{eq:F-decom}
F(\bh, \bx) = \|\be\|^2 + \|\A(\bh\bx^* - \bh_0\bx_0^*)\|_F^2 - 2\Real (\lag \A^*(\be) , \bh\bx^* - \bh_0\bx_0^*\rag).
\end{equation}
If one lets $L\rightarrow \infty$, then $\|\be\|^2\sim \frac{\sigma^2 d_0^2}{2L} \chi^2_{2L}$ will converge almost surely to $\sigma^2d_0^2$ under the Law of Large Numbers and the cross term $\Real (\lag \bh\bx^* - \bh_0\bx_0^*, \A^*(\be) \rag)$ will converge to $0$. In other words, asymptotically, 
\begin{equation*}
\lim_{L\rightarrow \infty} F(\bh, \bx) = F_0(\bh, \bx) + \sigma^2 d_0^2
\end{equation*}
for all fixed $(\bh, \bx)$. This implies that if the number of measurements is large, then $F(\bh, \bx)$ behaves ``almost  like" $F_0(\bh, \bx) = \|\A(\bh\bx^* - \bh_0\bx_0^*)\|^2$, the noiseless version of $F(\bh, \bx)$. This provides the key insight into analyzing the robustness of our algorithm, which is reflected in the so-called ``\textit{Robustness Condition}" in~\eqref{eq:AEnorm}. Moreover, the asymptotic property of $\A^*(\be)$ is also seen in our main result~\eqref{eq:main-result}. Suppose $L$ is becoming larger and larger, the effect of noise diminishes and heuristically, we might just rewrite our result as $\|\bu_t\bv_t^* - \bh_0\bx_0^*\|_F \leq \frac{2}{3}(1 - \eta\omega)^{t/2}\eps d_0 + o_p(1)$, which is consistent with the result without noise. 
\end{enumerate}

\section{Numerical simulations}
\label{s:numerics}

We present empirical evaluation of our proposed gradient descent algorithm (Algorithm~\ref{AGD}) using simulated data as well as examples from blind deconvolution problems appearing in  communications and in image processing. 

\subsection{Number of measurements  vs size of signals}\label{sec:simulation,phase}
We first investigate how many  measurements  are necessary in order for an algorithm to reliably recover two signals from their convolution. We compare Algorithm~\ref{AGD}, a gradient descent algorithm for  the sum of the loss function $F(\bh,\bx)$ and the regularization term $G(\bh,\bx)$, with the gradient descent algorithm only applied to $F(\bh,\bx)$ and  the nuclear norm minimization proposed in \cite{RR12}. These three tested algorithms are abbreviated as {\em regGrad, Grad} and {\em NNM} respectively.  To make fair comparisons, both regGrad and Grad are initialized with the normalized leading singular vectors of $\A^*(\by)$, which are computed by running the power method for $50$ iterations. Though we do not further compute the projection of $\hat{\bh}_0$  for regGrad as stated in the third step of Algorithm~\ref{Initial}, we emphasize that the projection can be computed efficiently as it is a  linear programming on $K$-dimensional vectors. A careful reader may notice that in addition to the computational cost for the loss function $F(\bh,\bx)$, regGrad also requires to evaluate $G(\bh,\bx)$ and its gradient in each iteration. 
When $B$ consists of the first $K$ columns of  a unitary DFT matrix,  we can evaluate $\left\{\bb_l^*\bh\right\}_{l=1}^L$ and the gradient of $\sum_{l=1}^L G_0\left(\frac{L|\bb_l^*\bh|^2}{8d\mu^2}\right)$ using FFT.  Thus the additional per iteration computational cost for the regularization term $G(\bh,\bx)$ is only $O(L\log L)$ flops.
The stepsizes in both regGrad and Grad are selected adaptively in each iteration via backtracking.  As suggested by the theory, the choices for $\rho$ and $\mu$ in regGrad are  $\rho=d^2/100$ and $\mu=6\sqrt{L/(K+N)}/\log L$.

We conduct tests on random Gaussian signals $\bh\in\CC^{K\times 1}$ and $\bx\in\CC^{N\times 1}$ with $K=N=50$. The matrix $\BB\in\CC^{L\times K}$ is the first $K$ columns of a unitary $L\times L$ DFT matrix, while $\BA\in\CC^{L\times N}$ is either a Gaussian random matrix or a partial Hadamard matrix with randomly selected $N$ columns and then multiplied by a random sign matrix from the left. When $\BA$ is a Gaussian random matrix, $L$ takes $16$ equal spaced values from $K+N$ to $4(K+N)$. When $\BA$ is a partial Hadamard matrix, we only test $L=2^s$ with $6\leq s\leq 10$ being integers.  For each given triple $(K,N,L)$, fifty random tests are conducted. We consider an algorithm to have successfully recovered $(\bh_0,\bx_0)$ if it returns a matrix $\hat{\BX}$ which satisfies
\begin{align*}
\frac{\|\hat{\BX}-\bh_0\bx_0^*\|_F}{\| \bh_0\bx_0^*\|_F}<10^{-2}.
\end{align*}
We present the probability of successful recovery plotted against the number of measurements in Figure~\ref{fig_phase}.
\begin{figure*}[!t]
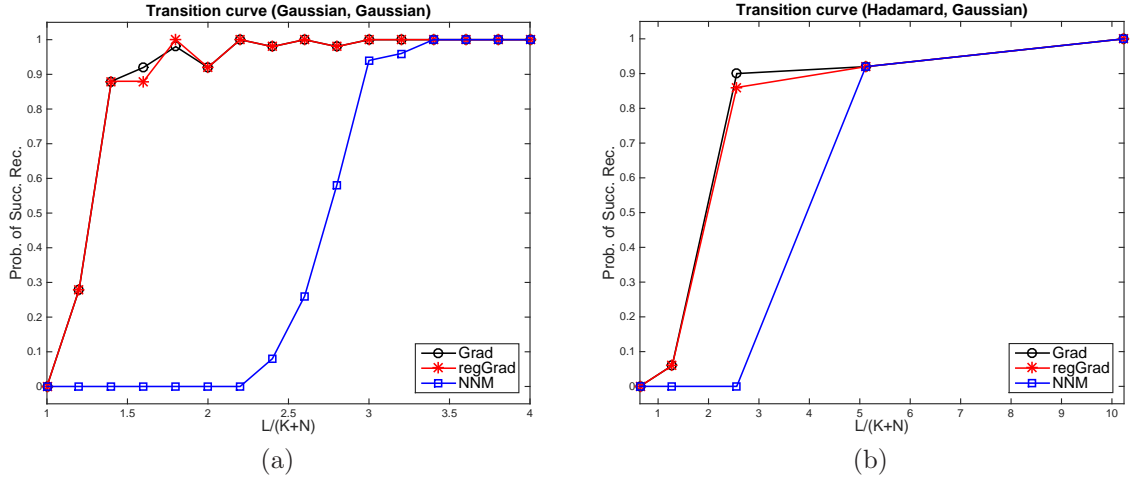

\centering
\subfloat[]{\includegraphics[width=70mm]{transition_plot_Gaussian_Gaussian.eps}
\label{fig_phase_Gaussian}}
\hfil
\subfloat[]{\includegraphics[width=70mm]{transition_plot_Hadamard_Gaussian.eps}
\label{fig_phase_Hadamard}}
\caption{Empirical phase transition curves when (a) $\BA$ is random Gaussian (b) $\BA$ is partial Hadamard. Horizontal axis $L/(K+N)$ and 
vertical axis probability of successful recovery out of $50$ random tests.}
\label{fig_phase}
\end{figure*}
It can be observed that regGrad and Grad have similar performance, and both of them require a significantly smaller number of measurements than NNM to achieve successful recovery of high probability. 
\subsection{Number of measurements vs incoherence}\label{sec:simulation,LvsIncoherence}
Theorem~\ref{thm:main} indicates that the number of  measurements $L$ required for Algorithm~\ref{AGD} to achieve successful recovery scales linearly with $\mu_h^2$. We conduct numerical experiments to investigate the dependence of $L$ on $\mu_h^2$ empirically. 
The tests are conducted with $\mu_h^2$ taking on $10$ values $\mu_h^2\in\{ 3,6,\cdots,30\}$. For each fixed $\mu_h^2$, we choose $\bh_0$ to be a vector whose first $\mu_h^2$ entries are $1$ and the others are $0$ so that its incoherence is equal to $\mu_h^2$ when $\BB$  is low frequency Fourier matrix. Then the tests are repeated for random Gaussian  matrices $\BA$ and random Gaussian vectors $\bx_0$. The empirical probability of successful recovery on the $(\mu_h^2,L)$ plane is presented in Figure~\ref{fig_L_mu}, which suggests that $L$ does scale linearly with $\mu_h^2$.
\begin{figure*}[!t]
\centering
\includegraphics[width=100mm]{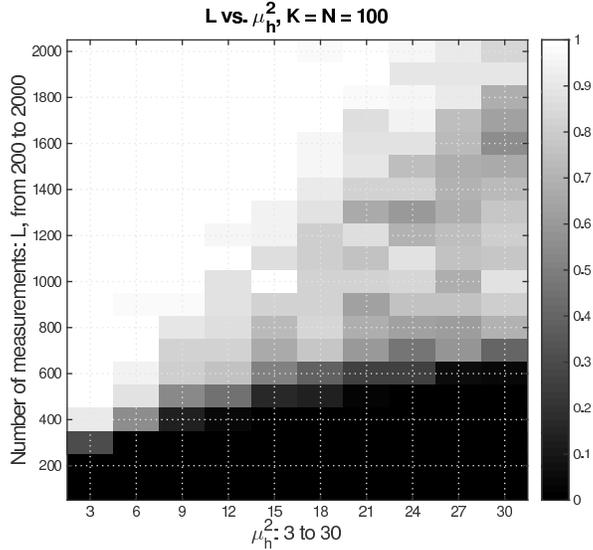}
\caption{Empirical probability of successful recovery.   Horizontal axis $\mu_h^2$ and vertical axis $L$. White: $100\%$ success and black: $0\%$ success.}
\label{fig_L_mu}
\end{figure*}
\subsection{A comparison when $\mu_h^2$ is large}
While regGrad and Grad have similar performances in the simulation when $\bh_0$ is a random Gaussian signal (see Figure~\ref{fig_phase}), we investigate their performances on a fixed $\bh_0$ with a large incoherence. 
 The tests are conducted for $K=N=200$, $\bx_0\in\CC^{N\times 1}$ being a random Gaussian signal,  $\BA\in\CC^{L\times N}$ being a random Gaussian matrix, and $\BB\in\CC^{L\times K}$ being a low frequency Fourier matrix. The signal $\bh_0$ with $\mu_h^2=100$ is formed in the same way as in Section~\ref{sec:simulation,LvsIncoherence}; that is, the first $100$ entries of $\bh_0$ are one and the other entries are zero. The number of measurements $L$ varies from $3(K+N)$ to $8(K+N)$. For each $L$, $100$ random tests are conducted. Figure~\ref{fig_phase_fixed_h}  shows the probability of successful recovery for regGrad and Grad. It can be observed that the successful recovery probability  of regGrad is at least $10\%$ larger than that of Grad when $L\geq 6(K+N)$.
 
 \begin{figure*}[!t]
\centering
\includegraphics[width=80mm]{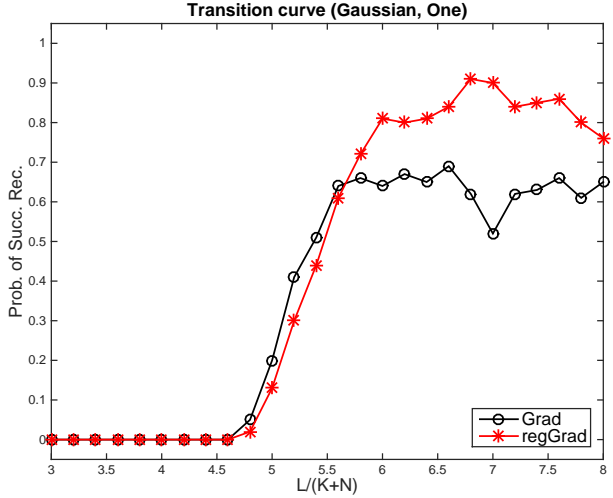}
\caption{Empirical phase transition curves when $\mu_h^2$ is large. Horizontal axis $L/(K+N)$ and 
vertical axis probability of successful recovery out of $50$ random tests.}
\label{fig_phase_fixed_h}
\end{figure*}

\subsection{Robustness to additive noise}

We explore the robustness of Algorithm~\ref{AGD} when the measurements are contaminated by additive noise. The tests are conducted with $K=N=100$, $L\in\{500, 1000\}$ when $\BA$ is a random Gaussian matrix, and $L\in\{512,1024\}$ when $\BA$
is a partial Hadamard matrix. Tests with additive noise have the measurement vector $\by$ corrupted by the vector 
\begin{align*}
\be=\sigma\cdot\|\by\|\cdot\frac{\bm{w}}{\|\bm{w}\|},
\end{align*}
where $\bm{w}\in\CC^{L\times 1}$ is standard Gaussian random vector, and $\sigma$ takes nine different values from $10^{-4}$ to $1$. For each $\sigma$,  fifty random tests are conducted. The average reconstruction error  in dB plotted against the signal to noise ratio (SNR) is presented in Fig.~\ref{fig_stability}. First  the
plots clearly show the desirable linear scaling between the noise levels and the relative reconstruction errors. Moreover, as desired, the relative reconstruction error decreases linearly on a $\log$-$\log$ scale as the number of measurements $L$ increases.
\begin{figure*}[!t]
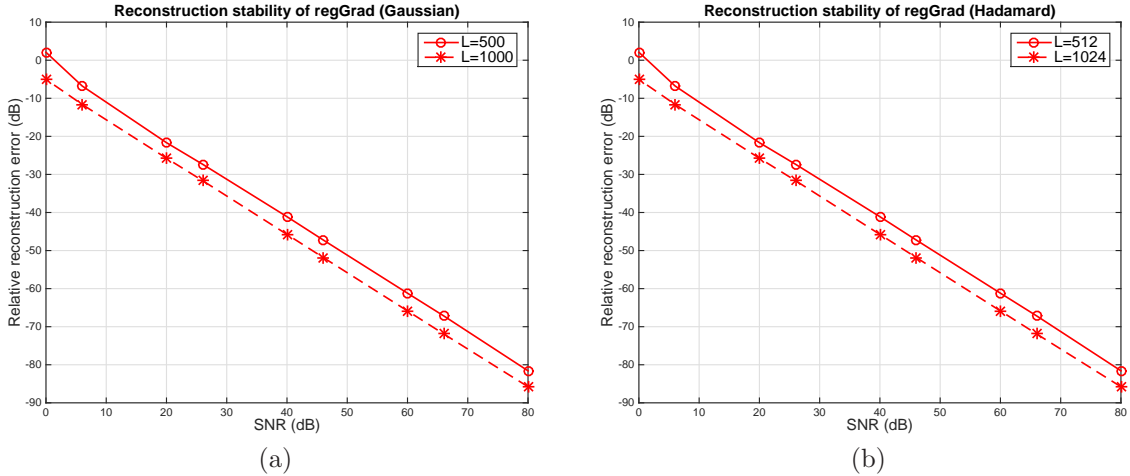

\centering
\subfloat[]{\includegraphics[width=70mm]{stability_of_regGrad_Gaussian.eps}
\label{fig_stability_Gaussian}}
\hfil
\subfloat[]{\includegraphics[width=70mm]{stability_of_regGrad_Hadamard.eps}
\label{fig_stability_Hadamard}}
\caption{Performance of Algorithm~\ref{AGD} under different SNR when (a) $\BA$ is random Gaussian (b) $\BA$ is partial Hadamard.}
\label{fig_stability}
\end{figure*}
\subsection{An example from communications}

In order to demonstrate the effectiveness of Algorithm~\ref{AGD} for real world applications, we first test the algorithm on a blind deconvolution problem arising in communications. Indeed, blind deconvolution problems and their efficient numerical solution are expected to play an increasingly important role in connection  with the emerging Internet-of-Things~\cite{WBSJ14}. 
Assume we want to transmit a signal from one place to another over a so-called time-invariant multi-path communication channel, but the receiver has no information about the channel, except its {\em delay spread} (i.e., the support of the impulse response). In many communication settings it is reasonable to assume that the receiver has information about the signal encoding matrix---in other words, we know the subspace $\BA$ to which $\bx_0$ belongs to.
Translating this communications jargon into mathematical terminology, this simply means that we are dealing with a blind deconvolution problem
of the form~\eqref{eq:model}.

These encoding matrices (or so-called spreading matrices) are often chosen to have a convenient structure that lends itself to fast computations
and minimal memory requirements. One such choice is to let $\BA:=\BD \BH$, where the $L \times K$ matrix $\BH$ consists of $K$ (randomly or not randomly) chosen columns of an $L \times L$ Hadamard matrix, premultiplied with an $L \times L$ diagonal random sign matrix $\BD$.  Instead of a partial Hadamard matrix we could also use a partial Fourier matrix; the resulting setup 
would then closely resemble an OFDM transmission format, which is part of every modern wireless communication scheme.
 For the signal $\bx_0$ we choose a so-called QPSK scheme, i.e., each entry of  $\bx_0\in\CC^{123\times 1}$ takes 
a value from $\{1,-1,\mi,-\mi\}$ with equal probability.  The actual transmitted signal is then $\bz  = \BA \bx_0$. 
For the channel $\bh_0$ (the blurring function) we choose a real-world channel, courtesy of Intel Corporation. 
Here, $\bh_0$ represents the impulse response of a multipath time-invariant channel, it consists of 123 time samples, hence 
$\bh_0\in\CC^{123\times 1}$. 
Otherwise, the setup follows closely that in Sec.~\ref{sec:simulation,phase}. For comparison we also include the case when $\BA$ is a random Gaussian matrix.

The plots of successful recovery probability are presented in Figure~\ref{fig_phase_real}, which shows that  Algorithm~\ref{AGD} can successfully recover the real channel $\bh_0$ with high probability if $L\gtrsim 2.5(L+N)$ when $\BA$ is a random Gaussian matrix and if $L\gtrsim 4.5(L+N)$ when $\BA$ is a partial Hadamard matrix. Thus our theory seems a bit pessimistic. It is gratifying to see that very little additional measurements are required compared to the number of unknowns in order
to recover the transmitted signal.

\begin{figure*}[!t]
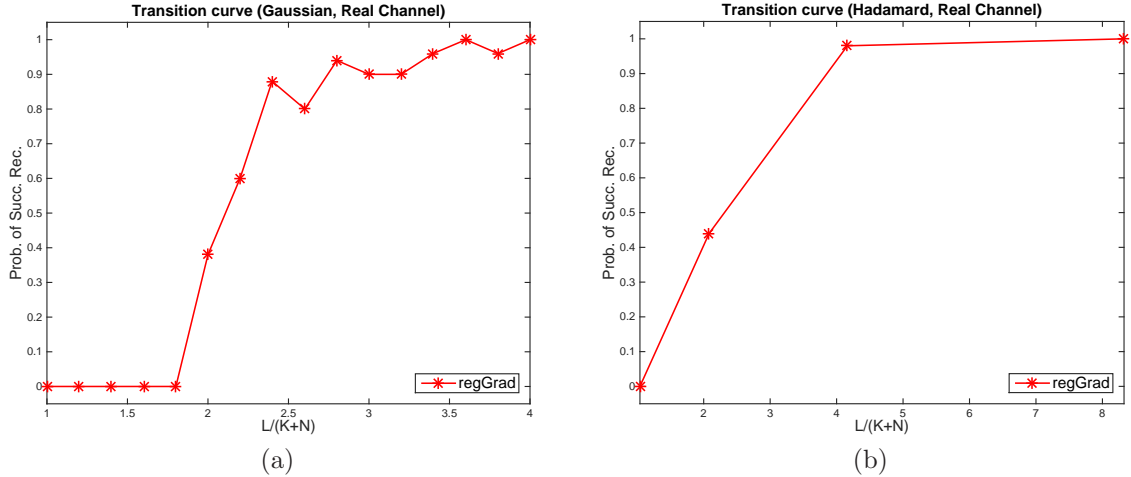

\centering
\subfloat[]{\includegraphics[width=70mm]{transition_plot_Gaussian_Real.eps}
\label{fig_phase_Gaussian_Real}}
\hfil
\subfloat[]{\includegraphics[width=70mm]{transition_plot_Hadamard_Real.eps}
\label{fig_phase_Hadamard_Real}}
\caption{Empirical phase transition curves when (a) $\BA$ is random Gaussian (b) $\BA$ is partial Hadamard. Horizontal axis $L/(K+N)$ and 
vertical axis probability of successful recovery out of $50$ random tests.}
\label{fig_phase_real}
\end{figure*}


\begin{figure*}[!t]
\centering
\subfloat[]{\includegraphics[width=2.3in]{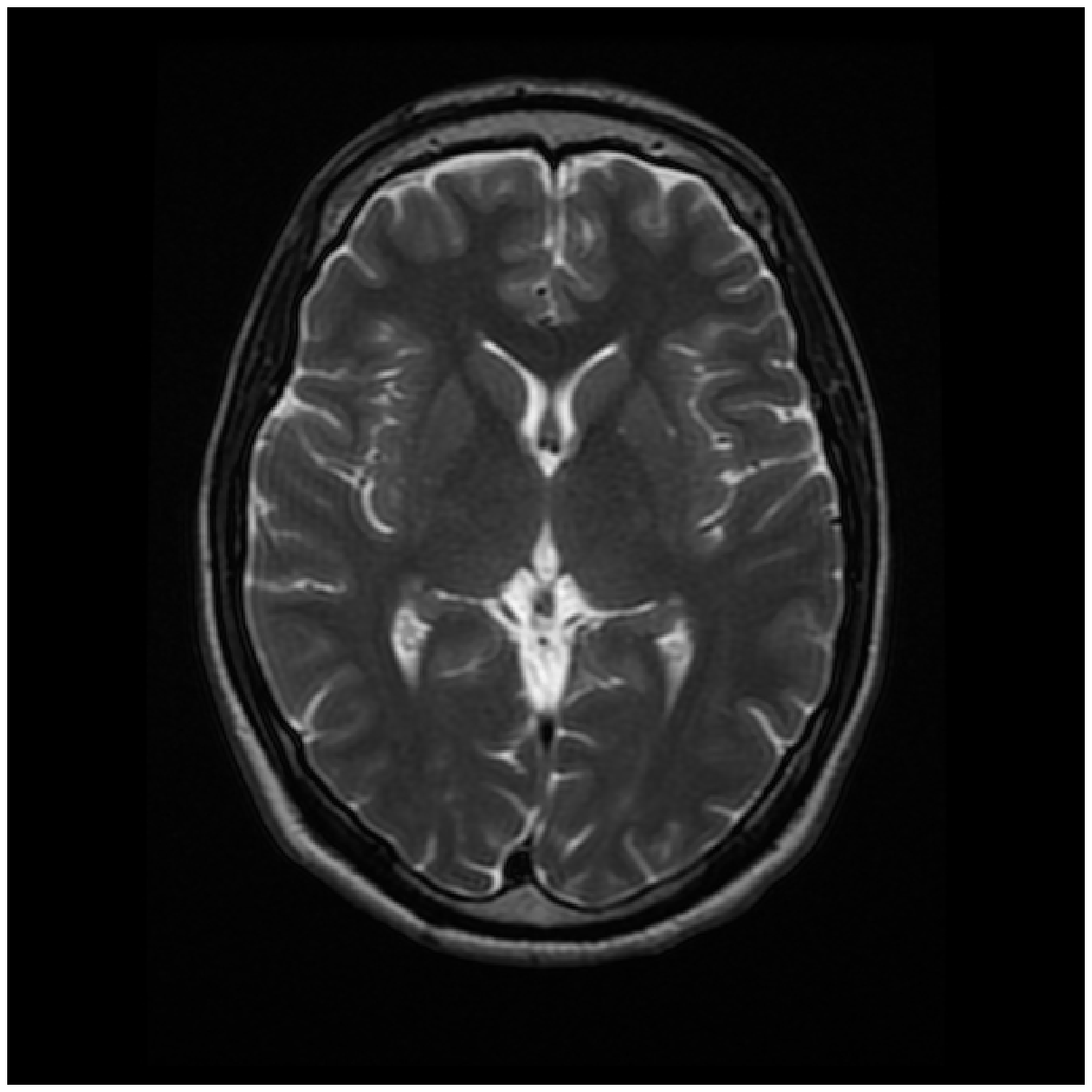}
\label{mri_original}}\hspace{-0.3cm}
\subfloat[]{{\includegraphics[width=1.8in,height=1.4in,trim=-0.2in -0.7in 0.5in 0.8in]{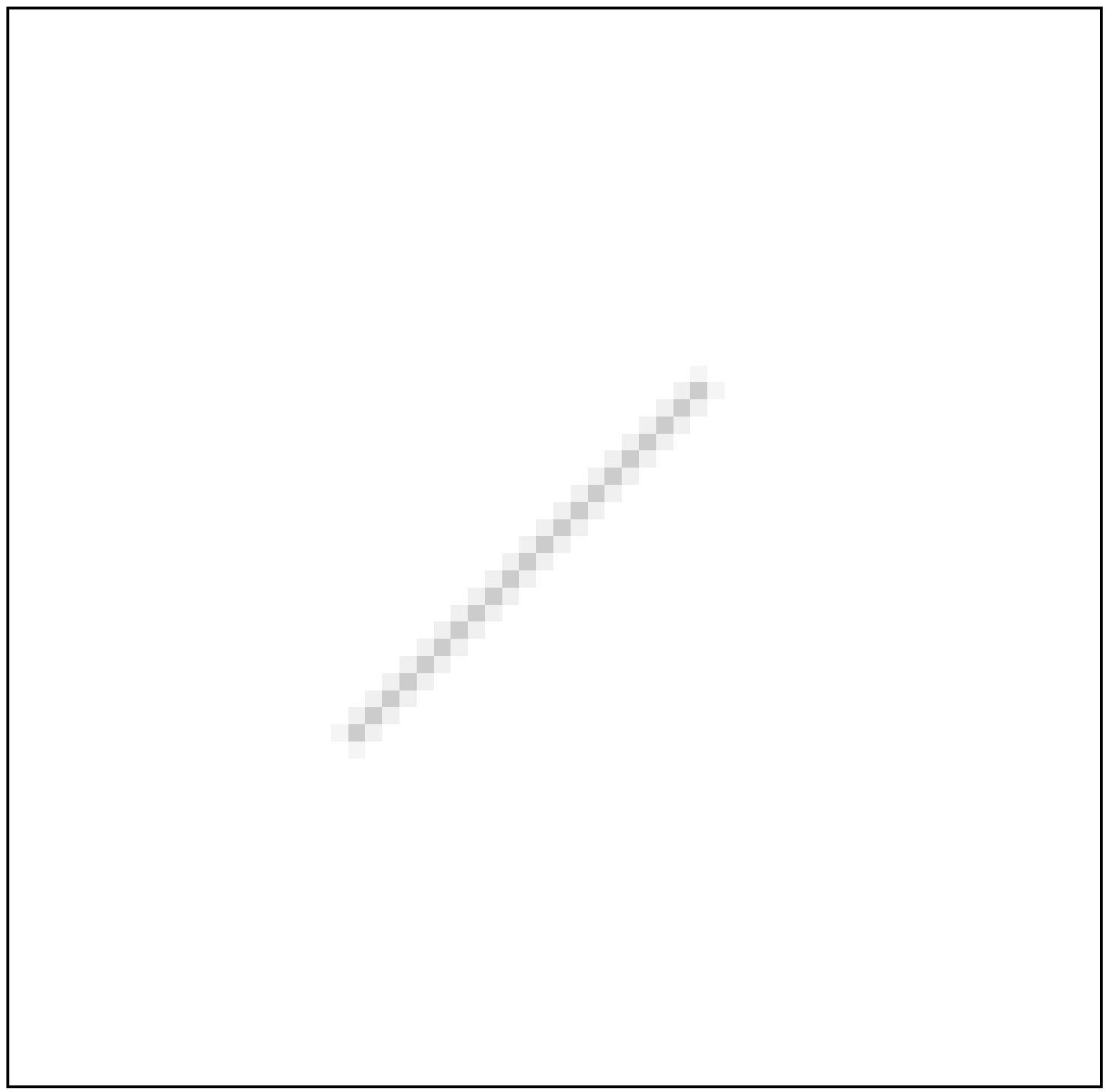}}
\label{kernel}}
\subfloat[]{\includegraphics[width=2.3in]{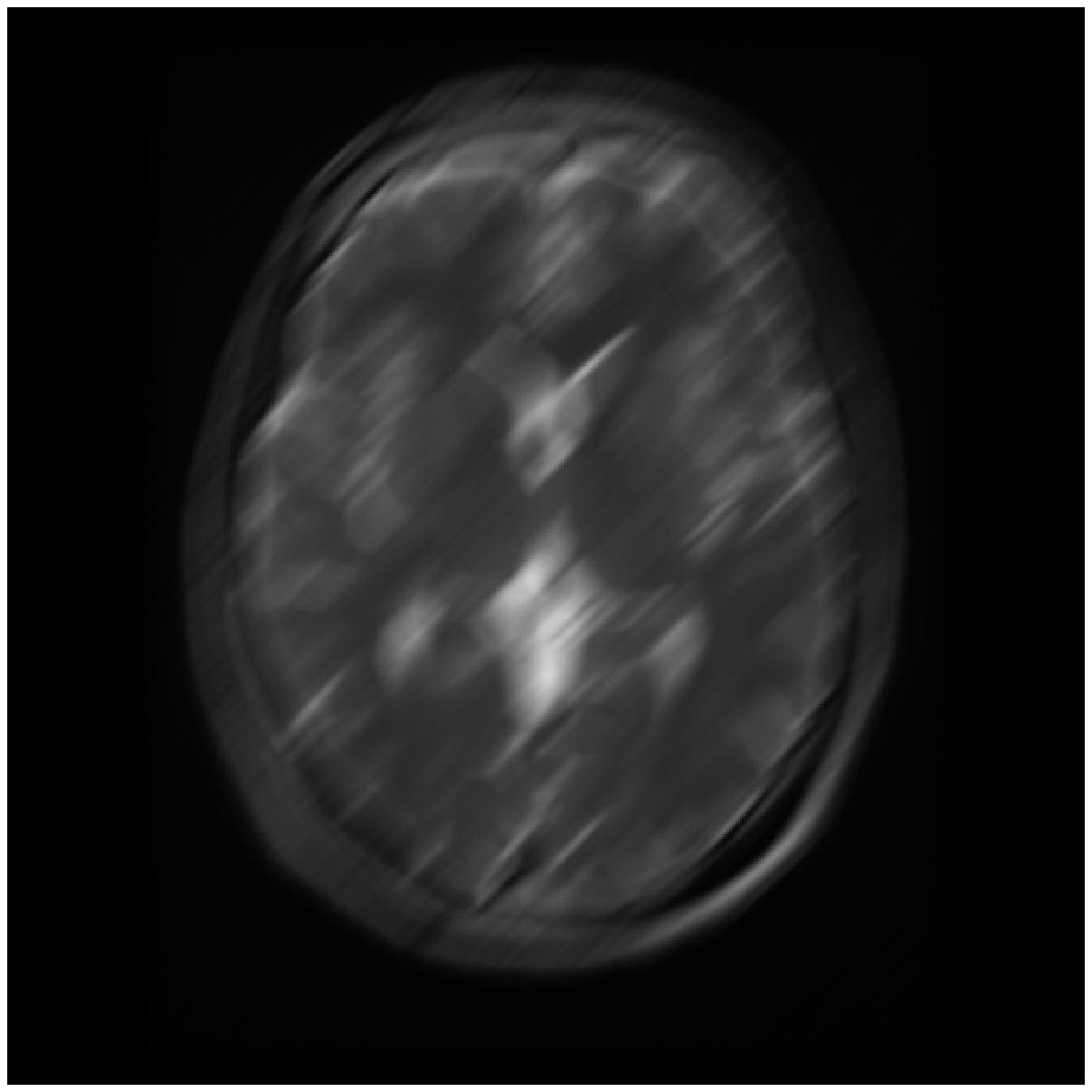}
\label{mri_blurred}}

\subfloat[]{\includegraphics[width=2.3in]{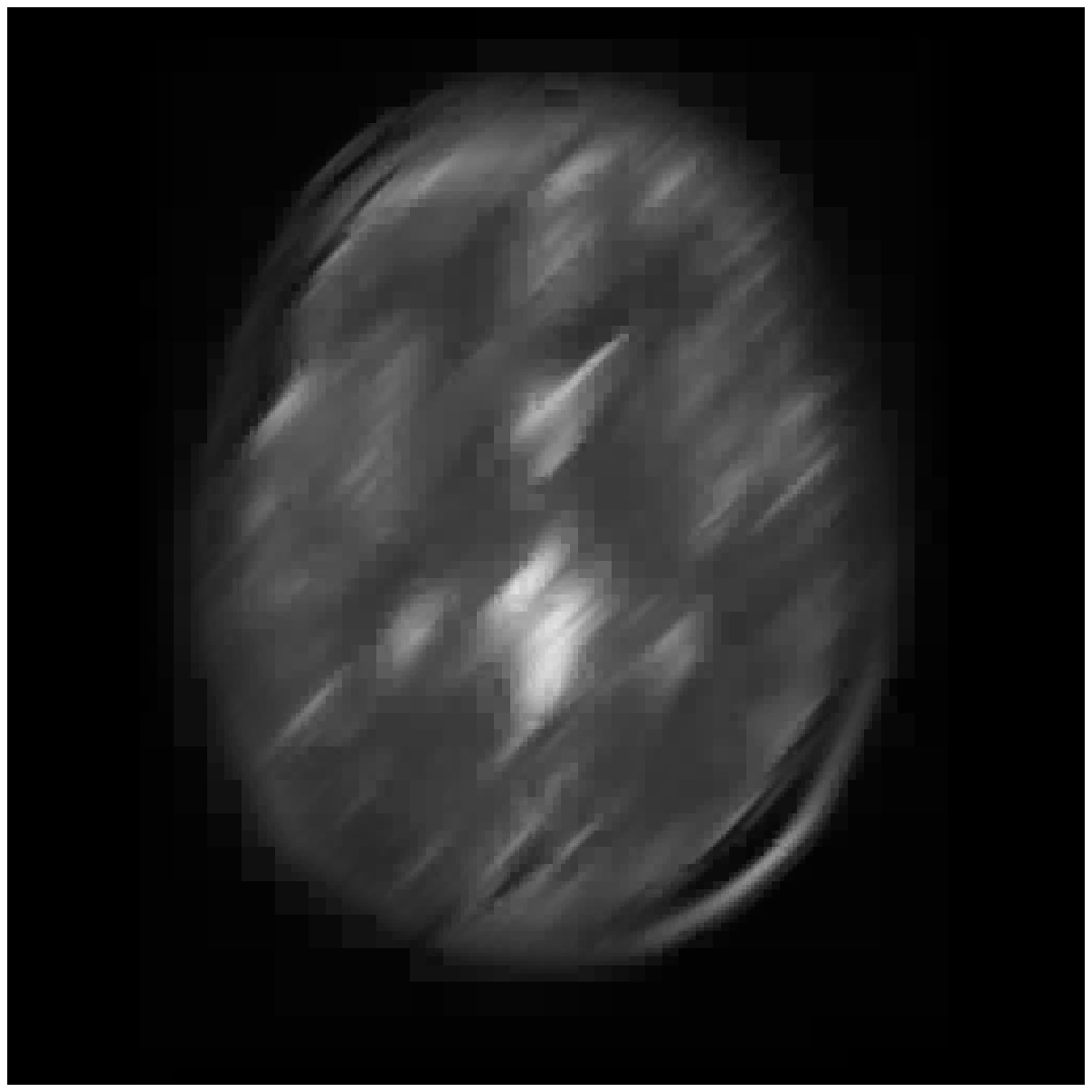}\label{mri_initial_guess}\hspace{-0.6cm}}
\subfloat[]{\includegraphics[width=2.3in]{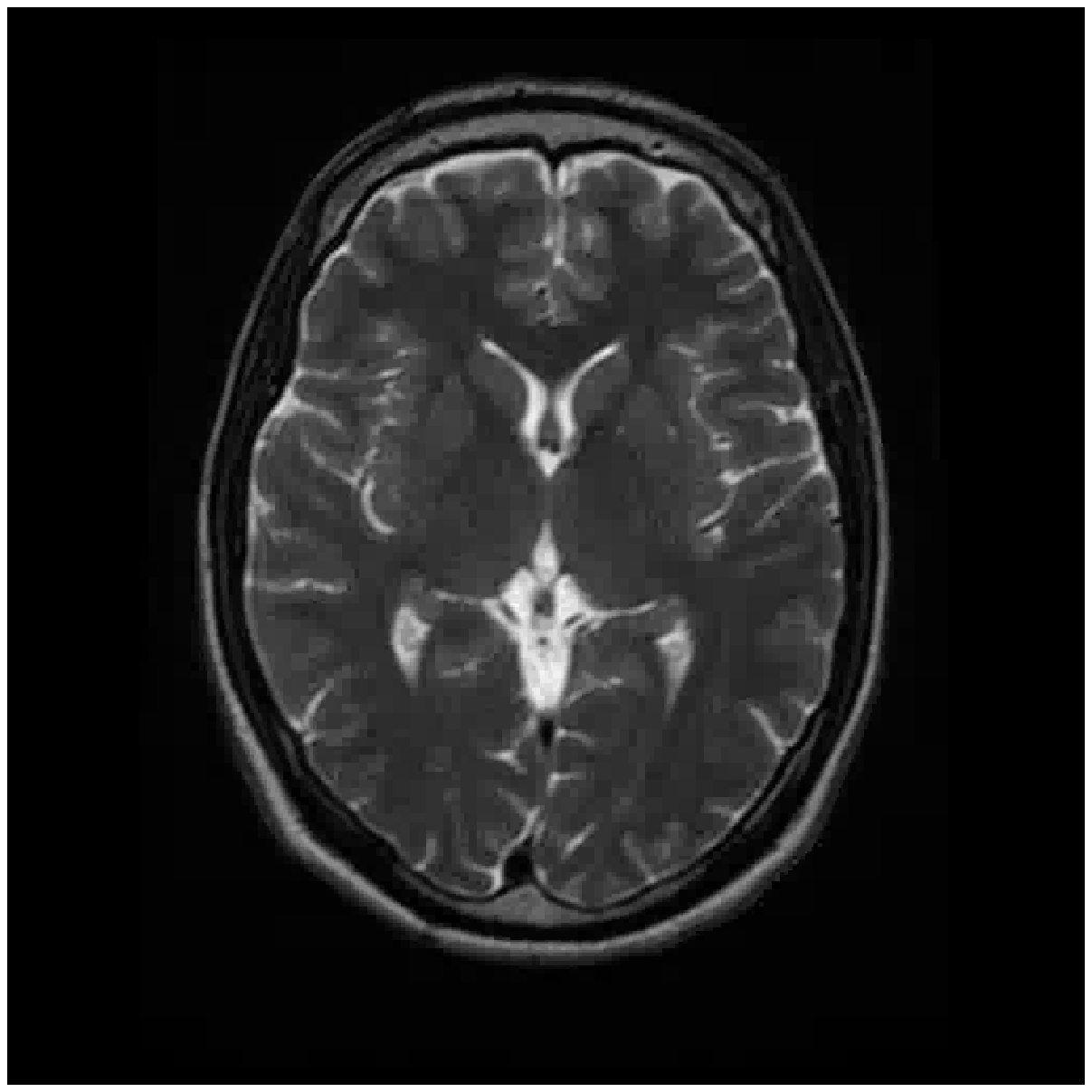}
\label{mri_recovered_both_known}\hspace{-0.8cm}}
\subfloat[]{\includegraphics[width=2.3in]{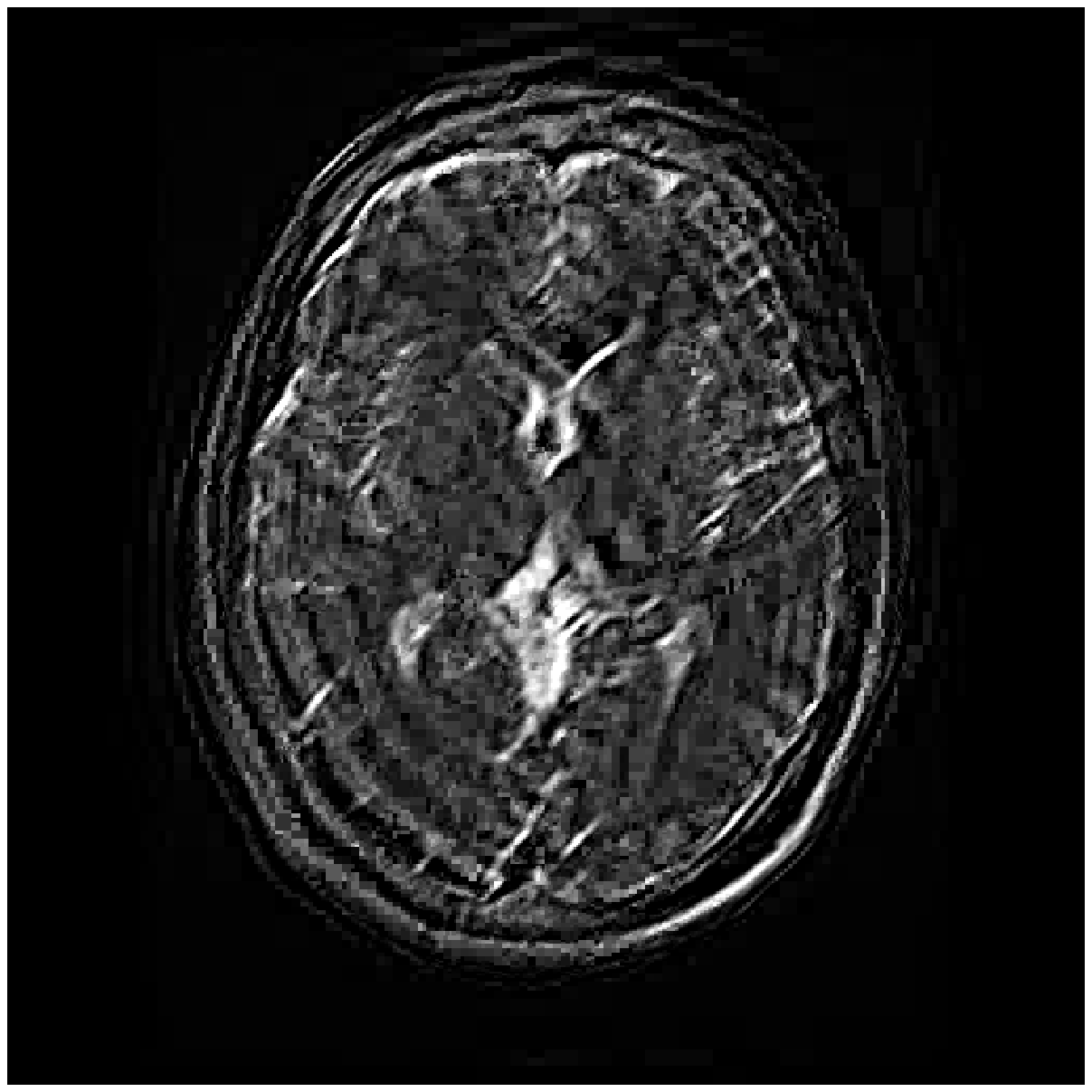}
\label{mri_recovered_both_unknown}}
\caption{MRI image deblurring: (a) Original $512\times 512$ MRI image; (b) Blurring kernel;  (c) Blurred image; (d) Initial guess; (e) Reconstructed image when the subspaces are known; (f) Reconstructed image without knowing the subspaces exactly.}
\label{mri_deblurring}
\end{figure*}

\subsection{An example from image processing}

Next, we test Algorithm~\ref{AGD} on an image deblurring problem, inspired by~\cite{RR12}. The observed image (Figure~\ref{mri_blurred}) is a convolution of a $512\times 512$ MRI image (Figure~\ref{mri_original}) with a motion blurring kernel (Figure~\ref{kernel}). Since the MRI image is approximately sparse in the Haar wavelet basis, we can assume it belongs to a low dimensional subspace; that is, $\bg=\BC\bx_0$, where $\bg\in\mathbb{C}^{L}$ with $L=262,144$  denotes the MRI image reshaped into a vector, $\BC\in\CC^{L\times N}$ represents the wavelet subspace and $\bx_0\in\CC^N$ is the vector of wavelet coefficients. The blurring kernel $\bff\in\CC^{L\times 1}$ is supported on a low frequency region. Therefore $\widehat{\bff}=\BB\bh_0$, where $\BB\in\CC^{L\times K}$ is a reshaped $2$D low frequency Fourier matrix and $\bh_0\in\CC^{K}$ is a short vector. 

Figure~\ref{mri_initial_guess} shows the initial guess for Algorithm~\ref{AGD} in the image domain, which is obtained by running the power method for fifty iterations.
While this initial guess is clearly not a good approximation to the true solution, it suffices as a starting point for gradient descent.
In the first experiment, we take $\BC$ to be the wavelet subspace corresponding to the $N=20000$ largest Haar wavelet coefficients of the original MRI image, and  we also assume the locations of the $K=65$ nonzero entries of the kernel are known. Figure~\ref{mri_recovered_both_known} shows the reconstructed image in this ideal setting. It can be observed that the recovered image is visually indistinguishable from the ground truth MRI image.
 In the second experiment, we test  a more realistic setting, where both the support of the MRI image is the wavelet domain and the support of the kernel are not known. We take the Haar wavelet transform of the blurred image (Figure~\ref{mri_blurred}) and select $\BC$ to be the wavelet subspace corresponding to the $N=35000$ largest wavelet coefficients. We do not assume the exact support of the kernel is known, but assume that its support is contained in a small box region. The reconstructed image in this setting is shown in Figure~\ref{mri_recovered_both_unknown}. Despite not knowing the subspaces exactly, Algorithm~\ref{AGD} is still able to return a reasonable reconstruction.
 
 Yet, this second experiment also demonstrates that there is clearly room for improvement in the case when the subspaces are unknown. One natural
 idea to improve upon the result depicted in Figure~\ref{mri_recovered_both_unknown} is to include an additional total-variation penalty in the reconstruction
 algorithm. We leave this line of work for future research.

\section{Proof of the main theorem}
\label{s:proof}

This section is devoted to the proof of Theorems~\ref{thm:init} and~\ref{thm:main}. Since proving  Theorem~\ref{thm:main} is a bit more involved,
 we briefly describe the architecture of its proof.
In Subsection~\ref{ss:conditions} we state four key conditions:  The {\em Local Regularity Condition} will allow us to show that the objective function decreases; the {\em Local Restricted Isometry Property} enables us to transfer the decrease in the objective function to a decrease of the 
error between the iterates and the true solution; the {\em Local Smoothness Condition} yields the actual rate of convergence, and finally, the
{\em Robustness Condition} establishes robustness of the proposed algorithm against additive noise. Armed with these conditions, 
we will show how the three regions defined in Section~\ref{s:Algo} characterize the convergence neighborhood
of the solution,  i.e., if the initial guess is inside this neighborhood, the sequence generated via gradient descent will always stay inside this neighborhood as well. 
In Subsections~\ref{s:lemmata}--\ref{s:init}  we justify the aforementioned four conditions, show that they are valid under the assumptions stated in Theorem~\ref{thm:main}, and conclude with a proof of Theorem~\ref{thm:init}.

\subsection{Four key conditions and the proof of Theorem~\ref{thm:main}}
\label{ss:conditions}


\begin{condition}[{\textbf{\textit{Local RIP condition}}}]\label{cond:rip}
The following local Restricted Isometry Property (RIP)  for $\A$ holds uniformly for all $(\bh, \bx)\in\Kint:$

\begin{equation}\label{eq:LRIP}
\frac{3}{4} \|\bh\bx^* - \bh_0\bx_0^*\|_F^2 \leq \|\A(\bh\bx^* - \bh_0\bx_0^*)\|^2 \leq \frac{5}{4}\|\bh\bx^* - \bh_0\bx_0^*\|_F^2
\end{equation}
\end{condition}
Condition~\ref{cond:rip} states that $\A$ almost preserves the $\ell_2$-distance between $\bh\bx^* - \bh_0\bx_0^*$ over a ``local" region around the ground truth $\bh_0\bx_0^*$. The proof of Condition~\ref{cond:rip} is given in Lemma~\ref{lem:rip}.

\begin{condition}[{\textbf{\textit{Robustness condition}}}]\label{cond:AE} For the noise $\be\sim\mathcal{N}(\bzero, \frac{\sigma^2 d_0^2}{2L}\I_L) + \mi \mathcal{N}(\bzero, \frac{\sigma^2 d_0^2}{2L}\I_L)$, with high probability there holds, 
\begin{equation}\label{eq:AEnorm}
\| \A^*(\be) \| \leq \frac{\eps d_0}{10\sqrt{2}}
\end{equation}
if $L \geq C_{\gamma}(\frac{\sigma^2}{\eps^2} + \frac{\sigma}{\eps})\max\{K, N\}\log L$. 
\end{condition}
This condition follows directly from~\eqref{eq:Ae-1}. It is quite essential when we analyze the behavior of Algorithm~\ref{AGD} under Gaussian noise. 
With those two conditions above in hand, the lower and upper bounds of $F(\bh, \bx)$  are well approximated over $\Kint$ by two quadratic functions of $\delta$, where $\delta$ is defined in~\eqref{def:delta}. A similar approach towards noisy matrix completion problem can be found in~\cite{KMO09noise}.
For any $(\bh, \bx)\in \Kint$, applying Condition~\ref{cond:rip}  to~\eqref{eq:F-decom} leads to
\begin{equation}\label{eq:Fuppbd-org}
F(\bh, \bx) \leq \|\be\|^2 + F_0(\bh, \bx) + 2\sqrt{2}\|\A^*(\be)\|\delta d_0 \leq   \|\be\|^2 + \frac{5}{4}\delta^2 d_0^2 + 2\sqrt{2} \|\A^*(\be)\| \delta d_0
\end{equation}
and similarly
\begin{equation}\label{eq:Flowbd-org}
F(\bh, \bx) \geq \|\be\|^2 +  \frac{3}{4}\delta^2 d_0^2 - 2\sqrt{2} \|\A^*(\be)\| \delta d_0
\end{equation}
where
\begin{equation*}
|\lag \A^*(\be), \bh\bx^* - \bh_0\bx_0^* \rag| \leq \|\A^*(\be)\| \|\bh\bx^* - \bh_0\bx_0^*\|_* \leq  \sqrt{2}\|\A^*(\be)\| \delta d_0,
\end{equation*}
because $\|\cdot\|$ and $\|\cdot\|_*$ is a pair of dual norm and $\rank(\bh\bx^* - \bh_0\bx_0^*) \leq 2$.
Moreover, with the Condition~\ref{cond:AE}, ~\eqref{eq:Fuppbd-org} and~\eqref{eq:Flowbd-org} yield the followings:
\begin{equation}\label{eq:Fuppbd}
F(\bh, \bx) \leq  \|\be\|^2 + \frac{5}{4}\delta^2d_0^2 + \frac{\eps \delta d_0^2}{5}
\end{equation}
and 
\begin{equation}\label{eq:Flowbd}
F(\bh, \bx) \geq \|\be\|^2 +  \frac{3}{4}\delta^2 d_0^2 - \frac{\eps \delta d_0^2}{5}.
\end{equation}

The third condition is about the regularity condition of $\tF(\bh, \bx)$, which is the key to establishing linear convergence later.  The proof will be given in Lemma~\ref{lem:reg}. 
\begin{condition}[{\textbf{\textit{Local regularity condition}}}]\label{cond:reg}
Let $\tF(\bh, \bx)$ be as defined in~\eqref{def:FG} and $\nabla \tF(\bh, \bx) := (\nabla \tF_{\bh}, \nabla \tF_{\bx})\in\CC^{K + N}$. Then there exists a {\em regularity constant} $\omega = \frac{d_0}{5000}>0$ such that
\begin{equation}\label{eq:reg}
\|\nabla \tF(\bh, \bx)\|^2 \geq \omega \left[ \tF(\bh, \bx) - c\right]_+
\end{equation}
for all $(\bh, \bx) \in \Kint$ where $c = \|\be\|^2 + a \|\A^*(\be)\|^2 $ with $a = 1700$. In particular, in the noiseless case, i.e., $\be = \bzero$, we have
\begin{equation*}
\|\nabla \tF(\bh, \bx)\|^2 \geq \omega \tF(\bh, \bx).
\end{equation*}
\end{condition}

\bigskip

Besides the three regions defined in~\eqref{def:Kd} to~\eqref{def:Kmu}, we define another region $\KF$ via
\begin{equation}
\KF := \left\{(\bh, \bx): \tF(\bh, \bx) \leq \frac{1}{3}\eps^2 d_0^2 + \|\be\|^2\right\}
\end{equation}
for proof technical purposes.
$\KF$ is actually the sublevel set of the nonconvex function $\tF$.

Finally we introduce the last condition called \textit{Local smoothness condition} and its corresponding quantity $C_L$ which characterizes the choice of stepsize $\eta$ and the rate of linear convergence. 
\begin{condition}[{\textbf{\textit{Local smoothness condition}}}]\label{cond:smooth}
Denote $\bz := (\bh, \bx)$. There exists a constant $C_L $ such that
\begin{equation}\label{def:cl}
\|\nabla f(\bz + t \Delta\bz) - \nabla f(\bz)\| \leq C_L t\|\Delta\bz\|, \quad \forall 0\leq t\leq 1,
\end{equation}
for all $\{(\bz, \Delta \bz) | \bz +t \Delta \bz \in \Keps\bigcap \KF, \forall 0\leq t\leq 1\}$, i.e., the whole segment connecting $\bz$ and $\bz + \Delta \bz$, which is parametrized by $t$, belongs to the nonconvex set $\Keps\bigcap \KF.$ 

\end{condition}
The upper bound of $C_L$, which scales with $\mathcal{O}(d_0(1 + \sigma^2)(K + N)\log^2L )$, will be given in Section~\ref{s:smooth}. We will show later in Lemma~\ref{lem:induction} that the stepsize $\eta$ is chosen to be smaller than $\frac{1}{C_L}.$
Hence $\eta = \mathcal{O}((d_0(1 + \sigma^2)(K + N)\log^2L )^{-1}).$
\begin{lemma}
\label{lem:betamu}
There holds $\KF \subset \Kd \cap \Kmu$; under Condition~\ref{cond:rip} and~\ref{cond:AE}, we have $\KF \cap \Keps \subset \MN_{\frac{9}{10}\epsilon}$.
\end{lemma}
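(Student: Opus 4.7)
The plan is to leverage the nonnegativity and structure of the penalty $G$ together with the local RIP lower bound on $F$. The first inclusion $\KF \subset \Kd \cap \Kmu$ is purely algebraic and does not require either Condition~\ref{cond:rip} or~\ref{cond:AE}; only the second inclusion uses them. The key inputs are $\rho \geq d^2 + 2\|\be\|^2 \geq 0.81\,d_0^2 + 2\|\be\|^2$ (since $d \geq \tfrac{9}{10}d_0$) and $\eps \leq \tfrac{1}{15}$.

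For the first inclusion, I would argue by contradiction. Suppose $(\bh,\bx) \in \KF$ but $\|\bh\|^2 > 4d_0$. Since $d \leq \tfrac{11}{10}d_0$,
\[
\frac{\|\bh\|^2}{2d} > \frac{4d_0}{(11/5)d_0} = \frac{20}{11},
\]
so $G_0\!\left(\|\bh\|^2/(2d)\right) \geq (9/11)^2$. Consequently
\[
\tF(\bh,\bx) \;\geq\; G(\bh,\bx) \;\geq\; \rho\,(9/11)^2 \;\geq\; 0.81(9/11)^2 d_0^2 + 2(9/11)^2 \|\be\|^2 .
\]
Because $0.81(9/11)^2 > \tfrac{1}{3}\eps^2$ (as $\eps^2 \leq 1/225$) and $2(9/11)^2 > 1$, this contradicts $\tF \leq \tfrac{1}{3}\eps^2 d_0^2 + \|\be\|^2$. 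The same computation with $\|\bx\|^2/(2d)$ forces $\|\bx\| \leq 2\sqrt{d_0}$, and with $L|\bb_l^*\bh|^2/(8d\mu^2)$ forces $\sqrt{L}|\bb_l^*\bh| \leq 4\sqrt{d_0}\mu$ for every $l$. Hence $\KF \subset \Kd \cap \Kmu$.

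For the second inclusion, given $(\bh,\bx) \in \KF \cap \Keps$, the first part already places $(\bh,\bx)$ in $\Kint$, so Conditions~\ref{cond:rip} and~\ref{cond:AE} unlock the lower bound~\eqref{eq:Flowbd}:
\[
F(\bh,\bx) \;\geq\; \|\be\|^2 + \tfrac{3}{4}\delta^2 d_0^2 - \tfrac{\eps}{5}\delta d_0^2 .
\]
On the other hand, $F \leq \tF \leq \tfrac{1}{3}\eps^2 d_0^2 + \|\be\|^2$ from $(\bh,\bx) \in \KF$. Subtracting $\|\be\|^2$ and dividing by $d_0^2$ yields the quadratic inequality $\tfrac{3}{4}\delta^2 - \tfrac{\eps}{5}\delta - \tfrac{1}{3}\eps^2 \leq 0$, whose larger root is $\tfrac{2\eps}{15}(1+\sqrt{26}) < \tfrac{9}{10}\eps$. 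Hence $\delta(\bh,\bx) \leq \tfrac{9}{10}\eps$, i.e., $(\bh,\bx) \in \MN_{9\eps/10}$.

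The only subtlety is the numerical calibration: $G_0$ activates only when its argument exceeds $1$, so I have to verify that the thresholds baked into $G$ (namely $2d$ in the denominator for $\|\bh\|^2,\|\bx\|^2$ and $8d\mu^2$ for $L|\bb_l^*\bh|^2$) create a definite overshoot past $1$ whenever the $\Kd$ or $\Kmu$ thresholds ($4d_0$ and $16d_0\mu^2$) are exceeded. Once $d$ is pinned in $[\tfrac{9}{10}d_0,\tfrac{11}{10}d_0]$, the overshoot is at least $9/11$, and both the contradiction in the first part and the quadratic bound in the second part go through with room to spare.
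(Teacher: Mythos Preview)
Your proof is correct and follows essentially the same approach as the paper: for the first inclusion you lower bound $\tF \ge G \ge \rho\,G_0(\cdot)$ and exploit $\frac{9}{10}d_0 \le d \le \frac{11}{10}d_0$ to produce a contradiction with the $\KF$ threshold, and for the second inclusion you combine the lower bound~\eqref{eq:Flowbd} with the $\KF$ upper bound and solve the resulting quadratic in $\delta$. The only cosmetic difference is that the paper writes the overshoot as $\rho\,G_0(2d_0/d) \ge (2d_0-d)^2 + 2\|\be\|^2(2d_0/d-1)^2$ before specializing, whereas you first lower bound $\rho$ by $0.81\,d_0^2+2\|\be\|^2$ and then multiply by $(9/11)^2$; the arithmetic is equivalent.
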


\begin{proof}
If $(\vct{h}, \vct{x}) \notin \Kd \cap \Kmu$, by the definition of $G$ in~\eqref{def:G}, at least one component in $G$ exceeds $\rho G_0\left(\frac{2d_0}{d}\right)$. We have 
\begin{eqnarray*}
\tF(\bh, \bx) & \geq & \rho G_0\left(\frac{2d_0}{d}\right) \geq  (d^2 + 2\|\be\|^2) \left( \frac{2d_0}{d} - 1\right)^2 \\
& \geq & (2d_0 - d)^2 + 2\|\be\|^2\left( \frac{2d_0}{d} - 1\right)^2 \\
& \geq & 0.81  d_0^2 + \|\be\|^2 > \frac{1}{3}\eps^2 d_0^2 + \|\be\|^2,
\end{eqnarray*}
where $\rho \geq d^2 + 2\|\be\|^2$ and $0.9d_0 \leq d \leq 1.1d_0.$
This implies $(\bh, \bx) \notin \KF$ and hence $\KF \subset \Kd \cap \Kmu$. \\

For any $(\bh, \bx) \in \KF \cap \Keps$, we have $(\bh, \bx)\in \Kint$ now. By~\eqref{eq:Flowbd},
\begin{equation*}
\|\be\|^2 + \frac{3}{4} \delta^2 d_0^2 - \frac{\eps\delta d_0^2}{5} \leq F(\bh, \bx)\leq \tF(\bh, \bx)\leq \|\be\|^2 + \frac{1}{3}\eps^2 d_0^2.
\end{equation*}
Therefore, $(\vct{h}, \vct{x}) \in \MN_{\frac{9}{10}\epsilon}$ and $\KF \cap \Keps \subset \MN_{\frac{9}{10}\epsilon}$.
\end{proof}

This lemma implies that the intersection of $\KF$ and the boundary of $\Keps$ is empty. One might believe this suggests that $\KF \subset \Keps$. This may not be true. A more reasonable interpretation is that $\KF$ consists of several disconnected regions due to the non-convexity of $\tF(\bh, \bx)$, and one or several of them are contained in $\Keps$.

\begin{lemma}
\label{lem:line_section}
Denote $\vct{z}_1 = (\vct{h}_1, \vct{x}_1)$ and $\vct{z}_2 = (\vct{h}_2, \vct{x}_2)$. Let $\vct{z}(\lambda):=(1-\lambda)\vct{z}_1 + \lambda \vct{z}_2$. If $\vct{z}_1 \in \Keps$ and $\vct{z}(\lambda) \in \KF$ for all $\lambda \in [0, 1]$, we have $\vct{z}_2 \in \Keps$.
\end{lemma}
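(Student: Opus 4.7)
The plan is to use a continuity/connectedness argument based on the gap $\tfrac{9}{10}\eps < \eps$ supplied by Lemma~\ref{lem:betamu}. Define $\phi(\lambda) := \delta(\vct{z}(\lambda)) = \|\bh(\lambda)\bx(\lambda)^* - \bh_0\bx_0^*\|_F/d_0$, which is a continuous function of $\lambda$ because $\bh(\lambda)$ and $\bx(\lambda)$ are affine in $\lambda$ and the Frobenius norm is continuous. In this notation, $\vct{z}(\lambda)\in \Keps$ is exactly the condition $\phi(\lambda)\le \eps$, and the goal is to show $\phi(1)\le \eps$.

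First I would observe that at $\lambda=0$ we already have a strict bound: since $\vct{z}(0)=\vct{z}_1\in\Keps$ and $\vct{z}(0)\in\KF$ by hypothesis, Lemma~\ref{lem:betamu} gives $\vct{z}(0)\in\KF\cap\Keps\subset\MN_{\frac{9}{10}\eps}$, i.e.\ $\phi(0)\le \tfrac{9}{10}\eps$. This strict inequality is the crucial foothold that makes the continuity argument work.

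Next I would argue by contradiction. Suppose $\vct{z}_2\notin\Keps$, so $\phi(1)>\eps$. By the intermediate value theorem applied to the continuous function $\phi$ on $[0,1]$, the set $\{\lambda\in[0,1]:\phi(\lambda)=\eps\}$ is nonempty and closed, so it has a minimum, call it $\lambda^*\in(0,1]$. For every $\lambda\in[0,\lambda^*)$, by the minimality of $\lambda^*$ and the initial inequality $\phi(0)\le \tfrac{9}{10}\eps<\eps$, one gets $\phi(\lambda)<\eps$, hence $\vct{z}(\lambda)\in\Keps$. Combined with the hypothesis $\vct{z}(\lambda)\in\KF$, Lemma~\ref{lem:betamu} yields $\vct{z}(\lambda)\in\MN_{\frac{9}{10}\eps}$, i.e.\ $\phi(\lambda)\le \tfrac{9}{10}\eps$ for all $\lambda\in[0,\lambda^*)$.

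Finally, letting $\lambda\uparrow \lambda^*$ and using continuity of $\phi$ gives $\phi(\lambda^*)\le \tfrac{9}{10}\eps$, which contradicts $\phi(\lambda^*)=\eps$. Hence $\phi(1)\le\eps$ and $\vct{z}_2\in\Keps$, as desired. There is no serious obstacle here — the whole argument is driven by the $\tfrac{9}{10}\eps$-vs-$\eps$ gap provided by the previous lemma; the only subtlety worth double-checking is that $\phi$ is genuinely continuous in $\lambda$ (which it is, being a polynomial-in-$\lambda$ expression inside a norm).
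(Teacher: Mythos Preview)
Your proof is correct and follows essentially the same approach as the paper: assume $\vct{z}_2\notin\Keps$, use continuity to locate a point $\lambda_0$ with $\phi(\lambda_0)=\eps$, and obtain a contradiction from Lemma~\ref{lem:betamu}. The paper's version is slightly more direct in that it applies Lemma~\ref{lem:betamu} \emph{at} $\lambda_0$ itself (since $\phi(\lambda_0)=\eps$ already means $\vct{z}(\lambda_0)\in\Keps$, and $\vct{z}(\lambda_0)\in\KF$ by hypothesis), avoiding your limiting step and the preliminary observation that $\phi(0)\le\tfrac{9}{10}\eps$.
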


\begin{proof}
Let us prove the claim by contradiction. If $\vct{z}_2 \notin \Keps$, since $\vct{z}_1 \in \Keps$, there exists $\vct{z}(\lambda_0):=(\vct{h}(\lambda_0), \vct{x}(\lambda_0)) \in \Keps$ for some $\lambda_0 \in [0, 1]$, such that $\|\vct{h}(\lambda_0)\vct{x}(\lambda_0)^* - \bh_0\bx_0^*\|_F = \epsilon d_0$. However, since $\vct{z}(\lambda_0) \in \KF$, by \prettyref{lem:betamu}, we have $\|\vct{h}(\lambda_0)\vct{x}(\lambda_0)^* - \bh_0\bx_0^*\|_F \leq \frac{9}{10}\epsilon d_0$. This leads to a contradiction.
\end{proof}
\begin{remark}
Lemma~\ref{lem:line_section} tells us that if one line segment is completely inside $\KF$ with one end point in $\Keps$, then this whole line segment lies in $\Kint.$
\end{remark}

\begin{lemma}
\label{lem:induction}
Let the stepsize $\eta \leq \frac{1}{C_L}$, $\bz_t : = (\bu_t, \bv_t)\in\CC^{K + N}$ and $C_L$ be the constant defined in~\eqref{def:cl}. Then, as long as $\vct{z}_t \in \Keps \cap \KF$, we have $\vct{z}_{t+1} \in \Keps \cap \KF$ and
\begin{equation}
\label{eq:decreasing2}
\tF(\bz_{t+1}) \leq  \tF(\bz_t) - \eta \|\nabla \tF(\bz_t)\|^2.
\end{equation}
\end{lemma}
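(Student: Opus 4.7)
The plan is to run the standard descent-lemma/continuation argument, with the wrinkle that Condition~\ref{cond:smooth} only gives Lipschitz control of $\nabla\tF$ along segments whose \emph{entire} trajectory stays inside $\Keps\cap\KF$. I therefore cannot directly apply the quadratic upper bound to the full step $\bz_{t+1}=\bz_t-\eta\nabla\tF(\bz_t)$; instead I will bootstrap: parametrize the gradient step by $\bz(s)=\bz_t-s\eta\nabla\tF(\bz_t)$ for $s\in[0,1]$, and show by a continuity argument that the whole path stays inside $\Keps\cap\KF$, while simultaneously obtaining the claimed descent inequality.

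First, if $\nabla\tF(\bz_t)=0$ the conclusion is immediate, so assume the gradient is nonzero. Define
\[
t^*=\sup\bigl\{t\in[0,1]:\bz(s)\in\KF\text{ for all }s\in[0,t]\bigr\}.
\]
Since $\bz(0)=\bz_t\in\KF$ and $\KF$ is closed (as a sublevel set of the continuous function $\tF$), the supremum is attained, so $\bz(s)\in\KF$ for all $s\in[0,t^*]$. By Lemma~\ref{lem:line_section} applied to each subsegment from $\bz_t\in\Keps$ to $\bz(s)$, every such $\bz(s)$ also lies in $\Keps$. Hence the entire path $\{\bz(s):s\in[0,t^*]\}$ lies in $\Keps\cap\KF$, and Condition~\ref{cond:smooth} applies along it.

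Integrating the Lipschitz bound on $\nabla\tF$ along the segment in the usual way yields the quadratic upper bound
\[
\tF(\bz(s))\le \tF(\bz_t)-s\eta\|\nabla\tF(\bz_t)\|^2+\tfrac{C_L s^2\eta^2}{2}\|\nabla\tF(\bz_t)\|^2,\qquad s\in[0,t^*],
\]
and the stepsize condition $\eta\le 1/C_L$ collapses this to $\tF(\bz(s))\le \tF(\bz_t)-\tfrac{s\eta}{2}\|\nabla\tF(\bz_t)\|^2$, giving strict decrease of $\tF$ along the ray. Now suppose for contradiction that $t^*<1$. Then there is a sequence $s_n\downarrow t^*$ with $\bz(s_n)\notin\KF$, i.e. $\tF(\bz(s_n))>\tfrac13\eps^2 d_0^2+\|\be\|^2$, so by continuity $\tF(\bz(t^*))\ge\tfrac13\eps^2 d_0^2+\|\be\|^2$. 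On the other hand, from the quadratic bound at $s=t^*$ together with $\tF(\bz_t)\le\tfrac13\eps^2 d_0^2+\|\be\|^2$ and $\|\nabla\tF(\bz_t)\|>0$, we get $\tF(\bz(t^*))<\tfrac13\eps^2 d_0^2+\|\be\|^2$, a contradiction. Hence $t^*=1$, so $\bz_{t+1}=\bz(1)\in\KF$, and Lemma~\ref{lem:line_section} gives $\bz_{t+1}\in\Keps$ as well. Finally, evaluating the descent inequality at $s=1$ produces~\eqref{eq:decreasing2}.

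The only delicate point is making the strict-decrease argument robust when $\bz_t$ itself sits on the boundary $\tF(\bz_t)=\tfrac13\eps^2 d_0^2+\|\be\|^2$: in that case the strictness of the inequality is supplied by the positive term $\tfrac{t^*\eta}{2}\|\nabla\tF(\bz_t)\|^2$, which is genuinely positive because $t^*>0$ (this in turn follows because at $s=0$ we start strictly inside $\KF$ along the gradient direction thanks to the smoothness-based quadratic bound being strict for small $s$). Apart from this bookkeeping, everything is the textbook gradient descent analysis, tailored so that the nonconvex set $\Keps\cap\KF$ is invariant under the iteration.
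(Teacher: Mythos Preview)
Your approach is essentially the same as the paper's: both run a continuation argument along the gradient segment $\lambda\mapsto\bz_t-\lambda\nabla\tF(\bz_t)$, show the path stays in $\KF$ (hence, by Lemma~\ref{lem:line_section}, in $\Keps$), and then invoke the descent lemma. The organization differs only cosmetically---the paper works with the first time $\varphi(\lambda):=\tF(\bz_t-\lambda\nabla\tF(\bz_t))$ would return to its initial value, you work with the supremal time the path is guaranteed to remain in $\KF$---but the logic is identical.

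There are two small issues worth flagging. First, your constants in the quadratic upper bound are the real-variable ones; here $\tF$ is a real-valued function of complex variables and gradients are Wirtinger derivatives, so the chain rule gives $\varphi'(0)=-2\|\nabla\tF(\bz_t)\|^2$ and the relevant descent lemma (Lemma~\ref{lem:DSL}) reads
\[
\tF(\bz+\Delta\bz)\le\tF(\bz)+2\Real\bigl((\Delta\bz)^T\,\overline{\nabla\tF}(\bz)\bigr)+C_L\|\Delta\bz\|^2.
\]
With the bound you display you only obtain $\tF(\bz_{t+1})\le\tF(\bz_t)-\tfrac{\eta}{2}\|\nabla\tF(\bz_t)\|^2$, not the claimed \eqref{eq:decreasing2}; the factor of two from Wirtinger calculus is exactly what restores the full $-\eta$. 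Second, your justification that $t^*>0$ via ``the smoothness-based quadratic bound being strict for small $s$'' is circular, since that bound (Condition~\ref{cond:smooth}) is only available once the segment is already known to lie in $\Keps\cap\KF$. The clean fix, which the paper uses, is to note that $\tF$ is continuously differentiable everywhere, so $\varphi'(0)=-2\|\nabla\tF(\bz_t)\|^2<0$ holds unconditionally and gives $\varphi(\lambda)<\varphi(0)$ for all small $\lambda>0$, whence $t^*>0$. With these two corrections your argument matches the paper's.
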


\begin{proof}
It suffices to prove \prettyref{eq:decreasing2}. If $\nabla \tF(\bz_t) = \vct{0}$, then $\bz_{t+1} = \bz_t$, which implies \prettyref{eq:decreasing2} directly. So we only consider the case when $\nabla \tF(\bz_t) \neq \vct{0}$. Define the function
\[
\varphi(\lambda) := \tF(\bz_t - \lambda \nabla \tF(\bz_t)). 
\]
Then 
\[
\varphi'(\lambda)|_{\lambda = 0} = - 2\| \nabla \tF(\bz_t) \|^2 <0.
\]
since $\varphi(\lambda)$ is a real-valued function with complex variables (See~\eqref{eq:chain} for details).
By the definition of derivatives, we know there exists $\eta_0 > 0$, such that $\varphi(\lambda) < \varphi(0)$ for all $0<\lambda \leq \eta_0$. Now we will first prove that $\varphi(\lambda) \leq \varphi(0)$ for all $0\leq \lambda\leq \eta$ by contradiction. Assume there exists some $\eta_1 \in (\eta_0, \eta]$ such that $\varphi(\eta_1) > \varphi(0)$. Then there exists $\eta_2 \in (\eta_0, \eta_1)$, such that $\varphi(\eta_2) = \varphi(0)$ and $\varphi(\lambda) < \varphi(0)$ for all $0 < \lambda < \eta_2$, since $\varphi(\lambda)$ is a continuous function. This implies
\[
\bz_t - \lambda \nabla \tF(\bz_t) \in \KF, \quad \forall 0 \leq \lambda \leq \eta_2
\]
since $\tF(\bz_t - \lambda \nabla \tF(\bz_t)) \leq \tF(\bz_t)$ for $0\leq \lambda\leq \eta_2.$
By \prettyref{lem:line_section} and the assumption $\bz_t \in \Keps$, we have
\[
\bz_t - \lambda \nabla \tF(\bz_t) \in \Keps \cap \KF, \quad \forall 0 \leq \lambda \leq \eta_2.
\]
Then, by using the modified descent lemma (Lemma~\ref{lem:DSL}),
\begin{align*}
\tF(\bz_t - \eta_2 \nabla \tF(\bz_t)) &\leq \tF(\bz_t) - 2 \eta_2 \|\nabla \tF(\bz_t)\|^2 + C_L\eta_2^2  \|\nabla \tF(\bz_t)\|^2 \nonumber
\\
& =\tF(\bz_t) + (C_L\eta_2^2 - 2\eta_2) \|\nabla \tF(\bz_t)\|^2 < \tF(\bz_t),
\end{align*}
where the final inequality is due to $\eta_2/\eta <1$, $\eta_2 > \eta_0 \geq 0$ and $\nabla \tF(\bz_t) \neq \vct{0}$. This contradicts $ \tF(\bz_t - \eta_2 \nabla \tF(\bz_t)) = \varphi(\eta_2) = \varphi(0) = \tF(\bz_t)$.

Therefore, there holds $\varphi(\lambda) \leq \varphi(0)$ for all $0 \leq \lambda \leq \eta$. Similarly, we can prove
\[
\bz_t - \lambda \nabla \tF(\bz_t) \in \Keps \cap \KF, \quad \forall 0 \leq \lambda \leq \eta,
\]
which implies $\bz_{t+1} = \bz_t - \eta \nabla \tF(\bz_t) \in \Keps \cap \KF$. Again, by using Lemma~\ref{lem:DSL} we can prove
\[
\tF(\bz_{t+1}) = \tF(\bz_t - \eta \nabla \tF(\bz_t)) \leq \tF(\bz_t) - 2 \eta \|\nabla \tF(\bz_t)\|^2 + C_L\eta^2  \|\nabla \tF(\bz_t)\|^2 \leq \tF(\bz_t)  - \eta \|\nabla \tF(\bz_t)\|^2,
\]
where the final inequality is due to $ \eta \leq \frac{1}{C_L}$.
\end{proof}

We conclude this subsection by proving  Theorem~\ref{thm:main} under the \textit{Local regularity condition}, the
\textit{Local RIP condition}, the \textit{Robustness condition},  and the \textit{Local smoothness condition}. 
The next subsections are devoted to justifying  these conditions and showing that they hold under the assumptions of Theorem~\ref{thm:main}. 

\begin{proof}{[\textbf{of Theorem~\ref{thm:main}}]}
Suppose that the initial guess  $\bz_0 := (\bu_0, \bv_0)\in \frac{1}{\sqrt{3}}\Kd \bigcap \frac{1}{\sqrt{3}}\Kmu \bigcap \MN_{\frac{2}{5}\eps}$, 
we have $G(\bu_0, \bv_0) = 0$. This holds, because
\begin{equation*}
\frac{\|\bu_0\|^2}{2d} \leq \frac{2d_0}{3d} < 1, \quad \frac{L|\bb_l^* \bu_0|^2}{8d\mu^2} \leq \frac{L}{8d\mu^2} \cdot\frac{16d_0\mu^2}{3L} \leq \frac{2d_0}{3d} < 1,
\end{equation*}
where $\|\bu_0\| \leq \frac{2\sqrt{d_0}}{\sqrt{3}}$, $\sqrt{L}\|\BB\bu_0\|_{\infty} \leq \frac{4 \sqrt{d_0}\mu}{\sqrt{3}}$ and $\frac{9}{10}d_0 \leq d\leq \frac{11}{10}d_0.$ Therefore $G_0\left( \frac{\|\bu_0\|^2}{2d}\right) = G_0\left( \frac{\|\bv_0\|^2}{2d}\right) = G_0\left(\frac{L|\bb_l^*\bu_0|^2}{8d\mu^2}\right) = 0$ for all $1\leq l\leq L$ and $G(\bu_0, \bv_0) = 0.$ Since $(\bu_0, \bv_0)\in\Kint$, ~\eqref{eq:Fuppbd} combined with $\delta(\bz_0) := \frac{\|\bu_0\bv_0^* - \bh_0\bx_0^*\|_F}{d_0} \leq \frac{2\eps}{5}$ imply that
\begin{equation*}
\tF(\bu_0, \bv_0) = F(\bu_0, \bv_0) \leq \|\be\|^2 + \frac{5}{4}\delta^2(\bz_0) d_0^2 + \frac{1}{5}\eps\delta(\bz_0) d_0^2 < \frac{1}{3}\eps^2 d_0^2 + \|\be\|^2
\end{equation*}
and hence $\bz_0 = (\bu_0, \bv_0)\in \Keps\bigcap \KF.$ 
Denote $\bz_t : = (\bu_t, \bv_t).$
Combining Lemma~\ref{lem:induction} by choosing $\eta \leq \frac{1}{C_L}$ with Condition~\ref{cond:reg}, we have
\begin{equation*}
\tF(\bz_{t + 1}) \leq \tF(\bz_t) - \eta\omega \left[  \tF(\bz_t)  - c \right]_+
\end{equation*}
with $c = \|\be\|^2 + a\|\A^*(\be)\|^2$, $a = 1700$ and $\bz_t \in \Kint$ for all $t\geq 0.$
Obviously, the inequality above implies 
\begin{equation*}
 \tF(\bz_{t+1})  - c  \leq (1 - \eta\omega)  \left[  \tF(\bz_t)  - c \right]_+ ,
\end{equation*}
and by monotonicity of $z_+ = \frac{z + |z|}{2}$, there holds
\begin{equation*}
 \left[ \tF(\bz_{t+1})  - c\right]_+  \leq (1 - \eta\omega)  \left[  \tF(\bz_t)  - c \right]_+ .
\end{equation*}

Therefore, by induction, we have
\begin{eqnarray*}
\left[ \tF(\bz_t) - c\right]_+ & \leq & \left(1 - \eta\omega\right)^t \left[ \tF(\bz_0) - c\right]_+ \leq \frac{1}{3} (1 - \eta\omega)^{t}  \eps^2 d_0^2
\end{eqnarray*}
where $\tF(\bz_0) \leq \frac{1}{3}\eps^2d_0^2 + \|\be\|^2$ and hence $\left[ \tF(\bz_0) - c \right]_+ \leq   \left[ \frac{1}{3}\eps^2 d_0^2 - a\|\A^*(\be)\|^2 \right]_+ \leq \frac{1}{3}\eps^2 d_0^2.$
Now we can conclude that  $\left[ \tF(\bz_t) - c\right]_+$ converges to $0$ geometrically. Note that over $\Kint$,
\begin{equation*}
\tF(\bz_t) - \|\be\|^2 \geq F_0(\bz_t) - 2\Real(\lag \A^*(\be), \bu_t\bv_t^* - \bh_0\bx_0^* \rag ) 
\geq \frac{3}{4} \delta^2(\bz_t)d_0^2 - 2\sqrt{2}\|\A^*(\be)\| \delta(\bz_t)d_0
\end{equation*}
where $ \delta(\bz_t) := \frac{\|\bu_t\bv_t^* - \bh_0\bx_0^*\|_F}{d_0}$, $F_0$ is defined in~\eqref{def:F0} and $G(\bz_t) \geq 0$.
There holds
\begin{equation*}
\frac{3}{4} \delta^2(\bz_t)d_0^2 - 2\sqrt{2}\|\A^*(\be)\| \delta(\bz_t)d_0  - a\|\A^*(\be)\|^2   \leq \left[ \tF(\bz_t) - c \right]_+ \leq \frac{1}{3}(1 - \eta\omega)^t  \eps^2 d_0^2
\end{equation*}
and equivalently, 
\begin{equation*}
\left|\delta(\bz_t)d_0 - \frac{4\sqrt{2}}{3} \|\A^*(\be)\| \right|^2 \leq  \frac{4}{9} (1 - \eta\omega)^t \eps^2 d_0^2 + \left(\frac{4}{3}a + \frac{32}{9}\right)\|\A^*(\be)\|^2.
\end{equation*}
Solving the inequality above for $\delta(\bz_t)$, we have 
\begin{eqnarray}
\delta(\bz_t) d_0 & \leq &  \frac{2}{3}(1 - \eta\omega)^{t/2}  \eps d_0 +\left(\frac{4\sqrt{2}}{3} + \sqrt{\frac{4}{3}a + \frac{32}{9}} \right)\|\A^*(\be)\| \nonumber \\
& \leq &  \frac{2}{3}(1 - \eta\omega)^{t/2}\eps d_0 + 50 \|\A^*(\be)\|. \label{eq:main-res-2}
\end{eqnarray}
Let $d_t : = \|\bu_t\|\|\bv_t\|$, $t\geq 1.$
By~\eqref{eq:main-res-2} and triangle inequality, we immediately conclude that
\begin{equation*}
|d_t - d_0| \leq  \frac{2}{3}(1 - \eta\omega)^{t/2}\eps d_0 + 50 \|\A^*(\be)\|.
\end{equation*}

Now we derive the upper bound for $\sin \angle (\bu_t, \bh_0)$ and $\sin \angle(\bv_t, \bx_0).$ Due to symmetry, it suffices to consider $\sin \angle (\bu_t, \bh_0)$. The bound follows from standard linear algebra arguments:
\begin{eqnarray*}
\sin \angle(\bu_t, \bh_0) & = & \frac{1}{\|\bu_t\|}\left\| \left(\I  - \frac{\bh_0\bh_0^*}{d_0}\right)\bu_t\right\| \\
& = & \frac{1}{\|\bu_t\| \|\bv_t\|}\left\| \left(\I  - \frac{\bh_0\bh_0^*}{d_0}\right)(\bu_t \bv_t^* - \bh_0\bx_0^* )\right\|  \\
& \leq & \frac{1}{d_t} \|\bu_t\bv_t^* - \bh_0\bx_0^*\|_F \\
& \leq & \frac{1}{d_t}\left(  \frac{2}{3}(1 - \eta\omega)^{t/2}\eps d_0 + 50 \|\A^*(\be)\|\right),
\end{eqnarray*}
where the second equality uses $\left(\I  - \frac{\bh_0\bh_0^*}{d_0}\right) \bh_0 = \bzero.$
\end{proof}

\subsection{Supporting lemmata}
\label{s:lemmata}

This subsection introduces several lemmata, especially Lemma~\ref{lem:orth_decomp}, ~\ref{lem:ripu} and~\ref{lem:key}, which are central for  justifying Conditions~\ref{cond:rip} and~\ref{cond:reg}. After that, we will prove the \textit{Local RIP Condition} in Lemma~\ref{lem:rip} based on those three lemmata. We start with defining a linear space $T$, which contains $\bh_0\bx_0^*$, via
\begin{equation}\label{def:T}
T := \lc \frac{1}{\sqrt{d_0}}\bh_0\bv^* + \frac{1}{\sqrt{d_0}}\bu\bx_0^*,~ \bu \in\CC^K, \bv\in\CC^N \rc \subset \CC^{K\times N}.
\end{equation}
Its orthogonal complement is given by
\begin{equation*}
\TB : = \lc \left(\I - \frac{1}{d_0}\bh_0\bh_0^*\right)\BZ\left(\I - \frac{1}{d_0}\bx_0\bx_0^*\right),~ \BZ\in\CC^{K\times N} \rc.
\end{equation*}
Denote $\PP_T$ to be the projection operator from $\CC^{K\times N}$ onto $T$. 

For any $\vct{h}$ and $\vct{x}$, there are unique orthogonal decompositions
\begin{equation}
\label{eq:orth}
{\vct{h} = \alpha_1 \vct{h}_0 + \tilde{\vct{h}}, \quad \vct{x} = \alpha_2 \vct{x}_0 + \tilde{\vct{x}}},
\end{equation}
where $\vct{h}_0 \perp \tilde{\vct{h}}$ and $\vct{x}_0 \perp \tilde{\vct{x}}$. More precisely, $
\alpha_1 = \frac{\bh_0^*\bh}{d_0} = \frac{\lag \bh_0, \bh\rag}{d_0}$ and $\alpha_2 =  \frac{\lag \bx_0, \bx\rag}{d_0}.$
 We thereby have the following matrix orthogonal decomposition
\begin{align}\label{eq:decomposition}
\vct{h}\vct{x}^* - \vct{h}_0 \vct{x}_0^* = (\alpha_1 \overline{\alpha_2} - 1)\vct{h}_0\vct{x}_0^* + \overline{\alpha_2} \tilde{\vct{h}} \vct{x}_0^* + \alpha_1 \vct{h}_0 \tilde{\vct{x}}^*  + \tilde{\vct{h}} \tilde{\vct{x}}^*
\end{align}
where the first three components are in $T$ while $\tilde{\bh}\tilde{\bx}^*\in T^{\bot}$.

\begin{lemma}
\label{lem:orth_decomp}
Recall that $\|\vct{h}_0\| = \|\vct{x}_0\| = \sqrt{d_0}$. If $\delta := \frac{\|\vct{h}\vct{x}^* - \vct{h}_0 \vct{x}_0^*\|_F}{d_0}<1$, we have the following useful bounds
\[
|\alpha_1|\leq \frac{\|\vct{h}\|}{\|\vct{h}_0\|}, \quad |\alpha_1\overline{\alpha_2} - 1|\leq \delta,
\]
and
\[
\|\tilde{\vct{h}}\| \leq \frac{\delta}{1 - \delta}\|\vct{h}\|,\quad \|\tilde{\vct{x}}\| \leq \frac{\delta}{1 - \delta}\|\vct{x}\|,\quad \|\tilde{\vct{h}}\| \|\tilde{\vct{x}}\| \leq \frac{\delta^2}{2(1 - \delta)} d_0.
\]
Moreover, if $\|\vct{h}\| \leq 2\sqrt{d_0}$ and $\sqrt{L}\|\mtx{B} \vct{h}\|_\infty \leq 4\mu \sqrt{d_0}$, we have $\sqrt{L}\|\mtx{B} \tilde{\vct{h}}\|_\infty \leq 6 \mu \sqrt{d_0}$.
\end{lemma}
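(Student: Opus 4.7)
The plan is to exploit the orthogonality of the four summands in the matrix decomposition~\eqref{eq:decomposition}. Since $\tilde{\vct{h}}\perp\vct{h}_0$ and $\tilde{\vct{x}}\perp\vct{x}_0$, the rank-one matrices $\vct{h}_0\vct{x}_0^*$, $\tilde{\vct{h}}\vct{x}_0^*$, $\vct{h}_0\tilde{\vct{x}}^*$, and $\tilde{\vct{h}}\tilde{\vct{x}}^*$ are pairwise Frobenius-orthogonal. Taking squared Frobenius norms in~\eqref{eq:decomposition} and using $\|\vct{h}_0\|^2 = \|\vct{x}_0\|^2 = d_0$ then produces the master identity
\begin{equation*}
\delta^2 d_0^2 = |\alpha_1\overline{\alpha_2} - 1|^2 d_0^2 + |\alpha_2|^2 d_0 \|\tilde{\vct{h}}\|^2 + |\alpha_1|^2 d_0 \|\tilde{\vct{x}}\|^2 + \|\tilde{\vct{h}}\|^2 \|\tilde{\vct{x}}\|^2,
\end{equation*}
from which most of the asserted bounds will drop out. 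The bound $|\alpha_1|\leq \|\vct{h}\|/\|\vct{h}_0\|$ is independent and follows from Cauchy--Schwarz applied to $\alpha_1 = \langle \vct{h}_0,\vct{h}\rangle/d_0$; the bound $|\alpha_1\overline{\alpha_2} - 1|\leq \delta$ is obtained by discarding the three nonnegative terms on the right of the displayed identity.

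For $\|\tilde{\vct{h}}\|$, I would first note that $|\alpha_1\overline{\alpha_2} - 1|\leq \delta$ together with the reverse triangle inequality give $|\alpha_1||\alpha_2| = |\alpha_1\overline{\alpha_2}| \geq 1-\delta > 0$. Keeping only the second term of the master identity yields $|\alpha_2|\|\tilde{\vct{h}}\|\leq \delta\sqrt{d_0}$; dividing by $|\alpha_2|$, combining with $|\alpha_2|\geq (1-\delta)/|\alpha_1|$, and using $|\alpha_1|\sqrt{d_0}\leq \|\vct{h}\|$ chains into $\|\tilde{\vct{h}}\|\leq \frac{\delta}{1-\delta}\|\vct{h}\|$. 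The bound on $\|\tilde{\vct{x}}\|$ follows by the same argument with the roles of $\vct{h}$ and $\vct{x}$ swapped. For the product $\|\tilde{\vct{h}}\|\|\tilde{\vct{x}}\|$, I would apply AM--GM to the middle two terms of the master identity,
\begin{equation*}
|\alpha_2|^2 d_0 \|\tilde{\vct{h}}\|^2 + |\alpha_1|^2 d_0 \|\tilde{\vct{x}}\|^2 \geq 2|\alpha_1||\alpha_2| d_0 \|\tilde{\vct{h}}\|\|\tilde{\vct{x}}\| \geq 2(1-\delta) d_0 \|\tilde{\vct{h}}\|\|\tilde{\vct{x}}\|,
\end{equation*}
and substituting back produces $\|\tilde{\vct{h}}\|\|\tilde{\vct{x}}\|\leq \frac{\delta^2 d_0}{2(1-\delta)}$.

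Finally, for the incoherence transfer, since $\tilde{\vct{h}} = \vct{h} - \alpha_1 \vct{h}_0$, the triangle inequality in $\ell_\infty$ gives
\begin{equation*}
\sqrt{L}\|\mtx{B}\tilde{\vct{h}}\|_\infty \leq \sqrt{L}\|\mtx{B}\vct{h}\|_\infty + |\alpha_1|\sqrt{L}\|\mtx{B}\vct{h}_0\|_\infty \leq 4\mu\sqrt{d_0} + 2\mu_h\sqrt{d_0} \leq 6\mu\sqrt{d_0},
\end{equation*}
where I used the two hypotheses, the bound $|\alpha_1|\leq \|\vct{h}\|/\sqrt{d_0}\leq 2$, the definition~\eqref{def:muh} of $\mu_h$, and the standing assumption $\mu_h\leq \mu$. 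None of the individual steps is hard; the only point requiring a moment of care is ensuring that $|\alpha_1||\alpha_2|\geq 1-\delta$ is strictly positive so that the division by $|\alpha_2|$ in the $\|\tilde{\vct{h}}\|$ bound is legitimate, and this is guaranteed by the hypothesis $\delta<1$.
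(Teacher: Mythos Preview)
Your proof is correct and follows essentially the same approach as the paper: both derive the orthogonal ``master identity'' from~\eqref{eq:decomposition}, read off $|\alpha_1\overline{\alpha_2}-1|\leq\delta$ and $|\alpha_1||\alpha_2|\geq 1-\delta$, use AM--GM on the middle two terms for the product bound, and apply the triangle inequality for the incoherence transfer. The only cosmetic difference is in the bound on $\|\tilde{\vct{h}}\|$: the paper factors $|\alpha_1|^2 d_0\|\tilde{\vct{x}}\|^2+\|\tilde{\vct{h}}\|^2\|\tilde{\vct{x}}\|^2=\|\vct{h}\|^2\|\tilde{\vct{x}}\|^2$ via Pythagoras and then invokes $\|\vct{h}\|\|\vct{x}\|\geq(1-\delta)d_0$, whereas you keep only the single term $|\alpha_2|^2 d_0\|\tilde{\vct{h}}\|^2$ and divide using $|\alpha_2|\geq(1-\delta)/|\alpha_1|$ --- both routes are valid and yield the same inequality.
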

\begin{remark}
This lemma is actually a simple version of singular value/vector perturbation. It says that if $\frac{\|\bh\bx^* - \bh_0\bx_0^*\|_F}{d_0}$ is of $\mathcal{O}(\delta)$, then the individual vectors $(\bh, \bx)$ are also close to $(\bh_0, \bx_0)$, with the error of order $\mathcal{O}(\delta).$
\end{remark}

\begin{proof}
The equality \prettyref{eq:orth} implies that $\|\alpha_1 \vct{h}_0\| \leq \|\vct{h}\|$, so there holds $|\alpha_1|\leq \frac{\|\vct{h}\|}{\|\vct{h}_0\|}$. Since $\|\vct{h}\vct{x}^* - \vct{h}_0 \vct{x}_0^*\|_F = \delta d_0$, by \prettyref{eq:decomposition}, we have
\begin{equation}
\label{eq:pythag}
\delta^2d_0^2 = (\alpha_1 \overline{\alpha_2} -1)^2 d_0^2 +  |\overline{\alpha_2}|^2 \|\tilde{\vct{h}}\|^2d_0 + |\alpha_1|^2 \|\tilde{\vct{x}}\|^2d_0+ \|\tilde{\vct{h}}\|^2\|\tilde{\vct{x}}\|^2.
\end{equation}
This implies that
\[
\|\vct{h}\|^2\|\tilde{\vct{x}}\|^2 = (\alpha_1^2d_0 + \|\tilde{\vct{h}}\|^2)\|\tilde{\vct{x}}\|^2 \leq  \delta^2d_0^2.
\]
On the other hand,
\[
\|\vct{h}\|\|\vct{x}\| \geq \|\vct{h}_0\|\|\vct{x}_0\|- \|\vct{h}\vct{x}^* - \vct{h}_0 \vct{x}_0^*\|_F = (1 - \delta) d_0.
\]
The above two inequalities imply that $\|\tilde{\vct{x}}\| \leq \frac{\delta}{1 - \delta}\|\vct{x}\|$, and similarly we have $\|\tilde{\vct{h}}\| \leq \frac{\delta}{1 - \delta}\|\vct{h}\|$. The equality \prettyref{eq:pythag} implies that $|\alpha_1 \overline{\alpha_2} -1| \leq \delta$ and hence $|\alpha_1 \overline{\alpha_2}| \geq 1 - \delta$. Moreover, \prettyref{eq:pythag} also implies
\[
\|\tilde{\vct{h}}\|\|\tilde{\vct{x}}\| |\alpha_1| |\overline{\alpha_2}| \leq \frac{1}{2}( |\overline{\alpha_2}|^2 \|\tilde{\vct{h}}\|^2 + |\alpha_1|^2\|\tilde{\vct{x}}\|^2) \leq \frac{\delta^2 d_0}{2},
\] 
which yields $\|\tilde{\vct{h}}\|_2 \|\tilde{\vct{x}}\|_2 \leq \frac{\delta^2}{2(1 - \delta)} d_0$.\\
~\\
If $\|\vct{h}\| \leq 2\sqrt{d_0}$ and $\sqrt{L}\|\mtx{B} \vct{h}\|_\infty \leq 4\mu \sqrt{d_0}$, there holds $|\alpha_1| \leq \frac{\|\vct{h}\|}{\|\vct{h}_0\|} \leq 2$. Then
\begin{eqnarray*}
\sqrt{L}\|\mtx{B} \tilde{\vct{h}}\|_\infty & \leq & \sqrt{L}\|\mtx{B} \vct{h}\|_\infty + \sqrt{L}\|\mtx{B}(\alpha_1\vct{h}_0)\|_\infty \leq \sqrt{L}\|\mtx{B} \vct{h}\|_\infty + 2\sqrt{L}\|\mtx{B}\vct{h}_0\|_\infty  \\
& \leq & 4 \mu \sqrt{d_0}  + 2 \mu_h \sqrt{d_0}  \leq 6\mu \sqrt{d_0}
\end{eqnarray*}
where $\mu_h \leq \mu$.
\end{proof}

In the following, we introduce and prove a series of local and global properties of $\A$:

\begin{lemma}[Lemma 1 in \cite{RR12}]
\label{lem:A-UPBD}
For $\A$ defined in~\eqref{def:A},
\begin{equation}\label{eq:A-UPBD}
\|\A\| \leq \sqrt{N\log(NL/2) + \gamma \log L}
\end{equation}
with probability at least $1 - L^{-\gamma}.$
\end{lemma}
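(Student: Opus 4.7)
The plan is to reduce the operator norm bound on $\A$ to controlling a maximum of $L$ chi-squared random variables via a Cauchy-Schwarz argument, and then apply a standard Chernoff-type tail estimate combined with a union bound. Because the reduction is completely deterministic, no $\epsilon$-net over the unit Frobenius sphere in $\CC^{K \times N}$ is required, which is what keeps the argument short.

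The first (deterministic) step starts from $\|\A\|^2 = \sup_{\|\BZ\|_F = 1} \sum_{l=1}^L |\bb_l^* \BZ \ba_l|^2$ and applies Cauchy-Schwarz to each summand: $|\bb_l^* \BZ \ba_l| = |\langle \BZ^* \bb_l, \ba_l\rangle| \leq \|\BZ^* \bb_l\|\,\|\ba_l\|$. Pulling out $\max_l \|\ba_l\|^2$ gives $\|\A(\BZ)\|^2 \leq (\max_l \|\ba_l\|^2)\sum_l \|\BZ^* \bb_l\|^2$. The essential identity, noted right after the definition of $\A$, is $\sum_l \bb_l \bb_l^* = \BB^*\BB = \I_K$, which yields $\sum_l \|\BZ^* \bb_l\|^2 = \tr(\BZ \BZ^* \sum_l \bb_l \bb_l^*) = \|\BZ\|_F^2 = 1$. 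Thus $\|\A\|^2 \leq \max_{1 \leq l \leq L} \|\ba_l\|^2$ uniformly in $\BZ$, and the supremum over matrices disappears from the problem.

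For the probabilistic step, each entry of $\ba_l$ is a unit-variance complex Gaussian, so $2\|\ba_l\|^2 \sim \chi^2_{2N}$. The moment generating function $\E e^{\lambda \|\ba_l\|^2} = (1 - \lambda)^{-N}$ for $\lambda \in (0,1)$, followed by Markov and optimization in $\lambda$, produces the Chernoff tail $\Pr[\|\ba_l\|^2 \geq s] \leq (es/N)^N e^{-s}$ for $s \geq N$. Substituting $s = N\log(NL/2) + \gamma \log L$ and performing a short bookkeeping calculation shows $(es/N)^N e^{-s} \leq L^{-(\gamma+1)}$, so a union bound over $l = 1, \ldots, L$ delivers $\max_l \|\ba_l\|^2 \leq s$ with probability at least $1 - L^{-\gamma}$; combined with the deterministic reduction from the previous paragraph, this gives the stated operator norm bound.

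The only nontrivial piece is verifying the Chernoff inequality at the specific $s$ in the statement: one needs $N\log(es/N) - s \leq -(\gamma+1)\log L$, which follows routinely from $s/N \geq \log(NL/2)$. So the bottleneck is constant-tracking rather than any conceptual issue — once the clean reduction $\|\A\|^2 \leq \max_l \|\ba_l\|^2$ is in place, the rest is a matter of standard chi-squared tail estimates.
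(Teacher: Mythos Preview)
The paper does not give its own proof of this lemma; it is simply quoted as Lemma~1 of \cite{RR12}. Your argument is exactly the standard one used there: the deterministic reduction $\|\A\|^2 \le \max_{1\le l\le L}\|\ba_l\|^2$ via Cauchy--Schwarz together with $\sum_{l}\bb_l\bb_l^* = \I_K$, followed by the Chernoff tail for $2\|\ba_l\|^2\sim\chi^2_{2N}$ and a union bound over the $L$ rows. So there is nothing in the paper to contrast against, and your strategy is correct.

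One small caution on the last step: the claim that $N\log(es/N)-s\le -(\gamma+1)\log L$ ``follows routinely from $s/N\ge\log(NL/2)$'' is a little optimistic. If you actually plug in $s=N\log(NL/2)+\gamma\log L$ and simplify, the inequality you need becomes $2es/N^2\le L^{1-1/N}$, which can fail by a constant factor for very small $N$ (for $N=1$ it never holds, and for $N=2$ with moderate $L$ the bound misses by a factor of $3$ or so). This is a wrinkle in the constants of the lemma as stated rather than in your method; in the regime of the paper ($N$ not tiny, $L$ large) the inequality goes through, and the original reference carries essentially the same bookkeeping.
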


\begin{lemma}[Corollary 2 in~\cite{RR12}]
\label{lem:ripu}
Let $\A$ be the operator defined in~\eqref{def:A}, then on an event $E_1$ with probability at least $1 - L^{-\gamma}$, $\A$ restricted on $T$ is well-conditioned, i.e., 
\begin{equation*}
\|\PP_T\A^*\A\PP_T - \PP_T\| \leq \frac{1}{10}
\end{equation*}
where $\PP_T$ is the projection operator from $\CC^{K\times N}$ onto $T$, provided $L \geq C_{\gamma} \max\{K, \mu_h^2 N\}\log^2(L)$. 
\end{lemma}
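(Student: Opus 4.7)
The plan is to apply a non-commutative Bernstein inequality to the decomposition
\begin{equation*}
\PP_T \A^* \A \PP_T - \PP_T \;=\; \sum_{l=1}^L \MZ_l, \qquad \MZ_l(\BZ) \;:=\; \PP_T\!\bigl( \bb_l \bb_l^* (\PP_T \BZ)(\ba_l \ba_l^* - \I_N) \bigr),
\end{equation*}
viewing each $\MZ_l$ as a self-adjoint random operator on the subspace $T \subset \CC^{K \times N}$, which has dimension $\mathcal{O}(K+N)$. Since $\E[\ba_l \ba_l^*] = \I_N$ and $\sum_{l=1}^L \bb_l \bb_l^* = \I_K$, each $\MZ_l$ has mean zero and the expectation of the full sum vanishes as required.

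First I would control the per-summand operator norm $R := \max_l \|\MZ_l\|$. Using the explicit formula $\PP_T \BM = \tfrac{1}{d_0}\bh_0\bh_0^* \BM + \tfrac{1}{d_0}\BM \bx_0 \bx_0^* - \tfrac{1}{d_0^2}\bh_0 \bh_0^* \BM \bx_0 \bx_0^*$ together with the parametrization in~\eqref{def:T}, one reduces $\|\MZ_l\|$ to an expression in the deterministic scalars $|\bb_l^* \bh_0|$ (bounded by $\mu_h\sqrt{d_0/L}$ via the incoherence assumption) and $\|\bb_l\|^2 = K/L$, together with the random scalars $\|\ba_l\|^2$ and $|\ba_l^* \bx_0|^2/d_0$, which concentrate at scales $N$ and $1$ respectively with sub-exponential tails. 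Truncating these Gaussian quadratic forms at an $\mathcal{O}(\log L)$ level and taking a union bound over the $L$ indices yields, on a high probability event, $R \lesssim \max\{K, \mu_h^2 N\}\log(L)/L$.

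Next I would estimate the matrix variance $\sigma^2 := \bigl\|\sum_l \E[\MZ_l^2]\bigr\|$. Expanding $\MZ_l^2$ and invoking the fourth-moment identity $\E[\ba_l \ba_l^* \BW \ba_l \ba_l^*] = \BW + \tr(\BW)\I_N$ for standard complex Gaussian $\ba_l$, the variance collapses to a sum of terms involving $\sum_l \|\bb_l\|^2 \bb_l \bb_l^*$ and $\sum_l |\bb_l^* \bh_0|^2 \bb_l \bb_l^*$; each is controlled by combining the identity $\BB^* \BB = \I_K$ with the uniform bound $\|\bb_l\|^2 \leq K/L$ and the incoherence. This gives $\sigma^2 \lesssim \max\{K, \mu_h^2 N\}/L$.

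Feeding these bounds into the matrix Bernstein inequality on the $\mathcal{O}(K+N)$-dimensional operator space, with target deviation $1/10$, produces the sampling requirement $L \geq C_\gamma \max\{K, \mu_h^2 N\} \log^2 L$, the second logarithm coming from the truncation step. The main technical obstacle I anticipate is the variance computation: propagating the three-term structure of $\PP_T$ through a fourth-order Gaussian moment produces many cross terms, which must be bounded sharply (rather than by a wasteful triangle inequality) in order to preserve the sharp $\max\{K, \mu_h^2 N\}$ dependence on both sides. This is precisely the step in which the incoherence hypothesis $\mu_h^2 \geq 1$ enters most delicately.
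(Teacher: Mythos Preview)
The paper does not prove this lemma; it is quoted verbatim as Corollary~2 of~\cite{RR12} and used as a black box. Your sketch is essentially the argument that appears in~\cite{RR12}: decompose $\PP_T\A^*\A\PP_T - \PP_T$ as a sum of centered self-adjoint operators $\MZ_l$ on the $\mathcal{O}(K+N)$-dimensional space $T$, bound the per-summand size and the variance, and invoke a matrix Bernstein inequality. One minor point of divergence: the paper (and~\cite{RR12}) favors the $\psi_1$-norm version of Bernstein stated here as Theorem~\ref{BernGaussian}, which handles the sub-exponential tails of $\|\ba_l\|^2$ and $|\ba_l^*\bx_0|^2$ directly, whereas you propose truncating these quantities at level $\mathcal{O}(\log L)$ and applying a bounded-summand Bernstein on the complementary event. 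Both routes work and produce the same $\log^2 L$ factor; the $\psi_1$ route is somewhat cleaner bookkeeping since it avoids splitting the probability space. Your anticipated difficulty in the variance computation is real but manageable: in~\cite{RR12} the cross terms are handled by exploiting the rank-one structure of the projectors $\tfrac{1}{d_0}\bh_0\bh_0^*$ and $\tfrac{1}{d_0}\bx_0\bx_0^*$ rather than by a naive triangle inequality, which is exactly what you are flagging.
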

Now we introduce a property of $\A$ when restricted on rank-one matrices.
\begin{lemma}
\label{lem:key}
On an event $E_2$ with probability at least $1 - L^{-\gamma} - \frac{1}{\gamma}\exp(-(K+N))$, we have
\begin{equation*}
\|\A(\vct{u}\vct{v}^*)\|^2 \leq \left(\frac{4}{3}\|\vct{u}\|^2 + 2\|\vct{u}\|\|\mtx{B}\vct{u}\|_\infty \sqrt{2(K+N)\log L} + 8\|\mtx{B}\vct{u}\|_\infty^2(K+N) \log L \right)\|\vct{v}\|^2,
\end{equation*}
uniformly for any $\vct{u}$ and $\vct{v}$, provided $L\geq C_{\gamma}(K+N)\log L$.
\end{lemma}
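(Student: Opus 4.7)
By the bi-homogeneity of both sides (the right-hand side scales as $\|\vct{u}\|^2\|\vct{v}\|^2$ in both the leading and the fluctuation terms), I would first reduce to the case $\|\vct{u}\|=\|\vct{v}\|=1$. In this normalization the quantity of interest becomes
$$\|\A(\vct{u}\vct{v}^*)\|^2 = \sum_{l=1}^L |\bb_l^*\vct{u}|^2\,|\ba_l^*\vct{v}|^2,$$
whose expectation over $\BA$ equals $\sum_l |\bb_l^*\vct{u}|^2 = \|\vct{u}\|^2 = 1$. Thus the target inequality is essentially a Bernstein-type deviation bound, with fluctuations weighted by $\|\BB\vct{u}\|_\infty$.

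Next I would establish a pointwise bound. For fixed $\vct{u},\vct{v}$ on their unit spheres, the centered variables $Y_l := |\ba_l^*\vct{v}|^2 - 1$ are i.i.d.\ sub-exponential with $\|Y_l\|_{\psi_1} = O(1)$, while the weights $w_l := |\bb_l^*\vct{u}|^2$ satisfy $\sum_l w_l^2 \leq \|\BB\vct{u}\|_\infty^2$ and $\max_l w_l \leq \|\BB\vct{u}\|_\infty^2$. Bernstein's inequality for weighted sums of sub-exponentials then yields
$$\Pr\Bigl(\bigl|\textstyle\sum_l w_l Y_l\bigr| > t\Bigr) \leq 2\exp\Bigl(-c\,\min\bigl(t^2/\|\BB\vct{u}\|_\infty^2,\ t/\|\BB\vct{u}\|_\infty^2\bigr)\Bigr).$$
Choosing $t = \tfrac{3}{2}\|\BB\vct{u}\|_\infty\sqrt{2(K+N)\log L} + 6\|\BB\vct{u}\|_\infty^2(K+N)\log L$ drives this probability down to $\exp(-c'(K+N)\log L)$, comfortably absorbing an $L^{-\gamma}$ factor together with the net cardinalities appearing next.

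I would then upgrade to a uniform bound by $\epsilon$-nets. Take a $1/8$-net $N_{\vct{v}}$ of the complex unit sphere $S_\CC^{N-1}$ with $|N_{\vct{v}}|\le 17^{2N}$; since $\BM(\vct{u}) := \sum_l |\bb_l^*\vct{u}|^2\,\ba_l\ba_l^*$ is positive semidefinite, the standard PSD operator-norm net bound gives
$$\sup_{\|\vct{v}\|=1} \vct{v}^*\BM(\vct{u})\vct{v} \ \leq\ \tfrac{4}{3}\sup_{\vct{v}_0\in N_{\vct{v}}} \vct{v}_0^*\BM(\vct{u})\vct{v}_0,$$
and this single inflation factor produces exactly the $\tfrac{4}{3}$ in the leading term as well as the coefficients $\tfrac{4}{3}\cdot\tfrac{3}{2}=2$ and $\tfrac{4}{3}\cdot 6=8$ of the fluctuation terms. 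For uniformity in $\vct{u}$, take a very fine $(1/L)$-net $N_{\vct{u}}$ of $S_\CC^{K-1}$ whose log-cardinality is still $\lesssim K\log L$; because $\|\bb_l\|^2\le K/L$, we have $\|\BB(\vct{u}-\vct{u}_0)\|_\infty \lesssim \sqrt{K}/L^{3/2}$ on each cell, so $\|\BB\vct{u}\|_\infty$ and $\BM(\vct{u})$ change negligibly, and the residual discrepancy can be absorbed on the event $\{\|\A\|\lesssim \sqrt{N\log L+\gamma\log L}\}$ furnished by Lemma~\ref{lem:A-UPBD}, which has probability $\geq 1-L^{-\gamma}$. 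A union bound over $N_{\vct{u}}\times N_{\vct{v}}$ then gives the claimed failure probability $L^{-\gamma}+\tfrac{1}{\gamma}\exp(-(K+N))$.

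The main technical obstacle is the careful accounting of constants, so that the $\tfrac{4}{3}$, the $\sqrt{2}$ inside the square root, and the precise coefficients $2$ and $8$ come out exactly as stated: this forces a specific coupling between the net resolution ($1/8$ on $\vct{v}$), the Bernstein threshold $t$, and the PSD-net inflation. A secondary subtlety is that the right-hand side depends on $\vct{u}$ through $\|\BB\vct{u}\|_\infty$, so the $\vct{u}$-net must be fine enough that $\|\BB(\vct{u}-\vct{u}_0)\|_\infty$ is negligible compared to a typical $\|\BB\vct{u}\|_\infty$; the $1/L$-resolution net is tuned precisely for this, using the almost-tight bound $\|\bb_l\|^2\le K/L$.
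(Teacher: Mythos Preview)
Your overall strategy---normalize to unit spheres, apply a pointwise chi-squared/Bernstein tail, then cover with $\epsilon$-nets---matches the paper's, but there is a genuine gap in the union-bound accounting that breaks the argument when $K$ is large relative to $N$.

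Your innovation is to use a coarse $1/8$-net on $\vct{v}$ together with the PSD inflation $\|\BM(\vct{u})\|\le\tfrac{4}{3}\max_{\vct{v}_0}\vct{v}_0^*\BM(\vct{u})\vct{v}_0$, and to compensate by taking the pointwise threshold with smaller coefficients $\tfrac32$ and $6$ (so that $\tfrac43\cdot\tfrac32=2$ and $\tfrac43\cdot 6=8$). But this smaller $t$ weakens the tail: with the sharp generalized-$\chi^2$ bound the paper uses, $\Pr(\cdot)\le \exp\bigl(-\tfrac{t^2}{4\|\BB\vct{u}\|_\infty^2}\bigr)\vee\exp\bigl(-\tfrac{t}{4\|\BB\vct{u}\|_\infty^2}\bigr)$, your $t$ yields only $\exp\bigl(-\tfrac{9}{8}(K{+}N)\log L\bigr)$ rather than $\exp\bigl(-2(K{+}N)\log L\bigr)$. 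On the other hand, your $\vct{u}$-net must be at least as fine as $\epsilon_0\sim 1/((K{+}N)\log L)$ to keep both the Lipschitz error on the left side and the cross-term $16\|\BB\vct{u}\|_\infty\,\epsilon_0\,(K{+}N)\log L$ (from replacing $\|\BB\vct{u}_0\|_\infty$ by $\|\BB\vct{u}\|_\infty$) negligible; a $(1/L)$-net is not smaller. Either way its log-cardinality is $\sim 2K\log(1/\epsilon_0)$, which under the lemma's minimal hypothesis $L\ge C_\gamma(K{+}N)\log L$ is essentially $2K\log L$. The union bound then requires $\tfrac{9}{8}(K{+}N)>2K$, i.e.\ $9N>7K$, and fails whenever $K>\tfrac{9}{7}N$. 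No standard Bernstein constant $c$ rescues this: the chi-squared tail above is already sharp for this problem, so you cannot push $c'$ above $2$ without enlarging $t$ and destroying your constants.

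The paper avoids this trap by using the \emph{same} fine net (resolution $\epsilon_0\sim 1/((K{+}N{+}\gamma)\log L)$) on \emph{both} $\vct{u}$ and $\vct{v}$, so the total log-cardinality is $2(K{+}N)\log(1/\epsilon_0)$, balanced against the full-strength tail $\exp(-2(K{+}N)\log L)$ obtained from the threshold $g(\vct{u})=2\|\BB\vct{u}\|_\infty\sqrt{2(K{+}N)\log L}+8\|\BB\vct{u}\|_\infty^2(K{+}N)\log L$. The factor $\tfrac43$ then arises \emph{additively}: one shows $f(\vct{u}_0,\vct{v}_0):=\|\A(\vct{u}_0\vct{v}_0^*)\|^2-g(\vct{u}_0)\le 1$ on the net, and a Lipschitz bound for $f$ (using $\|\A\|^2\le (N{+}\gamma)\log L$) gives $|f(\vct{u},\vct{v})-f(\vct{u}_0,\vct{v}_0)|\le\tfrac13$. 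Thus in the paper the constants $2,8$ are the pointwise constants themselves and $\tfrac43$ is pure net-perturbation slack, whereas in your scheme all three constants are coupled through the PSD inflation, and that coupling is exactly what makes the union-bound budget too tight.
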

\begin{proof}
Due to the homogeneity, without loss of generality we can assume $\|\vct{u}\|=\|\vct{v}\|=1$. Define
\[
f(\vct{u}, \vct{v}):=\|\A(\vct{u}\vct{v}^*)\|^2 - 2\|\mtx{B}\vct{u}\|_\infty \sqrt{2(K+N)\log L} - 8\|\mtx{B}\vct{u}\|_\infty^2(K+N)\log L.
\]
It suffices to prove that $f(\vct{u}, \vct{v})\leq \frac{4}{3}$ uniformly for all $(\vct{u}, \vct{v}) \in \MS^{K-1}\times \MS^{N-1}$ with high probability, where $\mathcal{S}^{K-1}$ is the unit sphere in $\CC^K.$ For fixed $(\vct{u}, \vct{v}) \in \MS^{K-1}\times \MS^{N-1}$, notice that
\begin{equation*}
\|\A(\vct{u}\vct{v}^*)\|^2 = \sum\limits_{l=1}^L |\vct{b}_l^* \vct{u}|^2 |\vct{a}_l^* \vct{v}|^2
\end{equation*}
is the sum of subexponential variables with expectation $\mathbb{E} \|\A(\vct{u}\vct{v}^*)\|^2 =  \sum\limits_{l=1}^L |\vct{b}_l^* \vct{u}|^2 = 1$. For any generalized $\chi_n^2$ variable $Y \sim \sum_{i=1}^n c_i \xi_i^2$ satisfies
\begin{equation}\label{ineq:bern}
\mathbb{P}(Y - \mathbb{E}(Y) \geq t) \leq \exp\left(- \frac{t^2}{8\|\vct{c}\|_2^2}\right) \vee \exp\left(- \frac{t}{8\|\vct{c}\|_\infty}\right), 
\end{equation}
 where $\{\xi_i\}$ are i.i.d. $\chi^2_1$ random variables and $\bc = (c_1, \cdots, c_n)^T\in \RR^n$. Here we set $|\ba_l^*\bv|^2 = \frac{1}{2} \xi_{2l-1}^2 + \frac{1}{2}\xi_{2l}^2$, $c_{2l-1} = c_{2l} = \frac{|\bb_l^*\bu|^2}{2}$ and $n = 2L$.
Therefore, 
\begin{equation*}
\|\bc\|_{\infty} = \frac{\|\BB\bu\|_{\infty}^2}{2}, \quad \|\bc\|_2^2 = \frac{1}{2}\sum_{l=1}^L |\bb_l^*\bu|^4 \leq \frac{\|\BB\bu\|^2_{\infty}}{2}
\end{equation*}
and we have
\begin{equation*}
\mathbb{P}(\|\A(\vct{u}\vct{v}^*)\|^2 \geq 1 + t)\leq  \exp\left(- \frac{t^2}{4 \|\mtx{B}\vct{u}\|_\infty^2}\right) \vee \exp\left(- \frac{t}{4\|\mtx{B}\vct{u}\|_\infty^2}\right).
\end{equation*}
Applying~\eqref{ineq:bern} and setting 
\[
t = g(\vct{u}):= 2\|\mtx{B}\vct{u}\|_\infty \sqrt{2(K+N)\log L} + 8\|\BB\bu\|^2_{\infty}(K + N)\log L,
\]  
there holds
\[
\mathbb{P}\left(\|\A(\vct{u}\vct{v}^*)\|^2 \geq 1 + g(\vct{u})\right) \leq \exp\left( - 2 (K+N)(\log L) \right).
\]
That is, $f(\vct{u}, \vct{v}) \leq 1$ with probability at least $1 - \exp\left( - 2 (K+N)(\log L) \right)$.
We define $\mathcal{K}$ and $\mathcal{N}$ as $\epsilon_0$-nets of $\MS^{K-1}$ and $\MS^{N-1}$, respectively. Then, $|\mathcal{K}|\leq (1+\frac{2}{\epsilon_0})^{2K}$ and $|\mathcal{N}|\leq (1+\frac{2}{\epsilon_0})^{2N}$ follow from the covering numbers of the sphere (Lemma 5.2 in~\cite{Ver10}).

By taking the union bounds over $\mathcal{K}\times \mathcal{N},$ we have $f(\vct{u}, \vct{v})\leq 1$ holds uniformly for all $(\vct{u}, \vct{v}) \in \mathcal{K} \times \mathcal{N}$ with probability at least 
\[
1- \left(1+\frac{2}{\epsilon_0}\right)^{2(K + N)} e^{ - 2 (K+N)\log L } = 1- \exp\left( -2(K + N)\left(\log L - \log \left(1 + \frac{2}{\epsilon_0}\right)\right) \right).
\] 
Our goal is to show that $f(\vct{u}, \vct{v})\leq \frac{4}{3}$ uniformly for all $(\vct{u}, \vct{v}) \in \MS^{K-1}\times \MS^{N-1}$ with the same probability. For any $(\vct{u}, \vct{v}) \in \MS^{K-1}\times \MS^{N-1}$, we can find its closest $(\vct{u}_0, \vct{v}_0) \in \mathcal{K} \times \mathcal{N}$ satisfying $\| \bu - \bu_0 \| \leq \eps_0$ and $\|\bv - \bv_0\| \leq \eps_0$. By \prettyref{lem:A-UPBD}, with probability at least $1 - L^{-\gamma}$, we have $\|\A\| \leq \sqrt{(N+\gamma)\log L}$. Then straightforward calculation gives
\begin{eqnarray*}
 | f(\vct{u}, \vct{v}) - f(\vct{u}_0, \vct{v})| & \leq &
\| \A((\vct{u} - \vct{u}_0)\vct{v}^*)\|\|\A((\vct{u} + \vct{u}_0)\vct{v}^*)\| \\
&& + 2\|\mtx{B}(\vct{u} - \vct{u}_0)\|_\infty \sqrt{2(K+N)\log L}\\
&& + 8(K+N) (\log L) \|\mtx{B}(\vct{u} - \vct{u}_0)\|_\infty(\|\mtx{B}\vct{u}\|_\infty + \|\mtx{B}\vct{u}_0\|_\infty) \\
& \leq & 2\|\A\|^2 \eps_0 + 2\sqrt{2(K+N)\log L}\eps_0 + 16(K + N)(\log L) \eps_0 \\
&\leq & (21N + 19K + 2\gamma )(\log L) \epsilon_0
\end{eqnarray*}
where the first inequality is due to $||z_1|^2 - |z_2|^2| \leq |z_1 - z_2||z_1 + z_2|$ for any $z_1, z_2 \in \mathbb{C}$, and the second inequality is due to $\|\BB \vct{z} \|_{\infty} \leq \|\BB \vct{z} \| = \|\vct{z}\|$ for any $\vct{z} \in \mathbb{C}^K$. Similarly,
\begin{eqnarray*}
|f(\vct{u}_0, \vct{v}) - f(\vct{u}_0, \vct{v}_0)| & = & \| \A( \vct{u}_0 (\vct{v} + \vct{v}_0)^*)\|\| \A( \vct{u}_0 (\vct{v} - \vct{v}_0)^*)\| \\
& \leq & 2\|\A\|^2\eps_0 \leq 2( N + \gamma)(\log L)\epsilon_0 .
\end{eqnarray*}
Therefore, if $\epsilon_0 = \frac{1}{70(N + K + \gamma)\log L}$, there holds
\[
|f(\vct{u}_0, \vct{v}) - f(\vct{u}_0, \vct{v}_0)| \leq \frac{1}{3}.
\]
Therefore, if $L \geq C_{\gamma}(K+N)\log L$ with $C_{\gamma}$ reasonably large and $\gamma \geq 1$, we have $\log L - \log\left(1 + \frac{2}{\eps_0}\right) \geq \frac{1}{2}(1 + \log(\gamma))$ and
$f(\vct{u}, \vct{v})\leq \frac{4}{3}$ uniformly for all $(\vct{u}, \vct{v}) \in \MS^{K-1}\times \MS^{N-1}$ with probability at least $1- L^{-\gamma} - \frac{1}{\gamma}\exp(-(K+N))$.
\end{proof}

 Finally, we introduce a local RIP property of $\A$ conditioned on the event $E_1\cap E_2$, where $E_1$ and $E_2$ are defined in Lemma~\ref{lem:ripu} and Lemma~\ref{lem:key}
\begin{lemma}
\label{lem:rip}
Over $\Kint$ with $\mu \geq \mu_h$ and $\epsilon \leq \frac{1}{15}$, the following RIP type of property holds for $\A$:
\begin{equation*}
\frac{3}{4} \|\bh\bx^* - \bh_0\bx_0^*\|_F^2 \leq \|\A(\bh\bx^* - \bh_0\bx_0^*)\|^2 \leq \frac{5}{4}\|\bh\bx^* - \bh_0\bx_0^*\|_F^2
\end{equation*}
provided $L \geq C\mu^2 (K+N)\log^2 L$ for some numerical constant $C$ and conditioned on $E_1\bigcap E_2.$
\end{lemma}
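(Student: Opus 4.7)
The plan is to exploit the fact that $\bh\bx^* - \bh_0\bx_0^*$ decomposes naturally into a component in the tangent space $T$ (defined in~\eqref{def:T}) plus a low-energy correction in $T^\perp$, and then apply Lemma~\ref{lem:ripu} to the $T$-part and Lemma~\ref{lem:key} to the $T^\perp$-part.

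First, I would invoke the orthogonal decomposition~\eqref{eq:decomposition} from Lemma~\ref{lem:orth_decomp}, writing
\[
\bh\bx^* - \bh_0\bx_0^* = \BU + \tilde{\bh}\tilde{\bx}^*, \qquad \BU := (\alpha_1 \overline{\alpha_2} - 1)\bh_0\bx_0^* + \overline{\alpha_2}\tilde{\bh}\bx_0^* + \alpha_1\bh_0\tilde{\bx}^* \in T,
\]
so by Pythagoras $\|\bh\bx^* - \bh_0\bx_0^*\|_F^2 = \|\BU\|_F^2 + \|\tilde{\bh}\tilde{\bx}^*\|_F^2$. Since $(\bh,\bx)\in\Kint$ gives $\delta\leq\epsilon\leq 1/15$, the bound $\|\tilde{\bh}\|\|\tilde{\bx}\|\leq \frac{\delta^2}{2(1-\delta)} d_0 = \frac{\delta}{2(1-\delta)}\|\bh\bx^*-\bh_0\bx_0^*\|_F$ from Lemma~\ref{lem:orth_decomp} yields $\|\tilde{\bh}\tilde{\bx}^*\|_F \leq \tau\,\|\bh\bx^*-\bh_0\bx_0^*\|_F$ with a numerically small constant $\tau = \tfrac{\epsilon}{2(1-\epsilon)} \leq \tfrac{1}{28}$. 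Equivalently, $\|\BU\|_F^2 \geq (1-\tau^2)\|\bh\bx^*-\bh_0\bx_0^*\|_F^2$.

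Next, since $\BU\in T$, Lemma~\ref{lem:ripu} (on the event $E_1$) gives
\[
\tfrac{9}{10}\|\BU\|_F^2 \leq \|\A(\BU)\|^2 \leq \tfrac{11}{10}\|\BU\|_F^2.
\]
For the rank-one remainder, I apply Lemma~\ref{lem:key} (on $E_2$) with $(\bu,\bv)=(\tilde{\bh},\tilde{\bx})$. The key input here is the incoherence control from Lemma~\ref{lem:orth_decomp}, namely $\sqrt{L}\|\BB\tilde{\bh}\|_\infty \leq 6\mu\sqrt{d_0}$, which follows because $(\bh,\bx)\in\MN_d\cap\MN_\mu$. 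Substituting this into Lemma~\ref{lem:key} and using $L\geq C\mu^2(K+N)\log^2 L$ with $C$ sufficiently large, the two ``higher-order'' terms $2\|\tilde{\bh}\|\|\BB\tilde{\bh}\|_\infty\sqrt{2(K+N)\log L}$ and $8\|\BB\tilde{\bh}\|_\infty^2 (K+N)\log L$ become $o(1)\cdot d_0$, so that
\[
\|\A(\tilde{\bh}\tilde{\bx}^*)\|^2 \leq C_0\,\|\tilde{\bh}\|^2\|\tilde{\bx}\|^2 = C_0\,\|\tilde{\bh}\tilde{\bx}^*\|_F^2 \leq C_0\tau^2\,\|\bh\bx^*-\bh_0\bx_0^*\|_F^2
\]
for a modest absolute constant $C_0$ (close to $\tfrac{4}{3}$ up to the small correction terms).

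Finally, the bounds are combined through the triangle inequality $\|\A(\bh\bx^*-\bh_0\bx_0^*)\| \leq \|\A(\BU)\|+\|\A(\tilde{\bh}\tilde{\bx}^*)\|$ and its reverse form. Explicitly, the upper bound becomes $\bigl(\sqrt{11/10}+\sqrt{C_0}\,\tau\bigr)^2\|\bh\bx^*-\bh_0\bx_0^*\|_F^2$ and the lower bound becomes $\bigl(\sqrt{9/10(1-\tau^2)} - \sqrt{C_0}\,\tau\bigr)^2\|\bh\bx^*-\bh_0\bx_0^*\|_F^2$. With $\epsilon\leq 1/15$ (so $\tau\leq 1/28$), a direct numerical check shows these fit comfortably inside $[3/4,\,5/4]$. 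The main obstacle is verifying that the slack budget $5/4-11/10 = 3/20$ and $9/10-3/4 = 3/20$ allowed by Lemma~\ref{lem:ripu} is large enough to absorb the rank-one cross term and the $T^\perp$ contribution simultaneously; this is exactly why one needs $\epsilon\leq 1/15$ and a sufficiently large constant in $L\geq C\mu^2(K+N)\log^2 L$, which together make $\tau$ and the Lemma~\ref{lem:key} higher-order terms small enough.
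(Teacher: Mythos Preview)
Your overall strategy matches the paper's: decompose $\bh\bx^*-\bh_0\bx_0^*=\BU+\tilde{\bh}\tilde{\bx}^*$ with $\BU\in T$, apply Lemma~\ref{lem:ripu} to $\BU$ and Lemma~\ref{lem:key} to the rank-one remainder, then combine via the triangle inequality. That part is fine.

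The gap is in your bound on $\|\A(\tilde{\bh}\tilde{\bx}^*)\|$. You assert that because the two higher-order terms in Lemma~\ref{lem:key} are ``$o(1)\cdot d_0$'', one obtains $\|\A(\tilde{\bh}\tilde{\bx}^*)\|^2\leq C_0\|\tilde{\bh}\|^2\|\tilde{\bx}\|^2$ with $C_0$ close to $4/3$. This does not follow: the term $8\|\BB\tilde{\bh}\|_\infty^2(K+N)\log L$, after inserting $\sqrt{L}\|\BB\tilde{\bh}\|_\infty\leq 6\mu\sqrt{d_0}$ and $L\geq C\mu^2(K+N)\log^2 L$, is of order $d_0/(C\log L)$, which is \emph{not} dominated by $\|\tilde{\bh}\|^2$. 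Indeed $\|\tilde{\bh}\|^2$ is of order $\delta^2 d_0$ and $\delta$ may be arbitrarily small, so no constant $C_0$ can make $\|\A(\tilde{\bh}\tilde{\bx}^*)\|^2\leq C_0\|\tilde{\bh}\tilde{\bx}^*\|_F^2$ hold uniformly. Consequently your final numerics, which plug $\sqrt{C_0}\,\tau$ into the triangle inequality, are not justified.

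The fix (and this is what the paper does) is to compare directly with $\delta^2 d_0^2$, not with $\|\tilde{\bh}\tilde{\bx}^*\|_F^2$. Multiply the whole bracket in Lemma~\ref{lem:key} by $\|\tilde{\bx}\|^2\leq \frac{4\delta^2}{(1-\delta)^2}d_0$; then the three terms become of order $\delta^4 d_0^2$, $\frac{\delta^3}{\sqrt{C\log L}}d_0^2$, and $\frac{\delta^2}{C\log L}d_0^2$ respectively. With $\epsilon\leq 1/15$ and $C$ sufficiently large, each is a small fraction of $\delta^2 d_0^2$, giving $\|\A(\tilde{\bh}\tilde{\bx}^*)\|\leq (\text{small})\cdot \delta d_0$. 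Combining with $\|\A(\BU)\|\in[\sqrt{9/10},\sqrt{11/10}]\,\|\BU\|_F$ and $\|\BU\|_F\in[\delta-\frac{\delta^2}{2(1-\delta)},\,\delta+\frac{\delta^2}{2(1-\delta)}]\,d_0$ then yields the $[3/4,5/4]$ window. The point is that the slack needed to absorb the $T^\perp$ contribution comes from choosing $C$ large in the sample-size condition, not from $\tau$ being small.
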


\begin{proof}
Let $\delta : = \frac{\|\bh\bx^* - \bh_0\bx_0^*\|_F}{d_0} \leq \epsilon \leq \frac{1}{15}$, and 
\[
\bh\bx^* - \bh_0\bx_0^* := \BU + \BV.
\]
where 
\begin{equation}
\label{eq:UV}
\BU = (\alpha_1 \overline{\alpha_2} - 1)\vct{h}_0\vct{x}_0^* + \overline{\alpha_2} \tilde{\vct{h}} \vct{x}_0^* + \alpha_1 \vct{h}_0 \tilde{\vct{x}}^* \in T,\quad \BV = \tilde{\vct{h}} \tilde{\vct{x}}^* \in  T^\perp.
\end{equation}
By \prettyref{lem:orth_decomp}, we have $\|\BV\|_F \leq \frac{\delta^2}{2(1 - \delta)} d_0$ and hence 
\[
\left(\delta - \frac{\delta^2}{2(1 - \delta)}\right)d_0 \leq \|\BU \|_F \leq \left(\delta + \frac{\delta^2}{2(1 - \delta)}\right)d_0.
\]
Since $\BU \in T$, by \prettyref{lem:ripu}, we have
\begin{equation}
\label{eq:AU}
\sqrt{\frac{9}{10}}\left(\delta - \frac{\delta^2}{2(1 - \delta)}\right)d_0 \leq \|\A(\BU) \| \leq \sqrt{\frac{11}{10}}\left(\delta + \frac{\delta^2}{2(1 - \delta)}\right)d_0.
\end{equation}
By \prettyref{lem:key}, we have
\begin{equation}\label{ineq:AV}
\|\A(\BV)\|^2 \leq \left(\frac{4}{3}\|\tilde{\vct{h}}\|_2^2 + 2\|\tilde{\vct{h}}\|\|\mtx{B}\tilde{\vct{h}}\|_\infty \sqrt{2(K+N)\log L} + 8\|\mtx{B}\tilde{\vct{h}}\|_\infty^2(K+N) (\log L)\right) \|\tilde{\vct{x}}\|^2. 
\end{equation}
By \prettyref{lem:orth_decomp}, we have $\|\tilde{\vct{h}}\| \|\tilde{\vct{x}}\| \leq \frac{\delta^2}{2(1 - \delta)} d_0$, $\|\tilde{\vct{x}}\| \leq \frac{\delta}{1 - \delta}\|\vct{x}\| \leq \frac{2\delta}{1 - \delta} \sqrt{d_0}$, $\|\tilde{\vct{h}}\| \leq \frac{\delta}{1 - \delta}\|\vct{h}\| \leq \frac{2\delta}{1 - \delta} \sqrt{d_0}$, and $\sqrt{L}\|\mtx{B} \tilde{\vct{h}}\|_\infty \leq 6 \mu \sqrt{d_0}$.
By substituting all those estimations into~\eqref{ineq:AV}, it ends up with
\begin{equation}
\label{eq:AV}
\|\A(\BV)\|^2 \leq \frac{\delta^4}{3(1-\delta)^2} d_0^2 + C'\left(\frac{\delta^3}{\sqrt{C\log L}}  + \frac{\delta^2}{C\log L} \right)d_0^2,
\end{equation}
where $C'$ is a numerical constant and $L \geq C\mu^2 (K + N)\log^2 L$. Combining \prettyref{eq:AV} and \prettyref{eq:AU} together with $C$ sufficiently large, numerical computation gives
\[
\frac{3}{4}\delta d_0 \leq \|\A(\BU)\| -  \|\A(\BV)\| \leq \|\A(\BU + \BV)\| \leq \|\A(\BU)\| + \|\A(\BV)\| \leq \frac{5}{4}\delta d_0.
\]
for all $(\bh, \bx)\in \Kint$ given $\eps \leq \frac{1}{15}$.
\end{proof}


\subsection{Local regularity} 
\label{s:LRC}
In this subsection, we will prove Condition~\ref{cond:reg}. Throughout this section, we assume $E_1$ and $E_2$ holds where $E_1$ and $E_2$ are mentioned in Lemma~\ref{lem:ripu} and Lemma~\ref{lem:key}. 
For all $(\vct{h}, \vct{x}) \in \Kd \cap \Keps$, consider $\alpha_1, \alpha_2, \tilde{\vct{h}}$ and $\tilde{\vct{x}}$ defined in \prettyref{eq:orth} and let
\begin{equation*}
\Dh = \bh - \alpha \bh_0, \quad \Dx = \bx - \overline{\alpha}^{-1}\bx_0.
\end{equation*}
where 
\begin{equation*}
\alpha (\vct{h}, \vct{x})= 
\begin{cases} 
(1 - \delta_0)\alpha_1, & \text{~if~} \|\vct{h}\|_2 \geq \|\vct{x}\|_2 \\ 
\frac{1}{(1 - \delta_0)\overline{\alpha_2}}, & \text{~if~} \|\vct{h}\|_2 < \|\vct{x}\|_2\end{cases}
\end{equation*}
with $\delta_0 := \frac{\delta}{10}$. The particular form of  $\alpha(\bh, \bx)$ serves primarily for proving the local regularity condition of $G(\bh, \bx)$, which will be evident in Lemma~\ref{lem:regG}.
The following lemma gives bounds of $\Dx$ and $\Dh$.\\

\begin{lemma}
\label{lem:DxDh}
For all $(\vct{h}, \vct{x}) \in \Kd \cap \Keps$ with $\epsilon \leq \frac{1}{15}$, there holds $\|\Dh\|_2^2 \leq 6.1 \delta^2 d_0$, $\|\Dx\|_2^2 \leq 6.1 \delta^2 d_0$, and $\|\Dh\|_2^2 \|\Dx\|_2^2 \leq 8.4 \delta^4 d_0^2$. 
Moreover, if we assume $(\vct{h}, \vct{x}) \in \Kmu$ additionally, we have $ \sqrt{L}\|\mtx{B}(\Dh)\|_\infty \leq 6\mu\sqrt{d_0}$.
\end{lemma}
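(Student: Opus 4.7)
The plan is to use the orthogonal decompositions $\bh = \alpha_1 \bh_0 + \tilde{\bh}$, $\bx = \alpha_2 \bx_0 + \tilde{\bx}$ from \eqref{eq:orth} to express $\Dh$ and $\Dx$ in a way that separates the ``radial'' correction (the $\delta_0$ piece) from the perpendicular components $\tilde{\bh}, \tilde{\bx}$, and then to apply Lemma~\ref{lem:orth_decomp} term by term. I will treat the case $\|\bh\|\ge \|\bx\|$ (so $\alpha=(1-\delta_0)\alpha_1$); the other case is symmetric in the roles of $(\bh,\bh_0,\alpha_1)$ and $(\bx,\bx_0,\alpha_2)$.

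First I would substitute to obtain the explicit orthogonal decompositions
\begin{equation*}
\Dh = \delta_0 \alpha_1 \bh_0 + \tilde{\bh}, \qquad
\Dx = \Bigl(\alpha_2 - \tfrac{1}{(1-\delta_0)\overline{\alpha_1}}\Bigr)\bx_0 + \tilde{\bx},
\end{equation*}
so that by orthogonality
\begin{equation*}
\|\Dh\|^2 = \delta_0^2 |\alpha_1|^2 d_0 + \|\tilde{\bh}\|^2, \qquad
\|\Dx\|^2 = \bigl|\alpha_2 - \tfrac{1}{(1-\delta_0)\overline{\alpha_1}}\bigr|^2 d_0 + \|\tilde{\bx}\|^2.
\end{equation*}
For $\|\Dh\|^2$, the bounds $|\alpha_1|\le \|\bh\|/\sqrt{d_0}\le 2$ and $\|\tilde{\bh}\|\le \frac{\delta}{1-\delta}\|\bh\|\le \frac{2\delta}{1-\delta}\sqrt{d_0}$ from Lemma~\ref{lem:orth_decomp}, together with $\delta_0=\delta/10$ and $\delta\le \tfrac{1}{15}$, give $\|\Dh\|^2 \le \bigl(\tfrac{4}{100}+\tfrac{4}{(1-\delta)^2}\bigr)\delta^2 d_0 \le 6.1\,\delta^2 d_0$ after numerical evaluation.

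For $\|\Dx\|^2$, the main new ingredient is to control $\bigl|\alpha_2 - \tfrac{1}{(1-\delta_0)\overline{\alpha_1}}\bigr|$. I would write this as $\frac{|(1-\delta_0)\overline{\alpha_1}\alpha_2 - 1|}{(1-\delta_0)|\alpha_1|}$, bound the numerator by $|\alpha_1\overline{\alpha_2}-1|+\delta_0|\alpha_1\overline{\alpha_2}|\le \delta + \delta_0(1+\delta)$ using Lemma~\ref{lem:orth_decomp}, and derive a lower bound on $|\alpha_1|$ from the Pythagorean identity $|\alpha_1|^2 d_0 = \|\bh\|^2 - \|\tilde{\bh}\|^2$ combined with $\|\bh\|^2\ge (1-\delta)d_0$ (a consequence of $\|\bh\|\ge \|\bx\|$ and $\|\bh\|\|\bx\|\ge (1-\delta)d_0$), yielding $|\alpha_1|^2 \ge \tfrac{1-2\delta}{1-\delta}$. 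Combined with $\|\tilde{\bx}\|^2\le \tfrac{4\delta^2 d_0}{(1-\delta)^2}$, routine numerical evaluation at $\delta\le \tfrac{1}{15}$ then gives $\|\Dx\|^2 \le 6.1\,\delta^2 d_0$.

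For the product bound I would expand
\begin{equation*}
\|\Dh\|^2\|\Dx\|^2 = \bigl(\delta_0^2|\alpha_1|^2 d_0\bigr)\bigl(c^2 d_0\bigr) + \bigl(\delta_0^2|\alpha_1|^2 d_0\bigr)\|\tilde{\bx}\|^2 + \|\tilde{\bh}\|^2 \bigl(c^2 d_0\bigr) + \|\tilde{\bh}\|^2\|\tilde{\bx}\|^2,
\end{equation*}
where $c:=|\alpha_2 - \tfrac{1}{(1-\delta_0)\overline{\alpha_1}}|$. The first three terms are each $O(\delta^4 d_0^2)$ with small constants by the bounds above, and the crucial fourth term is controlled directly by the sharp product inequality $\|\tilde{\bh}\|\|\tilde{\bx}\| \le \tfrac{\delta^2 d_0}{2(1-\delta)}$ from Lemma~\ref{lem:orth_decomp}; summing the four numerical contributions at $\delta\le \tfrac{1}{15}$ gives $\|\Dh\|^2\|\Dx\|^2 \le 8.4\,\delta^4 d_0^2$. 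Finally, for the incoherence estimate I would avoid invoking the $6\mu\sqrt{d_0}$ bound on $\tilde{\bh}$ and instead use the triangle inequality on $\Dh = \bh - \alpha\bh_0$ directly: $\sqrt{L}\|\BB\Dh\|_\infty \le \sqrt{L}\|\BB\bh\|_\infty + |\alpha|\sqrt{L}\|\BB\bh_0\|_\infty \le 4\mu\sqrt{d_0} + |\alpha|\mu_h\sqrt{d_0}$. In Case~1, $|\alpha|=(1-\delta_0)|\alpha_1|\le 2$, and in Case~2, the symmetric lower bound $|\alpha_2|^2\ge \tfrac{1-2\delta}{1-\delta}$ gives $|\alpha|\le (1-\delta_0)^{-1}|\alpha_2|^{-1}\le 2$ for $\delta\le \tfrac{1}{15}$, so in either case $|\alpha|\mu_h\le 2\mu$ and the bound $6\mu\sqrt{d_0}$ follows.

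The only mildly delicate step is the fourth bullet above: naively multiplying the two individual bounds would yield a constant near $37$, not $8.4$. The hard part is therefore recognizing that one must not square the crude bound $\|\tilde{\bh}\|\|\tilde{\bx}\|\le \tfrac{4\delta^2 d_0}{(1-\delta)^2}$ coming from the two individual $\ell_2$ estimates, but rather invoke the sharper mixed bound $\|\tilde{\bh}\|\|\tilde{\bx}\|\le \tfrac{\delta^2 d_0}{2(1-\delta)}$; everything else is bookkeeping with the Pythagorean identities and the assumption $\delta\le \tfrac{1}{15}$.
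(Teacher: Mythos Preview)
Your proposal is correct and follows essentially the same route as the paper's proof: the same orthogonal decompositions of $\Dh$ and $\Dx$, the same case split on $\|\bh\|\gtrless\|\bx\|$, and the same crucial use of the sharp product bound $\|\tilde{\bh}\|\|\tilde{\bx}\|\le \frac{\delta^2 d_0}{2(1-\delta)}$ for the $8.4$ constant. The only cosmetic differences are that the paper bounds $\bigl|\alpha_2-\tfrac{1}{(1-\delta_0)\overline{\alpha_1}}\bigr|$ by factoring out $|\alpha_2|\le\sqrt{1+\delta}$ and using $|\alpha_1\overline{\alpha_2}|\ge 1-\delta$ (obtaining $c\le 1.22\,\delta$) rather than your lower bound on $|\alpha_1|$, and it organizes the product estimate as a three-term over-count $\|\tilde{\bh}\|^2\|\tilde{\bx}\|^2 + \delta_0^2|\alpha_1|^2 d_0\,\|\Dx\|^2 + c^2 d_0\,\|\Dh\|^2$ instead of your exact four-term expansion.
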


\begin{proof}
We first prove that $\|\Dh\|_2^2 \leq 6.1 \delta^2 d_0$, $\|\Dx\|_2^2 \leq 6.1 \delta^2 d_0$, and $\|\Dh\|_2^2 \|\Dx\|_2^2 \leq 8.4 \delta^4 d_0^2$:\\
Case 1: $\|\vct{h}\|_2 \geq \|\vct{x}\|_2$ and $\alpha = (1 - \delta_0) \alpha_1$. In this case, we have
\[
\Dh = \tilde{\vct{h}} + \delta_0 \alpha_1 \vct{h}_0,\quad \Dx = \vct{x} - \frac{1}{(1 - \delta_0)\overline{\alpha}_1} \vct{x}_0 = \left(\alpha_2 - \frac{1}{(1 - \delta_0)\overline{\alpha}_1}\right)\vct{x}_0 + \tilde{\vct{x}}. 
\]
First, notice that $\|\vct{h}\|_2^2 \leq 4d_0$ and $\|\alpha_1 \vct{h}_0\|_2^2\leq \|\vct{h}\|_2^2$. By \prettyref{lem:orth_decomp}, we have 
\begin{equation}\label{eq:Dh-est}
\|\Dh\|_2^2 = \|\tilde{\vct{h}}\|_2^2 + \delta_0^2\|\alpha_1 \vct{h}_0\|_2^2 \leq \left(\left(\frac{\delta}{1-\delta}\right)^2 + \delta_0^2\right)\|\vct{h}\|_2^2 \leq 4.7 \delta^2 d_0.
\end{equation}
Secondly, we estimate $\|\Delta \bx\|.$
Note that $\|\vct{h}\|_2 \|\vct{x}\|_2 \leq (1+\delta)d_0$. By $\|\vct{h}\|_2 \geq \|\vct{x}\|_2$, we have $\|\vct{x}\|_2 \leq \sqrt{(1+\delta)d_0}$. By $|\alpha_2| \|\vct{x}_0\|_2 \leq \|\vct{x}\|_2$, we get $|\alpha_2| \leq \sqrt{1+\delta}$. By \prettyref{lem:orth_decomp}, we have $|\overline{\alpha_1} \alpha_2 -1|=|\alpha_1 \overline{\alpha_2} -1|\leq \delta$, so 
\[
\left|\alpha_2 - \frac{1}{(1 - \delta_0)\overline{\alpha_1}}\right| = |\alpha_2| \left|\frac{(1 - \delta_0)(\overline{\alpha_1} \alpha_2- 1) - \delta_0}{(1 - \delta_0)\overline{\alpha_1} \alpha_2}\right|  \leq \frac{\delta \sqrt{1+ \delta}}{1 - \delta} + \frac{\sqrt{1+\delta}\delta_0}{(1 - \delta_0)(1 - \delta)}\leq 1.22 \delta
\]
where $|\overline{\alpha_1}\alpha_2| \leq \frac{1}{1 - \delta}.$
Moreover, by \prettyref{lem:orth_decomp}, we have $\|\tilde{\vct{x}}\|_2 \leq \frac{\delta}{1-\delta}\|\vct{x}\|_2 \leq \frac{2\delta}{1-\delta} \sqrt{d_0}$. Then we have
\begin{equation}\label{eq:Dx-est}
\|\Dx\|_2^2 = \left|\alpha_2 - \frac{1}{(1 - \delta_0)\overline{\alpha}_1}\right|^2 d_0 + \|\tilde{\vct{x}}\|_2^2 \leq \left(1.22^2+ \frac{4}{(1 - \delta)^2}\right) \delta^2  d_0 \leq 6.1 \delta^2 d_0.
\end{equation}
Finally, \prettyref{lem:orth_decomp} gives $\|\tilde{\vct{h}}\|_2 \|\tilde{\vct{x}}\|_2 \leq \frac{\delta^2}{2(1 - \delta)} d_0$ and  $|\alpha_1| \leq 2$. Combining~\eqref{eq:Dh-est} and~\eqref{eq:Dx-est}, we have
\begin{align*}
\|\Dh\|_2^2 \|\Dx\|_2^2 &\leq \|\tilde{\bh}\|_2^2\|\tilde{\bx}\|_2^2 + \delta_0^2 |\alpha_1|^2 \|\vct{h}_0\|_2^2 \|\Dx\|_2^2 + \left|\alpha_2 - \frac{1}{(1 - \delta_0)\overline{\alpha}_1}\right|^2 \|\vct{x}_0\|_2^2 \|\Dh\|_2^2 
\\
& \leq \left(\frac{\delta^4}{4(1 - \delta)^2} d_0^2 + \delta_0^2(4d_0) (6.1 \delta^2 d_0) + (1.22 \delta)^2 d_0 (4.7 \delta^2 d_0 )\right) \leq 8.4 \delta^4 d_0^2.
\end{align*}
where $\|\tilde{\bx}\| \leq \|\Dx\|$. 
~\\
Case 2: $\|\vct{h}\|_2 < \|\vct{x}\|_2$ and $\alpha = \frac{1}{(1-\delta_0)\overline{\alpha_2}}$. In this case, we have
\[
\Dh =  \left(\alpha_1 - \frac{1}{(1 - \delta_0)\overline{\alpha_2}}\right)\vct{h}_0 + \tilde{\vct{h}},\quad \Dx = \tilde{\vct{x}} + \delta_0 \alpha_2 \vct{x}_0. 
\]
By the symmetry of $\Kd \cap \Keps$, we can prove $|\alpha_1| \leq \sqrt{1+\delta}$,
\[
\left|\alpha_1 - \frac{1}{(1 - \delta_0)\overline{\alpha_2}}\right|  \leq \frac{\delta \sqrt{1+ \delta}}{1 - \delta} + \frac{\sqrt{1+\delta}\delta_0}{(1 - \delta_0)(1 - \delta)} \leq 1.22 \delta.
\]
Moreover, we can prove $\|\Dh\|_2^2 \leq 6.1 \delta^2 d_0$, $\|\Dx\|_2^2 \leq 4.7 \delta^2 d_0$ and $\|\Dh\|_2^2 \|\Dx\|_2^2 \leq 8.4 \delta^4 d^2$.\\
~\\
Next, under the additional assumption $(\vct{h}, \vct{x}) \in \Kmu$, we now prove $\sqrt{L}\|\mtx{B}(\Dh)\|_\infty \leq 6\mu\sqrt{d_0}$:\\
Case 1: $\|\vct{h}\|_2 \geq \|\vct{x}\|_2$ and $\alpha = (1 - \delta_0) \alpha_1$. By \prettyref{lem:orth_decomp} gives $|\alpha_1| \leq 2$, which implies
\begin{align*}
\sqrt{L}\|\mtx{B}(\Dh)\|_\infty &\leq \sqrt{L}\|\mtx{B}\vct{h}\|_\infty + (1 - \delta_0) |\alpha_1|\sqrt{L}\|\mtx{B}\vct{h}_0\|_\infty 
\\
&\leq 4\mu\sqrt{d_0} + 2(1 - \delta_0)\mu_h \sqrt{d}_0 \leq 6\mu\sqrt{d_0}.
\end{align*}
Case 2: $\|\vct{h}\|_2 < \|\vct{x}\|_2$ and $\alpha = \frac{1}{(1-\delta_0)\overline{\alpha_2}}$. Notice that in this case we have $|\alpha_1| \leq \sqrt{1+\delta}$, so $\frac{1}{|(1 -\delta_0)\overline{\alpha_2}|}= \frac{|\alpha_1|}{|(1 -\delta_0)\overline{\alpha_2} \alpha_1|} \leq \frac{\sqrt{1+\delta}}{|1 - \delta_0||1 - \delta|}$. Therefore
\begin{align*}
\sqrt{L}\|\mtx{B}(\Dh)\|_\infty &\leq \sqrt{L}\|\mtx{B}(\vct{h})\|_\infty + \frac{1}{(1 - \delta_0) |\overline{\alpha_2}|} \sqrt{L}\|\mtx{B}(\vct{h}_0)\|_\infty
\\
&\leq 4\mu\sqrt{d_0} + \frac{\sqrt{1+\delta}}{|1 - \delta_0||1 - \delta|}\mu_h \sqrt{d}_0 \leq 5.2 \mu\sqrt{d_0}.
\end{align*}
\end{proof}

\begin{lemma}\label{lem:regF}
For any $(\bh, \bx) \in \Kint$ with $\epsilon \leq \frac{1}{15}$,  the  following inequality holds uniformly:
\begin{equation*}
\Real\lp\lag \nabla F_{\bh}, \Dh \rag + \lag \nabla F_{\bx}, \Dx \rag\rp \geq \frac{\delta^2 d_0^2}{8} - 2\delta d_0 \|\A^*(\be)\|,
\end{equation*}
provided 
$L \geq C\mu^2 (K+N)\log^2 L$  for some numerical constant $C$.
\end{lemma}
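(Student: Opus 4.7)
The strategy is to expand the Wirtinger inner product $\Real(\langle \nabla F_{\bh}, \Dh\rangle + \langle \nabla F_{\bx}, \Dx\rangle)$, isolate the ``useful'' leading term of order $\delta^2 d_0^2$ coming from the local RIP (Lemma~\ref{lem:rip}), and absorb everything else into lower order error terms controlled by Lemma~\ref{lem:key} and Lemma~\ref{lem:DxDh}.

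First, using the adjoint property of $\A$ and the formulas \eqref{eq:WFh}--\eqref{eq:WFx}, I would rewrite
\begin{equation*}
\langle \nabla F_{\bh}, \Dh\rangle + \langle \nabla F_{\bx}, \Dx\rangle = \langle \A(\Dh\,\bx^* + \bh\,\Dx^*), \A(\bh\bx^* - \bh_0\bx_0^*) - \be\rangle.
\end{equation*}
The next step is the algebraic identity $\Dh\,\bx^* + \bh\,\Dx^* = (\bh\bx^* - \bh_0\bx_0^*) + \Dh\,\Dx^*$, which follows from $\bh_0\bx_0^* = (\alpha\bh_0)(\bar\alpha^{-1}\bx_0)^*$. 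Letting $\BM := \bh\bx^* - \bh_0\bx_0^*$ and $\bR := \A(\BM) - \be$, the inner product decomposes as
\begin{equation*}
\|\A(\BM)\|^2 - \langle \A(\BM), \be\rangle + \langle \A(\Dh\,\Dx^*), \bR\rangle.
\end{equation*}

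Next I would bound each piece after taking real parts. The Local RIP (Lemma~\ref{lem:rip}) gives $\|\A(\BM)\|^2 \geq \tfrac34 \delta^2 d_0^2$ and $\|\A(\BM)\| \leq \sqrt{5/4}\,\delta d_0$. For the linear noise term, duality of $\|\cdot\|$ and $\|\cdot\|_*$ together with $\rank(\BM)\leq 2$ yields $|\Real\langle\A(\BM),\be\rangle| = |\Real\langle\BM,\A^*(\be)\rangle| \leq \sqrt{2}\,\delta d_0\,\|\A^*(\be)\|$. The genuinely delicate piece is the cross term $\langle\A(\Dh\,\Dx^*),\bR\rangle$. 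For it I would invoke Lemma~\ref{lem:key} together with the bounds from Lemma~\ref{lem:DxDh} ($\|\Dh\|^2\|\Dx\|^2 \leq 8.4\delta^4 d_0^2$, $\|\Dh\|^2\leq 6.1\delta^2 d_0$, $\|\Dx\|^2\leq 6.1\delta^2 d_0$, and $\sqrt{L}\|\BB\Dh\|_\infty\leq 6\mu\sqrt{d_0}$) to show
\begin{equation*}
\|\A(\Dh\,\Dx^*)\|^2 \leq \tfrac{32}{3}\delta^4 d_0^2 + O\!\left(\mu\sqrt{\tfrac{(K+N)\log L}{L}}\right)\delta^3 d_0^2 + O\!\left(\tfrac{\mu^2(K+N)\log L}{L}\right)\delta^2 d_0^2.
\end{equation*}
Under the hypothesis $L\geq C\mu^2(K+N)\log^2L$ with $C$ large, the last two terms are absorbed into a small multiple of $\delta^2 d_0^2$, so $\|\A(\Dh\,\Dx^*)\| \leq c_0\,\delta d_0$ with $c_0$ arbitrarily small (depending on $C$). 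Then Cauchy--Schwarz gives $|\Real\langle\A(\Dh\,\Dx^*),\A(\BM)\rangle|\leq \sqrt{5/4}\,c_0\,\delta^2 d_0^2$, and $|\Real\langle\A(\Dh\,\Dx^*),\be\rangle|=|\Real\langle\Dh\,\Dx^*,\A^*(\be)\rangle|\leq \|\Dh\|\|\Dx\|\,\|\A^*(\be)\|\leq\sqrt{8.4}\,\delta^2 d_0\|\A^*(\be)\|$.

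Combining these estimates,
\begin{equation*}
\Real(\langle\nabla F_{\bh},\Dh\rangle + \langle \nabla F_{\bx},\Dx\rangle) \geq \left(\tfrac34 - \sqrt{5/4}\,c_0\right)\delta^2 d_0^2 - \left(\sqrt{2} + \sqrt{8.4}\,\delta\right)\delta d_0\|\A^*(\be)\|.
\end{equation*}
Since $\delta\leq\epsilon\leq\tfrac{1}{15}$, the coefficient of $\delta d_0\|\A^*(\be)\|$ is at most $2$, and by choosing $C$ large enough in the sample complexity bound the coefficient of $\delta^2 d_0^2$ exceeds $\tfrac18$, which is exactly the claimed inequality. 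The main obstacle is the third-order cross term $\langle\A(\Dh\,\Dx^*),\A(\BM)\rangle$: unlike $\A(\BM)$ (which lies essentially in $T$ and enjoys the clean isometry of Lemma~\ref{lem:ripu}), the matrix $\Dh\,\Dx^*$ is a generic rank-one matrix, so one really needs the refined off-$T$ bound in Lemma~\ref{lem:key} driven by the incoherence of $\Dh$ inherited from $(\bh,\bx)\in\Kmu$ via Lemma~\ref{lem:DxDh}.
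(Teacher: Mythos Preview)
Your argument is correct and in fact a bit more streamlined than the paper's. The paper does not invoke the already-proven Local RIP (Lemma~\ref{lem:rip}) on $\BM=\bh\bx^*-\bh_0\bx_0^*$; instead it writes $\BM=\BU+\BV$ with $\BU=\alpha\bh_0\Dx^*+\overline{\alpha}^{-1}\Dh\bx_0^*\in T$ and $\BV=\Dh\Dx^*$, so that the signal part becomes $\langle \A(\BU+\BV),\A(\BU+2\BV)\rangle$, and then lower-bounds this by $(\|\A(\BU)\|-\|\A(\BV)\|)(\|\A(\BU)\|-2\|\A(\BV)\|)$ using Lemma~\ref{lem:ripu} for $\|\A(\BU)\|\geq\tfrac34\delta d_0$ and Lemma~\ref{lem:key} for $\|\A(\BV)\|\leq\tfrac14\delta d_0$, yielding exactly $\tfrac18\delta^2 d_0^2$. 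Your route---using Lemma~\ref{lem:rip} directly to get $\|\A(\BM)\|^2\geq\tfrac34\delta^2 d_0^2$ and then subtracting the single cross term $\Real\langle\A(\Dh\Dx^*),\A(\BM)\rangle$---is shorter and actually produces a larger leading coefficient (close to $\tfrac12$); the paper's decomposition has the virtue of being self-contained from Lemmas~\ref{lem:ripu} and~\ref{lem:key} without appealing to the assembled Local RIP as a black box.

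One small inaccuracy worth fixing: you write that $c_0$ can be made ``arbitrarily small depending on $C$'', but the first contribution $\tfrac43\|\Dh\|^2\|\Dx\|^2\leq 11.2\,\delta^4 d_0^2$ does not shrink with $C$. With $\delta\leq\tfrac{1}{15}$ this term alone forces $c_0^2\gtrsim 11.2\,\delta^2\approx 0.05$, i.e.\ $c_0\approx 0.22$. That is still comfortably below the threshold $\tfrac{3/4-1/8}{\sqrt{5/4}}\approx 0.56$ needed for your final inequality, so the conclusion stands; just replace ``arbitrarily small'' by the concrete bound $c_0\leq\tfrac14$ (matching the paper's $\|\A(\BV)\|\leq\tfrac{\delta d_0}{4}$) and the argument is clean.
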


\begin{proof}
In this section, define $\BU$ and $\BV$ as
\begin{equation}
\label{eq:UV2}
\BU = \alpha\bh_0\Dx^* + \overline{\alpha}^{-1}\Dh\bx_0^* \in T, \quad \BV = \Dh\Dx^*,
\end{equation}
which gives
\begin{equation*}
\bh\bx^* - \bh_0\bx_0^* = \BU + \BV.
\end{equation*}
Notice that generally $\BV \in T^\perp$ does not hold.
Recall that
\begin{equation*}
\nabla F_{\bh} = \A^*(\A(\bh\bx^* - \bh_0\bx_0^*) - \be) \bx, \quad \nabla F_{\bx} = [\A^*(\A(\bh\bx^* - \bh_0\bx_0^*) - \be)]^* \bh.
\end{equation*}
Define $I_0 := \lag \nabla_{\vct{h}} F, \Dh \rag + \overline{\lag \nabla_{\bx} F, \Dx \rag}$ and we have 
$\Real(I_0)=\Real\lp\lag \nabla_{\vct{h}} F, \Dh \rag + \lag \nabla_{\bx} F, \Dx \rag\rp$.  Since 
\begin{eqnarray*}
I_0 
& = &  \lag  \A^*(\A(\bh\bx^* - \bh_0\bx_0^*) - \be), \Dh \bx^* + \bh \Dx^* \rag \\
& = & \lag \A(\bh\bx^* - \bh_0\bx_0^*) - \be, \A(\bh\bx^* - \bh_0\bx_0^* + \Dh\Dx^*) \rag
\\ 
& = & \lag \A(\BU + \BV), \A(\BU + 2\BV)\rag  - \lag \A^*(\be), \BU + 2\BV\rag: = I_{01} + I_{02}
\end{eqnarray*}
where $\Dh\bx^* + \bh \Dx^* = \bh\bx^* - \bh_0\bx_0^* + \Dh\Dx^*.$
By the Cauchy-Schwarz inequality, $\Real(I_{01})$ has the lower bound
\begin{eqnarray}
\label{eq:I_0}
\Real(I_{01})& \geq & \|\A(\BU)\|^2 - 3\|\A(\BU)\|\|\A(\BV)\| + 2\|\A(\BV)\|^2 \nonumber
\\
& \geq & (\|\A(\BU)\| - \|\A(\BV)\|) (\|\A(\BU)\| - 2\|\A(\BV)\|). 
\end{eqnarray}
In the following, we will give an upper bound for $\|\A(\BV)\|$ and a lower bound for $\|\A(\BU)\|$.

Upper bound for $\|\A(\BV)\|$: By \prettyref{lem:DxDh} and \prettyref{lem:key}, we have
\begin{align*}
\|\A(\BV)\|^2 &\leq \left(\frac{4}{3}\|\Dh\|^2 + 2\|\Dh\| \|\mtx{B}\Dh\|_\infty \sqrt{2(K+N)\log L} + 8\|\mtx{B}\Dh\|_\infty^2 (K+N) (\log L)\right)\|\Dx\|^2
\\
&\leq \left(11.2 \delta^2 + C_0\left(\delta \mu \sqrt{\frac{1}{L} (K+N)(\log L)}  +\frac{1}{L}\mu^2 (K+N)(\log L)\right) \right) \delta^2 d_0^2
\\
& \leq \left(11.2\delta^2 + C_0\left( \frac{\delta}{\sqrt{ C\log L}} + \frac{1}{C\log L}\right)\right) \delta^2 d_0^2
\end{align*}
for some numerical constant $C_0$. Then by $\delta  \leq \eps \leq  \frac{1}{15}$ and letting $L \geq C\mu^2 (K + N)\log^2 L$ for a sufficiently large numerical constant $C$, there holds
\begin{equation}\label{eq:AVdelta}
\|\A(\BV)\|^2 <\frac{\delta^2 d_0^2}{16} \implies \|\A(\BV)\| \leq \frac{\delta d_0}{4}.
\end{equation}

Lower bound for $\|\A(\BU)\|$: By \prettyref{lem:DxDh}, we have
\[
\|\BV\|_F = \|\Dh\|_2 \|\Dx\|_2 \leq 2.9 \delta^2 d_0,
\]
and therefore
\[
\|\BU\|_F \geq d_0 \delta - 2.9\delta^2 d_0 \geq \frac{4}{5} d_0 \delta.
\]
if $\epsilon \leq \frac{1}{15}$. Since $\BU \in T$, by \prettyref{lem:ripu}, there holds
\begin{equation}\label{eq:AUdelta}
\|\A(\BU)\| \geq \sqrt{\frac{9}{10}}\|\BU\|_F \geq \frac{3}{4} d_0 \delta.
\end{equation}
With the upper bound of $\A(\BV)$ in~\eqref{eq:AVdelta}, the lower bound of $\A(\BU)$ in \eqref{eq:AUdelta}, and~\eqref{eq:I_0}, we finally arrive at
\[
\Real(I_{01}) \geq \frac{\delta^2 d_0^2}{8}.
\]

Now let us give a lower bound for $\Real(I_{02})$, 
\begin{equation*}
\Real(I_{02}) \geq  - \|\A^*(\be)\| \|\BU + 2\BV\|_* \geq  - \sqrt{2}\|\A^*(\be)\| \|\BU + 2\BV\|_F  \geq -2\delta d_0 \|\A^*(\be)\|
\end{equation*}
where $\|\cdot\|$ and $\|\cdot\|_*$ are a pair of dual norms and 
\begin{equation*}
\|\BU + 2\BV\|_F \leq \|\BU + \BV\|_F + \|\BV\|_F \leq \delta d_0+ 2.9\delta^2 d_0 \leq 1.2\delta d_0
\end{equation*}
if $\delta \leq \eps \leq \frac{1}{15}.$
Combining the estimation of $\Real(I_{01})$ and $\Real(I_{02})$ above leads to 
\begin{equation*}
\Real( \lag \nabla F_{\bh}, \Dh\rag + \lag \nabla F_{\bx}, \Dx\rag 
\geq  \frac{\delta^2 d_0^2}{8} - 2\delta d_0 \|\A^*(\be)\|
\end{equation*}
as we desired.
\end{proof}

\begin{lemma}\label{lem:regG}
For any $(\bh, \bx) \in \Kd \bigcap \Keps$ with $\epsilon \leq \frac{1}{15}$ and $\frac{9}{10}d_0 \leq d\leq \frac{11}{10}d_0$,   the following inequality holds uniformly
\begin{equation}
\label{eq:G_regularity}
\Real\lp\lag \nabla G_{\bh}, \Dh \rag + \lag \nabla G_{\bx}, \Dx \rag\rp  \geq {\frac{\delta }{5}}\sqrt{ \rho G(\bh, \bx)},
\end{equation}
where $\rho \geq d^2 + 2\|\be\|^2.$
\end{lemma}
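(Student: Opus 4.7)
The plan is to exploit two structural features of the regularizer: the additivity of $G$ and the identity $G_0'(z) = 2\sqrt{G_0(z)}$. Introduce the shorthand $f_0(\bh) := \|\bh\|^2/(2d)$, $f_1(\bx) := \|\bx\|^2/(2d)$, and $f_{l+1}(\bh):= L|\bb_l^*\bh|^2/(8d\mu^2)$ for $l=1,\ldots,L$, so that
\[
G(\bh,\bx) \;=\; \rho\sum_{j} G_0(f_j), \qquad \nabla G \;=\; 2\rho\sum_j \sqrt{G_0(f_j)}\,\nabla f_j.
\]
If I can establish the pointwise inequality $\Real\lag\nabla f_j,\Delta\bz\rag \geq \delta_0 f_j$ for every index $j$ with $G_0(f_j)>0$, where $\delta_0=\delta/10$ and $\Delta\bz\in\{\Dh,\Dx\}$ is the direction matching the variable of $f_j$, then termwise multiplication together with the elementary bound $\sum_j\sqrt{a_j}\geq\sqrt{\sum_j a_j}$ for $a_j\geq 0$ yields
\[
\Real\lp\lag\nabla G_\bh,\Dh\rag+\lag\nabla G_\bx,\Dx\rag\rp \;\geq\; 2\delta_0\,\rho\sum_j\sqrt{G_0(f_j)}\,f_j \;\geq\; 2\delta_0\sqrt{\rho G(\bh,\bx)}.
\]
Since $2\delta_0 = \delta/5$, this is exactly~\eqref{eq:G_regularity}.

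The remaining work is the pointwise bound, which I carry out by splitting along the definition of $\alpha$. In Case~1 ($\|\bh\|\geq\|\bx\|$, $\alpha=(1-\delta_0)\alpha_1$) the estimate $\|\bx\|^2\leq\|\bh\|\|\bx\|\leq(1+\epsilon)d_0<2d$ forces $G_0(f_1)=0$, killing the $j=1$ term. For $j=0$, using $\bh_0^*\bh=\alpha_1 d_0$ and the orthogonal decomposition $\bh=\alpha_1\bh_0+\tilde\bh$,
\[
\Real\lag\bh,\Dh\rag \;=\; \|\bh\|^2-(1-\delta_0)|\alpha_1|^2 d_0 \;=\; \delta_0\|\bh\|^2+(1-\delta_0)\|\tilde\bh\|^2 \;\geq\; \delta_0\|\bh\|^2,
\]
whence $\Real\lag\nabla f_0,\Dh\rag\geq\delta_0 f_0$. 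For $j=l+1$ I apply the polarization identity
\[
2\Real(\bh^*\bb_l\bb_l^*\Dh) \;=\; |\bb_l^*\bh|^2+|\bb_l^*\Dh|^2-|\alpha|^2|\bb_l^*\bh_0|^2,
\]
bound $|\alpha|\leq 2$ (since $|\alpha_1|\leq 2$ by $\|\bh\|\leq 2\sqrt{d_0}$) and invoke the incoherence $|\bb_l^*\bh_0|\leq\mu_h\sqrt{d_0/L}\leq\mu\sqrt{d_0/L}$, giving $|\alpha|^2|\bb_l^*\bh_0|^2\leq 4\mu^2 d_0/L$; meanwhile $f_{l+1}>1$ forces $|\bb_l^*\bh|^2>8d\mu^2/L$, and $d\geq 9d_0/10$ makes the negative term at most $5/(9f_{l+1})$ of the positive one. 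A few lines of arithmetic then give $\Real\lag\nabla f_{l+1},\Dh\rag\geq (2/9) f_{l+1}\geq\delta_0 f_{l+1}$ since $\delta_0\leq 1/150$.

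Case~2 ($\|\bh\|<\|\bx\|$, $\alpha=1/((1-\delta_0)\overline{\alpha_2})$) is entirely symmetric: $G_0(f_0)$ vanishes by the same size argument, $\Real\lag\nabla f_1,\Dx\rag\geq\delta_0 f_1$ follows by the analogous direct computation, and for the incoherence terms Lemma~\ref{lem:orth_decomp} supplies $|\alpha_1|\leq\sqrt{1+\delta}$ and $|\alpha_1\overline{\alpha_2}|\geq 1-\delta$, hence $|\alpha|^2\leq(1+\delta)/((1-\delta_0)^2(1-\delta)^2)\leq 2$, which plugs into the same polarization argument. The main obstacle is purely organizational: the case-split couples the choice of $\alpha$ to which of the three families of constraints can be violated, and one must verify that the numerical constants built from $\delta\leq 1/15$, $\delta_0=\delta/10$, and $d\in[0.9d_0,1.1d_0]$ all line up; no new structural idea beyond Lemma~\ref{lem:orth_decomp} and the specific form of $G_0$ is required.
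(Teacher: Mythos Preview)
Your proposal is correct and follows essentially the same route as the paper: the same case split on $\|\bh\|$ versus $\|\bx\|$, the same observation that the ``wrong-side'' norm penalty is inactive, the same computation $\lag\bh,\Dh\rag=\delta_0\|\bh\|^2+(1-\delta_0)\|\tilde\bh\|^2$ for the active norm term, and the same assembly via $G_0'=2\sqrt{G_0}$ and $\sum_j\sqrt{a_j}\geq\sqrt{\sum_j a_j}$. The only cosmetic differences are that you package everything as a single pointwise inequality $\Real\lag\nabla f_j,\Delta\bz\rag\geq\delta_0 f_j$ (the paper tracks sharper but unnecessary constants $d/2$, $d/4$ for the incoherence terms separately), and that for those incoherence terms you use the polarization identity $2\Real(\bar a(a-b))=|a|^2+|a-b|^2-|b|^2$ where the paper uses the slightly stronger $\Real(\bar a(a-b))\geq|a|(|a|-|b|)$; both bounds comfortably suffice.
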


\begin{proof}
Recall that $G_0'(z) = 2\max\{z - 1, 0\} = 2\sqrt{G_0(z)}$. Using the Wirtinger derivative of $G$ in~\eqref{eq:WGh} and~\eqref{eq:WGx}, we have
\[
\lag \nabla G_{\bh}, \Dh \rag + \lag \nabla G_{\bx}, \Dx \rag :=  \frac{\rho}{2d} \left(H_1 +H_2 +H_3\right),
\]
where 
\begin{equation}
H_1 = G'_0\left( \frac{\|\bh\|^2}{2d}\right) \lag  \bh, \Dh \rag, \quad H_2 =  G'_0\left( \frac{\|\bx\|^2}{2d}\right)  \lag\bx, \Dx \rag,
\end{equation}
and
\begin{equation*}
H_3 =  \frac{L}{4\mu^2} \sum_{l=1}^L G'_0\left(\frac{L|\bb_l^*\bh|^2}{8d\mu^2}\right) \lag \bb_l\bb_l^*\bh, \Dh\rag.
\end{equation*}
We will give lower bounds for $H_1$, $H_2$ and $H_3$ for two cases.\\

\paragraph{Case 1} $\|\vct{h}\|_2 \geq \|\vct{x}\|_2$ and $\alpha = (1 - \delta_0) \alpha_1$. 

\paragraph{Lower bound of $H_1$:} Notice that
\[
\Dh = \bh - \alpha \bh_0 = \bh - (1 - \delta_0)\alpha_1 \bh_0 = \bh - (1-\delta_0)(\bh - \tilde{\bh}) = \delta_0 \bh + (1 - \delta_0) \tilde{\bh}.
\]
We have $\lag \bh, \Dh \rag =  \delta_0 \|\bh\|_2^2 + (1 - \delta_0) \lag \bh, \tilde{\bh}\rag = \delta_0 \|\bh\|_2^2 + (1 - \delta_0) \|\tilde{\bh}\|_2^2 \geq \delta_0 \|\bh\|_2^2$, which implies that $H_1 \geq G'_0\left( \frac{\|\bh\|^2}{2d}\right) \frac{\delta}{10} \|\bh\|_2^2$. We claim that 
\begin{equation}
\label{eq:H1_lower}
H_1 \geq \frac{\delta d}{5}G'_0\left( \frac{\|\bh\|^2}{2d}\right).
\end{equation}
 In fact, if $\|\bh\|_2^2 \leq 2d$, we get $H_1 = 0 =  \frac{\delta d}{5}G'_0\left( \frac{\|\bh\|^2}{2d}\right)$; If $\|\bh\|_2^2 > 2d$, we get \eqref{eq:H1_lower} straightforwardly.

\paragraph{Lower bound of $H_2$:} The assumption $\|\vct{h}\|_2 \geq \|\vct{x}\|_2$ gives
\[
\|\vct{x}\|_2^2 \leq \|\vct{x}\|_2 \|\vct{h}\|_2 \leq (1+\delta) d_0 \leq 1.1(1 + \delta)d_0 < 2d,
\] 
which implies that
\[
H_2 =  G'_0\left( \frac{\|\bx\|^2}{2d}\right)  \lag\bx, \Dx \rag = 0 = \frac{\delta d}{5}G'_0\left( \frac{\|\bx\|^2}{2d}\right).
\] 

\paragraph{Lower bound of $H_3$:} When $L|\bb_l^*\bh|^2 \leq 8d\mu^2$,
\[
\frac{L}{4\mu^2} G'_0\left(\frac{L|\bb_l^*\bh|^2}{8d\mu^2}\right) \lag \bb_l\bb_l^*\bh, \Dh\rag = 0 = \frac{d}{2} G'_0\left(\frac{L|\bb_l^*\bh|^2}{8d\mu^2}\right).
\]
When $L|\bb_l^*\bh|^2 > 8d\mu^2$, by \prettyref{lem:orth_decomp}, there holds $|\alpha_1| \leq 2$. Then by $\mu_h \leq \mu$, we have 
\begin{eqnarray*}
\Real(\lag \bb_l\bb_l^*\bh, \Dh\rag) & = & \Real(|\bb_l^*\bh|^2- \alpha\lag \bb_l^*\bh,\bb_l^*\bh_0\rag)\\
& \geq & |\bb_l^*\bh| (|\bb_l^*\bh| - (1-\delta_0)|\alpha_1| |\bb_l^*\bh_0|)
\\
&\geq & |\bb_l^*\bh| (|\bb_l^*\bh| - 2\mu\sqrt{d_0/L})  \\
& \geq & \sqrt{\frac{8d\mu^2}{L}} \left(\sqrt{\frac{8d\mu^2}{L}} - 2\mu\sqrt{\frac{10d}{9L}}\right) \geq \frac{2d\mu^2}{L},
\end{eqnarray*}
where $(1 - \delta_0)|\alpha_1| |\bb_l^*\bh_0| \leq \frac{2\mu_h\sqrt{d_0}}{\sqrt{L}} \leq \frac{2\mu\sqrt{10d}}{\sqrt{9L}}.$
This implies that 
\[
\frac{L}{4\mu^2} G'_0\left(\frac{L|\bb_l^*\bh|^2}{8d\mu^2}\right)\Real(\lag \bb_l\bb_l^*\bh, \Dh\rag) \geq \frac{d}{2} G'_0\left(\frac{L|\bb_l^*\bh|^2}{8d\mu^2}\right).
\]
So we always have
\[
\Real(H_3) \geq \sum_{l=1}^L \frac{d}{2} G'_0\left(\frac{L|\bb_l^*\bh|^2}{8d\mu^2}\right).
\]

\paragraph{Case 2:} $\|\vct{h}\|_2 < \|\vct{x}\|_2$ and $\alpha = \frac{1}{(1-\delta_0)\overline{\alpha_2}}$. 

\paragraph{Lower bound of $H_1$:} The assumption $\|\vct{h}\|_2 < \|\vct{x}\|_2$ gives
\[
\|\vct{h}\|_2^2 \leq \|\vct{x}\|_2 \|\vct{h}\|_2 \leq (1+\delta) d_0 < 2d,
\] 
which implies that
\[
H_1 =  G'_0\left( \frac{\|\bh\|^2}{2d}\right)  \lag\bh, \Dh \rag = 0 = \frac{\delta d}{5}G'_0\left( \frac{\|\bh\|^2}{2d}\right).
\] 

\paragraph{Lower bound of $H_2$:}
Notice that
\[
\Dx = \bx - \overline{\alpha}^{-1} \bx_0 = \bx - (1 - \delta_0)\alpha_2 \bx_0 = \bx - (1-\delta_0)(\bx - \tilde{\bx}) = \delta_0 \bx + (1 - \delta_0) \tilde{\bx}.
\]
We have $\lag \bx, \Dx \rag =  \delta_0 \|\bx\|_2^2 + (1 - \delta_0) \lag \bx, \tilde{\bx}\rag = \delta_0 \|\bx\|_2^2 + (1 - \delta_0) \|\tilde{\bx}\|_2^2 \geq \delta_0 \|\bx\|_2^2$, which implies that $H_2 \geq G'_0\left( \frac{\|\bx\|^2}{2d}\right) \frac{\delta}{10} \|\bx\|_2^2$. We claim that 
\begin{equation}
\label{eq:H2_lower}
H_2 \geq \frac{\delta d}{5}G'_0\left( \frac{\|\bx\|^2}{2d}\right).
\end{equation}
 In fact, if $\|\bx\|_2^2 \leq 2d$, we get $H_2 = 0 =  \frac{\delta d}{5}G'_0\left( \frac{\|\bx\|^2}{2d}\right)$; If $\|\bx\|_2^2 > 2d$, we get \eqref{eq:H2_lower} straightforwardly.

\paragraph{Lower bound of $H_3$:} When $L|\bb_l^*\bh|^2 \leq 8d \mu^2$,
\[
\frac{L}{4\mu^2} G'_0\left(\frac{L|\bb_l^*\bh|^2}{8d\mu^2}\right) \lag \bb_l\bb_l^*\bh, \Dh\rag = 0 = \frac{d}{4} G'_0\left(\frac{L|\bb_l^*\bh|^2}{8d\mu^2}\right).
\]
When $L|\bb_l^*\bh|^2 > 8d \mu^2$, by \prettyref{lem:orth_decomp}, there hold $|\alpha_1\overline{\alpha_2} -1|\leq \delta$ and $|\alpha_1| \leq 2$,
which implies that
\[
\frac{1}{(1-\delta_0)|\overline{\alpha_2}|} = \frac{|\alpha_1|}{(1 - \delta_0)|\alpha_1 \overline{\alpha_2}|} \leq \frac{2}{(1-\delta_0)(1-\delta)}.
\]
By $\mu_h \leq \mu$ and $\delta \leq \epsilon \leq \frac{1}{15}$, similarly we have 
\begin{align*}
\Real(\lag \bb_l\bb_l^*\bh, \Dh\rag) &\geq |\bb_l^*\bh| \left(|\bb_l^*\bh| - \frac{2}{(1-\delta_0)(1-\delta)} |\bb_l^*\bh_0|\right)
\\
&\geq \left(8 - 4\sqrt{\frac{20}{9}}\frac{1}{(1-\delta_0)(1-\delta)}\right) \frac{d\mu^2 }{L} > \frac{d\mu^2}{L}.
\end{align*}
This implies that for $1\leq l\leq L$,
\[
\frac{L}{4\mu^2} G'_0\left(\frac{L|\bb_l^*\bh|^2}{8d\mu^2}\right) \Real\lag \bb_l\bb_l^*\bh, \Dh\rag \geq \frac{d}{4} G'_0\left(\frac{L|\bb_l^*\bh|^2}{8d\mu^2}\right).
\]
So we always have
\[
\Real(H_3) \geq \sum_{l=1}^L \frac{d}{4} G'_0\left(\frac{L|\bb_l^*\bh|^2}{8d\mu^2}\right).
\]
To sum up the two cases, we have
\begin{align*}
\Real(H_1 + H_2 + H_3) &\geq \frac{\delta d}{5}G'_0\left( \frac{\|\bh\|^2}{2d}\right) + \frac{\delta d}{5}G'_0\left( \frac{\|\bx\|^2}{2d}\right) + \sum_{l=1}^L \frac{d}{4} G'_0\left(\frac{L|\bb_l^*\bh|^2}{8d\mu^2}\right)
\\
&\geq \frac{2\delta d}{5}\left( \sqrt{G_0\left( \frac{\|\bh\|^2}{2d}\right)} + \sqrt{G_0\left( \frac{\|\bx\|^2}{2d}\right)} + \sum_{l=1}^L  \sqrt{G_0\left(\frac{L|\bb_l^*\bh|^2}{8d\mu^2}\right)}   \right)
\\
&\geq \frac{2\delta d}{5}\left( \sqrt{G_0\left( \frac{\|\bh\|^2}{2d}\right) + G_0\left( \frac{\|\bx\|^2}{2d}\right)+ \sum_{l=1}^L  G_0\left(\frac{L|\bb_l^*\bh|^2}{8d\mu^2}\right)}   \right)
\end{align*}
where $G_0'(z) = 2\sqrt{G_0(z)}$ and it implies \prettyref{eq:G_regularity}.
\end{proof}

\begin{lemma}
\label{lem:reg}
Let $\tF$ be as defined in~\eqref{def:FG}, then there exists a positive constant $\omega$ such that
\begin{equation*}
\|\nabla \tF(\bh, \bx)\|^2 \geq \omega \left[ \tF(\bh, \bx) - c \right]_+
\end{equation*}
with $c =  \|\be\|^2 + 1700 \|\A^*(\be)\|^2$ and $\omega = \frac{d_0}{5000}$ for all $(\bh, \bx) \in \Kint$. 
Here we set  $\rho \geq d^2 + 2\|\be\|^2.$
\end{lemma}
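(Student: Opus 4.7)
My plan is to pair $\nabla\tF$ with the direction $(\Dh,\Dx)$ and invoke Cauchy-Schwarz. Adding the conclusions of Lemma~\ref{lem:regF} and Lemma~\ref{lem:regG} yields
\[
\Real\bigl(\langle\nabla\tF_{\bh},\Dh\rangle+\langle\nabla\tF_{\bx},\Dx\rangle\bigr)\ \ge\ \tfrac{\delta^{2}d_{0}^{2}}{8}-2\delta d_{0}\|\A^{*}(\be)\|+\tfrac{\delta}{5}\sqrt{\rho G},
\]
and Lemma~\ref{lem:DxDh} furnishes $\|\Dh\|^{2}+\|\Dx\|^{2}\le 12.2\,\delta^{2}d_{0}$. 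Cauchy-Schwarz then gives
\[
12.2\,\delta^{2}d_{0}\,\|\nabla\tF\|^{2}\ \ge\ [I]_{+}^{2}, \qquad I := P-Q,
\]
with $P:=\tfrac{\delta^{2}d_{0}^{2}}{8}+\tfrac{\delta}{5}\sqrt{\rho G}\ge 0$ and $Q:=2\delta d_{0}\|\A^{*}(\be)\|\ge 0$. In parallel I will use~\eqref{eq:Fuppbd-org} together with the AM-GM step $2\sqrt{2}\delta d_{0}\|\A^{*}(\be)\|\le \tfrac{\delta^{2}d_{0}^{2}}{16}+32\|\A^{*}(\be)\|^{2}$ to control the right-hand side,
\[
\tF(\bh,\bx) - c\ \le\ \tfrac{21}{16}\delta^{2}d_{0}^{2}+G-1668\|\A^{*}(\be)\|^{2}\ =:\ X.
\]

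The argument then splits on the sign of $I$. In the main case $I\ge 0$, I apply the elementary identity $(P-Q)^{2}\ge\tfrac{P^{2}}{2}-Q^{2}$ (a reformulation of $(P-2Q)^{2}\ge 0$) together with $(a+b)^{2}\ge a^{2}+b^{2}$ for non-negative $a,b$ to produce
\[
I^{2}\ \ge\ \tfrac{\delta^{4}d_{0}^{4}}{128}+\tfrac{\delta^{2}\rho G}{50}-4\delta^{2}d_{0}^{2}\|\A^{*}(\be)\|^{2}.
\]
Dividing through by $12.2\,\delta^{2}d_{0}$ and using $\rho\ge d^{2}\ge 0.81\,d_{0}^{2}$, the claim $\|\nabla\tF\|^{2}\ge \omega[\tF-c]_{+}$ with $\omega=d_{0}/5000$ reduces to a coefficient-by-coefficient comparison between the resulting lower bound on $\|\nabla\tF\|^{2}$ and $\omega X$; the three separate comparisons (for the $\delta^{2}d_{0}^{2}$ term, for the $G$ term, and for the negative $\|\A^{*}(\be)\|^{2}$ term) each hold with a small numerical margin.

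In the residual case $I<0$, each of the two non-negative summands $\tfrac{\delta^{2}d_{0}^{2}}{8}$ and $\tfrac{\delta}{5}\sqrt{\rho G}$ is strictly smaller than $Q=2\delta d_{0}\|\A^{*}(\be)\|$, which forces $\delta d_{0}<16\|\A^{*}(\be)\|$ and $\rho G<100\,d_{0}^{2}\|\A^{*}(\be)\|^{2}$. Substituting both bounds back into $X$ (and using $\rho\ge 0.81\,d_{0}^{2}$ to convert the second bound into $G<\tfrac{100}{0.81}\|\A^{*}(\be)\|^{2}$) yields $X<(336+123.5-1668)\|\A^{*}(\be)\|^{2}<0$, so $[\tF-c]_{+}\le X_{+}=0$ and the desired inequality is trivial. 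The main obstacle is the precise bookkeeping of constants in the $I\ge 0$ case: the buffer $1700\|\A^{*}(\be)\|^{2}$ built into $c$ has to be chosen just large enough to simultaneously absorb the AM-GM slack $32\|\A^{*}(\be)\|^{2}$ in the upper bound on $\tF-c$ and the cross-term $4\delta^{2}d_{0}^{2}\|\A^{*}(\be)\|^{2}$ produced when completing the square in the lower bound for $I^{2}$; all other steps are routine.
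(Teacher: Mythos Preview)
Your proposal is correct and follows essentially the same strategy as the paper: pair $\nabla\tF$ with $(\Dh,\Dx)$, invoke Lemmas~\ref{lem:regF}, \ref{lem:regG}, and \ref{lem:DxDh}, then apply Cauchy--Schwarz. The only difference lies in the final algebraic cleanup. The paper divides by $\delta d_0$, passes to square roots via $\tfrac{\delta d_0}{12}\ge\tfrac{1}{6\sqrt{5}}\sqrt{F_0}$ and the subadditivity $\sqrt{F_0}+\sqrt{G}+\sqrt{2[\Real\langle\A^*(\be),\cdot\rangle]_+}\ge\sqrt{[\tF-\|\be\|^2]_+}$, and then closes with the scalar identity $[\sqrt{(x-a)_+}-b]_+^2\ge\tfrac{(x-a)_+}{2}-b^2$. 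You instead square early, split on the sign of $I=P-Q$, use the equivalent identity $(P-Q)^2\ge\tfrac{P^2}{2}-Q^2$, and finish by a direct coefficient-by-coefficient comparison with $\omega X$. Your explicit verification of the three numerical comparisons (the $\delta^2 d_0^3$ term: $1/1562>21/80000$; the $G$ term: $0.81/610>1/5000$; the $\|\A^*(\be)\|^2$ term: $4/12.2<1668/5000$) confirms the constants work with the stated $\omega=d_0/5000$ and $c=\|\be\|^2+1700\|\A^*(\be)\|^2$, and your $I<0$ branch is a clean way to dispose of the residual case that the paper handles implicitly through the $[\cdot]_+$ truncation. Both routes are standard; yours is arguably more transparent about where each piece of the constant $1700$ is spent.
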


\begin{proof}
Following from Lemma~\ref{lem:regF} and Lemma~\ref{lem:regG}, we have
\begin{eqnarray*}
\Real( \lag \nabla F_{\bh}, \Dh\rag + \lag \nabla F_{\bx}, \Dx\rag)
& \geq & \frac{\delta^2 d_0^2}{8} - 2\delta d_0 \|\A^*(\be)\| \\
 \Real( \lag \nabla G_{\bh}, \Dh \rag + \lag \nabla G_{\bx}, \Dx \rag) 
 & \geq & \frac{\delta d}{5} \sqrt{ G(\bh, \bx)} \geq \frac{9\delta d_0}{50}\sqrt{G(\bh, \bx)}
\end{eqnarray*}
for $\alpha = (1 - \delta_0)\alpha_1$ or $\frac{1}{(1 - \delta)\overline{\alpha_2}}$ and $\forall (\bh, \bx) \in \Kint$ where $\rho \geq d^2 + 2\|\be\|^2  \geq d^2$ and $\frac{9}{10}d_0 \leq d \leq \frac{11}{10}d_0$.
Adding them together gives
\begin{eqnarray}
\frac{\delta^2 d_0^2}{8} + \frac{  9\delta d_0}{50}\sqrt{ G(\bh, \bx)} - 2\delta d_0 \|\A^*(\be) \|
& \leq & \Real\left (\lag  \nabla F_{\bh}+  \nabla G_{\bh}, \Dh \rag + \lag \nabla F_{\bx} +  \nabla G_{\bx} \rag\right) \nonumber \\
& \leq & \|\nabla\tF_{\bh}\| \| \Dh \|   + \| \nabla\tF_{\bx} \| \| \Dx \| \nonumber \\
& \leq & \sqrt{2} \| \nabla \tF(\bh, \bx)\| \max\{ \| \Dh \|, \| \Dx \|\}  \nonumber \\
& \leq & 3.6 \delta \sqrt{d_0}  \| \nabla \tF(\bh, \bx)\| \label{eq:nabla-tF}
\end{eqnarray}
where both $\|\Dh\|$ and $\|\Dx\|$ are bounded by $2.5\delta\sqrt{d_0}$ in  Lemma~\ref{lem:DxDh}.
Note that
\begin{equation}\label{eq:AEHX}
\sqrt{2\left[ \Real(\lag \A^*(\be), \bh\bx^* - \bh_0\bx_0^*\rag) \right]_+} \leq \sqrt{ 2\sqrt{2} \|\A^*(\be)\| \delta d_0} \leq \frac{\sqrt{5}\delta d_0}{4} + \frac{4}{\sqrt{5}}\|\A^*(\be)\|.
\end{equation}
Dividing both sides of~\eqref{eq:nabla-tF} by $\delta d_0$, we obtain
\begin{equation*}
\frac{3.6}{\sqrt{d_0}} \|\nabla \tF(\bh, \bx)\|  \geq 
 \frac{\delta d_0}{12}  + \frac{9}{50}\sqrt{G(\bh, \bx)} + \frac{\delta d_0 }{24} - 2\|\A^*(\be)\|.
\end{equation*}
The Local RIP condition implies $F_0(\bh, \bx) \leq \frac{5}{4}\delta^2 d_0^2$ and hence $\frac{\delta d_0}{12} \geq \frac{1}{6\sqrt{5}}\sqrt{F_0(\bh, \bx)}$, where $F_0$ is defined in~\eqref{def:F0}.  Combining the equation above and~\eqref{eq:AEHX}, 
\begin{eqnarray*}
\frac{3.6}{\sqrt{d_0}} \|\nabla \tF(\bh, \bx)\| 
& \geq & \frac{1}{6\sqrt{5}} \Big[ \left(\sqrt{F_0(\bh, \bx)} +\sqrt{2\left[ \Real(\lag \A^*(\be), \bh\bx^* - \bh_0\bx_0^*\rag) \right]_+} + \sqrt{G(\bh, \bx)}\right) \\
&& + \frac{\sqrt{5}\delta d_0}{4} - \left( \frac{\sqrt{5}\delta d_0}{4} + \frac{4}{\sqrt{5}}\|\A^*(\be)\|\right)\Big] - 2\|\A^*(\be)\| \\
& \geq & \frac{1}{6\sqrt{5}} \left[ \sqrt{ \left[\tF(\bh, \bx) - \|\be\|^2\right]_+} -  29\|\A^*(\be)\|\right]
\end{eqnarray*}
where $\tF(\bh, \bx) - \|\be\|^2 \leq F_0(\bh, \bx) + 2 [\Real(\lag \A^*(\be), \bh\bx^* - \bh_0\bx_0^*\rag)]_+ + G(\bh, \bx)$ follows from definition and~\eqref{eq:F-decom}.
Finally, we have
\begin{equation*}
\|\nabla \tF(\bh, \bx)\|^2 \geq \frac{d_0}{2500} \left[ \sqrt{\left[\tF(\bh, \bx) - \|\be\|^2\right]_+} - 29 \|\A^*(\be)\|\right]_+^2
\end{equation*}
for all $(\bh, \bx)\in \Kint.$ For any nonnegative fixed real numbers $a$ and $b$, we have
\begin{equation*}
[\sqrt{(x - a)_+} - b ]_+ + b \geq \sqrt{(x - a)_+} 
\end{equation*}
and it implies
\begin{equation*}
( x - a)_+ \leq 2 ( [\sqrt{(x - a)_+} - b ]_+^2 + b^2) \Longrightarrow [\sqrt{(x - a)_+} - b ]_+^2  \geq \frac{(x - a)_+}{2} - b^2.
\end{equation*}

Therefore, by setting $ a = \|\be\|$ and $b = 30\|\A^*(\be)\|$, there holds
\begin{eqnarray*}
\|\nabla \tF(\bh, \bx)\|^2 
& \geq & \frac{d_0}{2500} \left[ \frac{\tF(\bh, \bx) - \|\be\|^2 }{2} -  850 \|\A^*(\be)\|^2 \right]_+ \\
& \geq & \frac{d_0}{5000} \left[ \tF(\bh, \bx) - (\|\be\|^2 + 1700 \|\A^*(\be)\|^2) \right]_+.
\end{eqnarray*}
\end{proof}

\subsection{Local smoothness}
\label{s:smooth}

\begin{lemma}
For any $\bz : = (\bh, \bx)$ and $\bw : = (\bu, \bv)$ such that $\vct{z}, \vct{z}+\vct{w} \in \Keps \cap \KF$, there holds
\begin{equation*}
\| \nabla\tF(\bz + \bw) - \nabla\tF(\bz) \| \leq C_L \|\bw\|,
\end{equation*}
with
\begin{equation*}
C_L \leq \sqrt{2}d_0\left[ 10 \|\A\|^2 + \frac{\rho}{d^2} \left(5 + \frac{3L}{2\mu^2}\right)\right] 
\end{equation*}
where $\rho \geq d^2 + 2\|\be\|^2$ 
and $\|\A\|^2\leq \sqrt{N\log(NL/2) + \gamma\log L}$ holds with probability at least $1 - L^{-\gamma}$ from Lemma~\ref{lem:A-UPBD}. In particular, $L = \mathcal{O}((\mu^2 + \sigma^2) (K + N)\log^2 L)$ 
and $\|\be\|^2 = \mathcal{O}(\sigma^2d_0^2)$ follows from $\|\be\|^2 \sim \frac{\sigma^2d_0^2}{2L} \chi^2_{2L}$ and~\eqref{ineq:bern}. Therefore, $C_L$ can be simplified into
\begin{equation*}
C_L = \mathcal{O}(d_0(1 + \sigma^2)(K + N)\log^2 L )
\end{equation*}
by choosing $\rho \approx d^2 + 2\|\be\|^2.$
\end{lemma}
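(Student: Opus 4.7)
The plan is to split $\nabla\tF = \nabla F + \nabla G$ and bound the Lipschitz constants of each piece on the segment $\bz + s\bw$, $s \in [0,1]$, which by hypothesis lies entirely in $\Keps \cap \KF$. By Lemma~\ref{lem:betamu} this segment actually sits inside $\Kd \cap \Kmu \cap \Keps$, so throughout one has the uniform bounds $\|\bh + s\bu\| \leq 2\sqrt{d_0}$, $\|\bx + s\bv\| \leq 2\sqrt{d_0}$, and $\sqrt{L}\|\BB(\bh + s\bu)\|_\infty \leq 4\mu\sqrt{d_0}$, together with $\|\bh_0\| = \|\bx_0\| = \sqrt{d_0}$. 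These in turn yield $\|s\bu\|, \|s\bv\| \leq 4\sqrt{d_0}$ and $\sqrt{L}\|\BB(s\bu)\|_\infty \leq 8\mu\sqrt{d_0}$, each controlled by $s\|\bw\|$ when needed.

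For $\nabla F$ I would use multilinear expansion. With $\bz = (\bh,\bx)$ and $\bw = (\bu,\bv)$, the identity $(\bh+s\bu)(\bx+s\bv)^* - \bh\bx^* = s\bu\bx^* + s\bh\bv^* + s^2\bu\bv^*$ reduces $\nabla F_{\bh}(\bz+s\bw) - \nabla F_{\bh}(\bz)$ to a finite sum of terms of the form $\A^*\A(\BM)\bq$ plus the noise term $-\A^*(\be)(s\bv)$, where each $\BM$ is a rank-one matrix built from $\{\bh,\bx,\bh_0,\bx_0,s\bu,s\bv\}$ and $\bq \in \{\bx, s\bv\}$. Applying $\|\A^*\A(\BM)\bq\| \leq \|\A\|^2\|\BM\|_F\|\bq\|$ together with the $\Kd$ bounds (which control one factor by $\mathcal{O}(\sqrt{d_0})$ and the other by $s\|\bw\|$) produces the $10d_0\|\A\|^2$ piece of $C_L$, after combining symmetrically with the analogous $\nabla F_{\bx}$ estimate in the $(K+N)$-norm.

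For $\nabla G$ I would handle each of the three summands of $G$ separately. For the norm term $\frac{\rho}{2d}G_0'(\|\bh\|^2/(2d))\bh$ (and its $\bx$ analogue), the add-and-subtract identity
\[
G_0'(\|\bh+s\bu\|^2/(2d))(\bh+s\bu) - G_0'(\|\bh\|^2/(2d))\bh = G_0'(\|\bh+s\bu\|^2/(2d))(s\bu) + \bigl[G_0'(\|\bh+s\bu\|^2/(2d)) - G_0'(\|\bh\|^2/(2d))\bigr]\bh,
\]
combined with the pointwise bound $|G_0'(z)| \leq 22/9$ on $\Kd$ (since $\|\bh\|^2/(2d) \leq 20/9$), the global Lipschitz property $|G_0'(z_1) - G_0'(z_2)| \leq 2|z_1 - z_2|$, and $\bigl|\|\bh+s\bu\|^2 - \|\bh\|^2\bigr| \leq 4\sqrt{d_0}\,s\|\bu\|$, delivers an $\mathcal{O}(\rho d_0/d^2)\,s\|\bw\|$ bound. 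For the incoherence sum $\frac{\rho L}{8d\mu^2}\sum_l G_0'(L|\bb_l^*\bh|^2/(8d\mu^2))\bb_l\bb_l^*\bh$, the same strategy applies: the first piece is controlled using $\sum_l \bb_l\bb_l^* = \BI_K$ and the pointwise bound on $G_0'$, giving $\mathcal{O}(\rho L/(d\mu^2))\,s\|\bu\|$; the second piece combines the pointwise Lipschitz factor (of order $L|\bb_l^*(s\bu)|(|\bb_l^*(\bh+s\bu)| + |\bb_l^*\bh|)/(d\mu^2)$) with the $\Kmu$ bound $|\bb_l^*\bh| \leq 4\mu\sqrt{d_0}/\sqrt{L}$ and Cauchy--Schwarz via $\sum_l|\bb_l^*\bu|^2 \leq \|\bu\|^2$, yielding $\mathcal{O}(\rho L d_0/(d^2\mu^2))\,s\|\bu\|$.

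Summing the $F$- and $G$-contributions delivers the stated $C_L \leq \sqrt{2}d_0[10\|\A\|^2 + (\rho/d^2)(5 + 3L/(2\mu^2))]$. The simplification $C_L = \mathcal{O}(d_0(1+\sigma^2)(K+N)\log^2 L)$ then follows by inserting $\|\A\|^2 \leq N\log(NL/2)+\gamma\log L$ from Lemma~\ref{lem:A-UPBD}, the standard $\chi^2$-concentration $\|\be\|^2 \lesssim \sigma^2 d_0^2$ coming from $\|\be\|^2 \sim \frac{\sigma^2 d_0^2}{2L}\chi^2_{2L}$ and the Bernstein-type inequality~\eqref{ineq:bern}, the choice $\rho \approx d^2+2\|\be\|^2 \approx d_0^2(1+\sigma^2)$, and the sampling complexity $L = \mathcal{O}((\mu^2+\sigma^2)(K+N)\log^2 L)$. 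The hard part will be the second piece of the incoherence sum: the factor $L/\mu^2$ picked up by differentiating $G_0'$ could in principle blow up, and keeping its growth only linear (not quadratic) in $L/\mu^2$ hinges on pairing it with the tight $\Kmu$ bound $|\bb_l^*\bh| \leq 4\mu\sqrt{d_0}/\sqrt{L}$ rather than the trivial estimate $\|\bh\|$.
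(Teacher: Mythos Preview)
Your proposal is correct and follows essentially the same route as the paper's proof: split $\nabla\tF$ into $\nabla F$ and $\nabla G$, handle $\nabla F$ by multilinear expansion with $\|\A^*\A(\BM)\bq\| \leq \|\A\|^2\|\BM\|_F\|\bq\|$ and the $\Kd/\Keps$ bounds, and handle $\nabla G$ via the add--subtract identity, the Lipschitz bound $|G_0'(z_1)-G_0'(z_2)| \leq 2|z_1-z_2|$, and the crucial $\Kmu$ control on $|\bb_l^*\bh|$ for the incoherence sum (exactly the pairing you flag as the ``hard part''). One harmless slip: the lemma's hypothesis only places the endpoints $\bz,\bz+\bw$ in $\Keps\cap\KF$, not the full segment, but your bounds---like the paper's---actually only use the endpoints (plus convexity of $\Kd$ and $\Kmu$), so nothing is lost.
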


\begin{proof}
By \prettyref{lem:betamu}, we have $\vct{z}=(\vct{h}, \vct{x}), \vct{z}+\vct{w}=(\vct{h}+\vct{u}, \vct{x}+\vct{v}) \in \Kd \cap \Kmu$. 
Note that 
\[
\nabla \tF = (\nabla \tF_{\bh}, \nabla \tF_{\bx}) = (\nabla F_{\bh} + \nabla G_{\bh}, \nabla F_{\bx} + \nabla G_{\bx}),
\]
where 
\begin{equation}
\nabla F_{\bh} = \A^*(\A(\bh\bx^* - \bh_0\bx_0^*) - \be) \bx, \quad  \nabla F_{\bx} = (\A^*(\A(\bh\bx^* - \bh_0\bx_0^*) - \be))^* \bh,
\end{equation}
and
\begin{equation*}
\nabla G_{\bh} 
= \frac{\rho}{2d}\left[G'_0\left(\frac{\|\bh\|^2}{2d}\right) \bh 
+ \frac{L}{4\mu^2} \sum_{l=1}^L G'_0\left(\frac{L|\bb_l^*\bh|^2}{8d\mu^2}\right) \bb_l\bb_l^*\bh \right],\quad
\nabla G_{\bx} = \frac{\rho}{2d}G'_0\left( \frac{\|\bx\|^2}{2d}\right) \bx.
\end{equation*}
\paragraph{Step 1:} we estimate the upper bound of $\|\nabla F_{\bh}(\bz + \bw) - \nabla F_{\bh}(\bz) \|$. A straightforward calculation gives
\begin{equation*}
\nabla F_{\bh}(\bz + \bw) - \nabla F_{\bh}(\bz) = \A^*\A( \bu\bx^* + \bh\bv^* + \bu\bv^*) \bx + \A^*\A((\bh+\bu)(\bx+\bv)^* - \bh_0\bx_0^*) \bv - \A^*(\be)\bv.
\end{equation*}
Note that $\bz, \bz+\bw \in \Kd$ directly implies 
\begin{equation*}
\|\bu\bx^* + \bh\bv^* + \bu\bv^*\|_F \leq \|\vct{u}\| \|\vct{x}\| + \|\vct{h}+\vct{u}\|\|\vct{v}\| \leq 2\sqrt{d_0} (\|\bu\| + \|\bv\|)
\end{equation*}
where $\|\bh + \bu\| \leq 2\sqrt{d_0}.$
Moreover, $\vct{z} + \vct{w} \in \Keps$ implies
\begin{equation*}
\|(\bh+\bu)(\bx+\bv)^* - \bh_0\bx_0^*\|_F \leq \epsilon d_0.
\end{equation*}
Combined with $\|\A^*(\be)\| \leq \eps d_0$ and $\|\bx\| \leq 2\sqrt{d_0}$, we have 
\begin{eqnarray}
\|\nabla F_{\bh}(\bz + \bw) - \nabla F_{\bh}(\bz) \| 
& \leq & 4d_0 \|\A\|^2(\|\bu\| + \|\bv\|) + \eps d_0 \|\A\|^2 \|\bv\| + \eps d_0 \|\bv\|  \nonumber \\
& \leq & 5d_0 \|\A\|^2 ( \|\bu\| + \|\bv\|). 
\label{eq:LipFh}
\end{eqnarray}
\paragraph{Step 2:} we estimate the upper bound of $\|\nabla F_{\bx}(\bz + \bw) - \nabla F_{\bx}(\bz)\|$. Due to the symmetry between $\nabla F_{\bh}$ and $\nabla F_{\bx}$, we have,
\begin{equation}
\label{eq:LipFx}
\|\nabla F_{\bx}(\bz + \bw) - \nabla F_{\bx}(\bz) \| \leq 5d_0\|\A\|^2 ( \|\bu\| + \|\bv\|).\end{equation}

\paragraph{Step 3:} we estimate the upper bound of $\|\nabla G_{\bx}(\bz + \bw) - \nabla G_{\bx}(\bz)\|$. Notice that $G_0'(z) = 2\max\{z - 1, 0\}$, which implies that for any $z_1, z_2, z\in \RR$, there holds
\begin{equation}
\label{eq:LipG}
|G'_0(z_1) - G'_0(z_2)| \leq 2|z_1 - z_2|, \quad G'(z) \leq 2|z|,
\end{equation}
although $G'(z)$ is not differentiable at $z = 1$. Therefore, by~\eqref{eq:LipG}, it is easy to show that 
\[
\left| G'_0\left( \frac{\|\bx + \bv\|^2}{2d}\right) - G'_0\left( \frac{\|\bx\|^2}{2d}\right) \right| \leq \frac{\|\bx + \bv\| + \|\bx\|}{d} \|\bv\| \leq \frac{4\sqrt{d_0}}{d}\|\bv\|
\]
where $\|\bx + \bv\| \leq 2\sqrt{d_0}.$
Therefore, by $\bz, \bz+\bw \in \Kd$, we have
\begin{align}
\| \nabla G_{\bx}(\bz + \bw) - \nabla G_{\bx}(\bz) \| &\leq \frac{\rho}{2d} \left| G'_0\left( \frac{\|\bx + \bv\|^2}{2d}\right) - G'_0\left( \frac{\|\bx\|^2}{2d}\right)\right| \| \bx  + \bv \| + \frac{\rho}{2d}  G'_0\left( \frac{\|\bx\|^2}{2d}\right) \| \bv \| \nonumber 
\\
& \leq \frac {\rho}{2d} \left( \frac{8d_0 \|\bv\|}{d} + \frac{2d_0\|\bv\|}{d} \right) \leq \frac{5d_0 \rho}{d^2} \|\bv\|.   \label{eq:LipGx}
\end{align}

\paragraph{Step 4:} we estimate the upper bound of $\|\nabla G_{\bh}(\bz + \bw) - \nabla G_{\bh}(\bz)\|$. Denote 
\begin{align*}
\nabla G_{\bh}(\bz + \bw) - \nabla G_{\bh}(\bz) &= \frac{\rho}{2d}\left[G'_0\left(\frac{\|\bh + \bu\|^2}{2d}\right) (\bh + \bu) - G'_0\left(\frac{\|\bh\|^2}{2d}\right) \bh\right] 
\\
&+ \frac{\rho L}{8d\mu^2 }\sum_{l=1}^L \left[G'_0\left(\frac{L|\bb_l^*(\bh + \bu)|^2}{8d\mu^2}\right) \bb_l^*(\bh + \bu) - G'_0\left(\frac{L|\bb_l^*\bh|^2}{8d\mu^2}\right) \bb_l^*\bh \right]\bb_l
\\
&:= \vct{j}_1 + \vct{j}_2.
\end{align*}
Similar to \prettyref{eq:LipGx}, we have
\begin{equation}
\label{eq:LipG1}
\|\vct{j}_1\| \leq \frac{5d_0 \rho}{d^2} \|\bu\|.
\end{equation}
Now we control $\|\vct{j}_2\|$. Since $\vct{z}, \vct{z}+\vct{w} \in \Kmu$, we have
\begin{eqnarray}
\left| G'_0\left(\frac{L|\bb_l^*(\bh + \bu)|^2}{8d\mu^2}\right) - G'_0\left(\frac{L|\bb_l^*\bh|^2}{8d\mu^2}\right)\right| 
& \leq & \frac{L}{4d\mu^2}\left(|\vct{b}_l^*(\vct{h} + \vct{u})| + |\vct{b}_l^*\vct{h}|\right) |\bb_l^*\bu| \nonumber \\
& \leq & \frac{2\sqrt{d_0 L}}{d\mu} |\vct{b}_l^*\vct{u}|. \label{eq:step4-1}
\end{eqnarray}
and
\begin{equation}
\label{eq:step4-2}
\left|G'_0\left(\frac{L|\bb_l^*(\bh + \bu)|^2}{8d\mu^2}\right)\right| \leq 2\frac{L|\bb_l^*(\bh + \bu)|^2}{8d\mu^2} \leq \frac{4d_0}{d}
\end{equation}
where both $\max_l |\bb_l^*(\bh + \bu)|$ and $\max_l |\bb_l^*\bh|$ are bounded by $\frac{4 \sqrt{d_0}\mu}{\sqrt{L}}$.
Let $\alpha_l$ be
\begin{align*}
\alpha_l &:=G'_0\left(\frac{L|\bb_l^*(\bh + \bu)|^2}{8d\mu^2}\right) \bb_l^*(\bh + \bu) - G'_0\left(\frac{L|\bb_l^*\bh|^2}{8d\mu^2}\right) \bb_l^*\bh
\\
& = \left(G'_0\left(\frac{L|\bb_l^*(\bh + \bu)|^2}{8d\mu^2}\right) - G'_0\left(\frac{L|\bb_l^*\bh|^2}{8d\mu^2}\right)\right)\vct{b}_l^*\vct{h} + G'_0\left(\frac{L|\bb_l^*(\bh + \bu)|^2}{8d\mu^2}\right)\vct{b}_l^*\vct{u}.
\end{align*}
Applying~\eqref{eq:step4-1} and~\eqref{eq:step4-2} leads to
\[
|\alpha_l| \leq \frac{2\sqrt{d_0 L}}{d \mu}|\vct{b}_l^*\vct{u}|\left(4\mu \sqrt{\frac{d_0}{L}}\right) + \frac{4d_0}{d}|\vct{b}_l^*\vct{u}| = \frac{12 d_0}{d}|\vct{b}_l^*\vct{u}|.
\]
Since $\sum_{l=1}^L \alpha_l \vct{b}_l = \mtx{B}^* \begin{bmatrix} \alpha_1 \\ \vdots \\ \alpha_L \end{bmatrix}$ and $\|\mtx{B}\|=1$, there holds
\[
\left\|\sum_{l=1}^L \alpha_l \vct{b}_l \right\|^2 \leq \sum_{l=1}^L |\alpha_l|^2 \leq \left(\frac{12 d_0}{d}\right)^2 \sum_{l=1}^L |\vct{b}_l^*\vct{u}|^2 = \left(\frac{12 d_0}{d}\right)^2 \|\mtx{B}\vct{u}\|^2 \leq \left(\frac{12 d_0}{d}\right)^2 \|\vct{u}\|^2.
\]
This implies that
\begin{equation}
\label{eq:LipG2}
\|\vct{j}_2\| 
= \frac{\rho L}{8d\mu^2} \left\|\sum_{l=1}^L \alpha_l \vct{b}_l \right\| 
\leq \frac{\rho L}{8d\mu^2}\frac{12 d_0}{d}\|\vct{u}\| 
= \frac{3\rho Ld_0}{2d^2\mu^2}\|\vct{u}\|.
\end{equation}
In summary, by combining~\prettyref{eq:LipFh}, \prettyref{eq:LipFx}, \prettyref{eq:LipGx}, \prettyref{eq:LipG1}, and \prettyref{eq:LipG2}, we conclude that
\[
\|\nabla \tF(\bz + \bw) - \nabla \tF(\bz)\| \leq 10d_0 \|\mathcal{A}\|^2 (\|\vct{u}\| + \|\vct{v}\|) + \frac{5d_0\rho}{d^2}(\|\vct{u}\| + \|\vct{v}\|) + \frac{3\rho Ld_0}{2\mu^2d^2}\|\vct{u}\|.
\]
With $\|\bu\| + \|\bv\| \leq \sqrt{2}\|\bw\|$, there holds
\begin{equation*}
\|\nabla \tF(\bz + \bw) - \nabla \tF(\bz)\|
\leq \sqrt{2}d_0\left[ 10 \|\A\|^2 + \frac{\rho}{d^2} \left(5 + \frac{3L}{2\mu^2}\right)\right] \|\bw\|.
\end{equation*}
\end{proof}


\subsection{Initialization}\label{s:init}

This section is devoted to justifying the validity of the  \textit{Robustness condition} and to proving Theorem~\ref{thm:init}, i.e., establishing the fact that Algorithm~\ref{Initial} constructs an initial guess $(\bu_0, \bv_0)\in \frac{1}{\sqrt{3}}\Kd\bigcap \frac{1}{\sqrt{3}} \Kmu \bigcap \MN_{\frac{2}{5}\eps }.$

\begin{lemma}\label{lem:Ay-hx}
For $\be\sim\mathcal{N}(\bzero, \frac{\sigma^2d_0^2}{2L}\I_L) + \mi\mathcal{N}(\bzero, \frac{\sigma^2d_0^2}{2L}\I_L)$, there holds
\begin{equation}\label{eq:Ay-hx}
\| \A^*(\by) - \bh_0\bx_0^* \| \leq \xi d_0, 
\end{equation}
with probability at least $1 - L^{-\gamma}$ if $L \geq C_{\gamma} (\mu^2_h + \sigma^2) \max\{K,  N\} \log L /\xi^2.$ 
Moreover, 
\begin{equation*}
\|\A^*(\be)\| \leq \xi d_0
\end{equation*}
with  probability at least $1 - L^{-\gamma}$ if $L \geq C_{\gamma}(\frac{\sigma^2}{\xi^2} + \frac{\sigma}{\xi})\max\{K, N\} \log L.$  In particular, we fix $\xi = \frac{\eps}{10\sqrt{2}}$ and then Robustness condition~\ref{cond:AE}  holds, i.e., $\|\A^*(\be)\| \leq \frac{\eps d_0}{10\sqrt{2}}$. 
\end{lemma}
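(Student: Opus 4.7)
Write $\by = \A(\bh_0\bx_0^*) + \be$ so that
\[
\A^*(\by) - \bh_0\bx_0^* \;=\; \bigl[\A^*\A(\bh_0\bx_0^*) - \bh_0\bx_0^*\bigr] \;+\; \A^*(\be).
\]
The identity $\sum_l \bb_l\bb_l^* = \I_K$ lets us rewrite each bracket as a sum of $L$ independent, centered, rank-one random matrices. Both estimates then reduce to a matrix Bernstein concentration inequality, and the second statement of the lemma is just the noise piece considered in isolation (the noiseless term vanishes when $\be=\bzero$). Finally, setting $\xi=\eps/(10\sqrt{2})$ in the noise bound immediately gives the Robustness Condition~\ref{cond:AE}.

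\textbf{Bounding $\A^*(\be)$.} I would write $\A^*(\be) = \sum_{l=1}^L e_l \bb_l\ba_l^*$ where $e_l \sim \mathcal{CN}(0,\sigma^2 d_0^2/L)$ is independent of $\ba_l \sim \mathcal{CN}(\bzero,\I_N)$. The spectral norm of the summand factors as $|e_l|\,\|\bb_l\|\,\|\ba_l\|$; using $\|\bb_l\|\leq \sqrt{K/L}$, the sub-Gaussian tail of $|e_l|$ (scale $\sigma d_0/\sqrt{L}$), and the $\chi^2$ tail of $\|\ba_l\|^2$, the Orlicz $\psi_1$-norm is $R \lesssim \sigma d_0\sqrt{KN}/L$. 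A direct computation of the matrix variance yields
\[
\Bigl\|\sum_l \E[e_l\bb_l\ba_l^*(e_l\bb_l\ba_l^*)^*]\Bigr\| = \tfrac{\sigma^2 d_0^2 N}{L}, \qquad \Bigl\|\sum_l \E[(e_l\bb_l\ba_l^*)^*(e_l\bb_l\ba_l^*)]\Bigr\| = \tfrac{\sigma^2 d_0^2 K}{L}.
\]
Applying the matrix Bernstein inequality for sub-exponential summands (as in~\cite{CLM15}) gives, with probability at least $1-L^{-\gamma}$,
\[
\|\A^*(\be)\| \lesssim \sigma d_0 \max\!\left\{\sqrt{\tfrac{(\gamma+1)\max\{K,N\}\log L}{L}},\; \tfrac{(\gamma+1)\sqrt{KN}\log^2 L}{L}\right\},
\]
which is exactly~\eqref{eq:Ae-1}. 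Requiring the right-hand side to be $\leq \xi d_0$ produces the stated sampling complexity $L \gtrsim (\sigma^2/\xi^2 + \sigma/\xi)\max\{K,N\}\log L$ (the two terms coming from the variance and the sub-exponential tail contributions, respectively).

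\textbf{Bounding $\A^*\A(\bh_0\bx_0^*) - \bh_0\bx_0^*$.} Using $\sum_l \bb_l\bb_l^* = \I_K$ and $\E[\ba_l\ba_l^*]=\I_N$, this difference equals
\[
\sum_{l=1}^L (\bb_l^*\bh_0)\,\bb_l\,\bx_0^*\bigl(\ba_l\ba_l^* - \I_N\bigr),
\]
again a sum of independent, centered, rank-one random matrices. The key scaling is that the incoherence hypothesis $|\bb_l^*\bh_0| \leq \mu_h\sqrt{d_0/L}$ replaces the naive bound $\|\bh_0\|=\sqrt{d_0}$ in every summand. A routine calculation then gives $\psi_1$-norm of order $\mu_h d_0\sqrt{KN}\log L/L$ and matrix variance of order $\mu_h^2 d_0^2 \max\{K,N\}/L$. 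The same matrix Bernstein inequality yields a bound of the form $\mu_h d_0\sqrt{\max\{K,N\}\log L/L}$ up to lower-order corrections. Combined with the estimate of the previous paragraph and the triangle inequality, one obtains~\eqref{eq:Ay-hx} under $L \geq C_\gamma(\mu_h^2 + \sigma^2)\max\{K,N\}\log L/\xi^2$.

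\textbf{Main obstacle.} The summands are not bounded almost surely, only sub-exponential, so the classical bounded matrix Bernstein inequality does not apply directly. The cleanest route is to invoke an Orlicz $\psi_1$ version of matrix Bernstein; alternatively, one can truncate on the high-probability event $\{\|\ba_l\|^2 \lesssim N\log L,\; |\bx_0^*\ba_l|^2 \lesssim d_0\log L \text{ for all } l\}$, which holds with probability $1-L^{-\gamma}$ by a union bound. The truncation makes the summands bounded at the correct scale but costs an additional logarithmic factor, which is exactly what produces the second (sub-Gaussian vs.\ sub-exponential) term in~\eqref{eq:Ae-1} and the $\sigma/\xi$ contribution to the sample complexity. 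The remaining work is then book-keeping of constants so that the final probability aggregates to $1-L^{-\gamma}$.
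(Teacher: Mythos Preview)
Your proposal is correct and follows essentially the same route as the paper: decompose $\A^*(\by)-\bh_0\bx_0^*$ into the centered sum $\sum_l \bb_l\bb_l^*\bh_0\bx_0^*(\ba_l\ba_l^*-\I_N) + \sum_l e_l\bb_l\ba_l^*$, compute the $\psi_1$-norms and matrix variances exactly as you indicate, and apply the Orlicz $\psi_1$ matrix Bernstein inequality (the paper uses the version from Koltchinskii et al., stated as Theorem~\ref{BernGaussian} in the appendix, rather than truncation). The only cosmetic difference is that the paper treats the signal and noise terms jointly in a single Bernstein application for~\eqref{eq:Ay-hx}, whereas you bound them separately and combine by the triangle inequality; both lead to the same conclusion.
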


\begin{proof}
In this proof, we can assume $d_0 = 1$ and $\|\bh_0\| = \|\bx_0\| = 1$, without loss of generality. First note that $\E(\A^*\by) = \E(\A^*\A(\bh_0\bx_0^* ) + \A^*(\be))= \bh_0\bx_0^*.$  We will use the matrix Bernstein inequality to show that 
\begin{equation*}
\| \A^*(\by) - \bh_0\bx_0^* \| \leq \xi.
\end{equation*}
By definition of $\A$ and $\A^*$ in~\eqref{def:A} and~\eqref{def:AD},
\begin{equation*}
\A^*(\by) - \bh_0\bx_0^* = \sum_{l=1}^L \left[ \bb_l\bb^*_l\bh_0\bx_0^* (\ba_l\ba_l^* - \I_N) + e_l \bb_l\ba_l^*\right] = \sum_{l=1}^L \CZ_l,
\end{equation*}
where $\CZ_l : = \bb_l\bb_l^*\bh_0\bx_0^*(\ba_l \ba_l^* - \I_N) + e_l \bb_l\ba_l^*$ and $\sum_{l=1}^L \bb_l\bb_l^* = \I_K$. In order to apply Bernstein inequality~\eqref{thm:bern}, we need to estimate both the exponential norm $\|\CZ_l\|_{\psi_1}$  and the variance. 
\begin{eqnarray*}
\|\CZ_l\|_{\psi_1} & \leq & C \|\bb_l\| |\bb_l^*\bh_0| \| \bx_0^*(\ba_l\ba_l^* - \I_N)\|_{\psi_1} + C \|e_l \bb_l\ba_l^*\|_{\psi_1}\\
& \leq & C \frac{\mu_h\sqrt{KN}}{L} + C\frac{\sigma \sqrt{KN}}{L} \leq C\frac{(\mu_h + \sigma) \sqrt{KN}}{L},
\end{eqnarray*}
for some constant $C$.  Here,~\eqref{lem:11JB} of Lemma~\ref{lem:multiple1} gives
\begin{equation*}
\|\bx_0^* (\ba_l\ba_l^* - \I_N)\|_{\psi_1} \leq C\sqrt{N}
\end{equation*}
and 
\begin{equation*}
 \|e_l \bb_l\ba_l^*\|_{\psi_1}\leq  C\sqrt{\frac{K}{L}} (|e_l| \| \ba_l\| )_{\psi_1} \leq C\sqrt{ \frac{K}{L}} \frac{\sigma}{\sqrt{L}}\sqrt{N} \leq \frac{C\sigma\sqrt{KN}}{L}
\end{equation*}
follows from~\eqref{lemma:psi} where both $|e_l|$ and $\|\ba_l\|$ are sub-gaussian random variables.

Now we give an upper bound of $\sigma_0^2 : = \max\{ \|E\sum_{l=1}^L \CZ_l^*\CZ_l\|, \| \E\sum_{l=1}^L \CZ_l\CZ_l^*\|\}$.
\begin{eqnarray*}
\left\| \E \sum_{l=1}^l \CZ_l\CZ_l^* \right\| 
& = & \left\| \sum_{l=1}^L |\bb_l^*\bh_0|^2 \bb_l\bb_l^* \E \|(\ba_l\ba_l^* - \I_N)\bx_0  \|^2 + \sum_{l=1}^L \E(|e_l|^2 \|\ba_l\|^2)\bb_l\bb_l^*  \right\| \\
& \leq  & C N \left\| \sum_{l=1}^L |\bb_l^*\bh_0|^2 \bb_l\bb_l^* \right\|^2 + C\frac{\sigma^2 N}{L}\\
& \leq & C\frac{\mu^2_hN}{L} + C\frac{\sigma^2 N}{L} 
\leq C\frac{(\mu^2_h + \sigma^2)N}{L}.
\end{eqnarray*}
where $\E\|(\ba_l\ba_l^* - \I_N)\bx_0  \|^2 = \bx_0^* \E(\ba_l \ba_l^* - \I_N)^2 \bx_0 = N$ follows from~\eqref{lem:9JB} and $\E( |e_l|^2) = \frac{\sigma^2}{L}.$
\begin{eqnarray*}
\left\| \E \sum_{l=1}^l \CZ^*_l\CZ_l \right\| 
& \leq & \left\| \sum_{l=1}^L |\bb_l^*\bh_0|^2  \|\bb_l\|^2 \E \left[ (\ba\ba_l^* - \I_N)\bx_0\bx_0^* (\ba_l\ba_l^* - \I_N)\right] \right \| \\
& & + \left\| \E\sum_{l=1}^L e_l^2 \|\bb_l\|^2 \ba_l\ba_l^* \right\| \\
& \leq & C\frac{K}{L} \left\| \sum_{l=1}^L |\bb_l^*\bh_0|^2 \I_N\right\| + C\frac{\sigma^2K}{L^2} \left\| \sum_{l=1}^L \E(\ba_l\ba_l^*) \right\| = C\frac{(\sigma^2 + 1) K}{L}
\end{eqnarray*}
where we have used the fact that $\sum_{l=1}^L |\bb_l^*\bh_0|^2 = \|\bh_0\|^2 = 1.$ Therefore, we now have the variance $\sigma^2_0$ bounded by $\frac{(\mu^2_h  + \sigma^2)\max\{K, N\}}{L}.$ We apply Bernstein inequality~\eqref{thm:bern}, and obtain
\begin{eqnarray*}
\left\|\sum_{l=1}^L \CZ_l \right\| & \leq & C_0 \max\Big\{ \sqrt{\frac{( \mu^2_h + \sigma^2)\max\{K,N\} (\gamma + 1) \log L}{L}}, \\
&&  \quad \frac{\sqrt{KN}(\mu_h + \sigma)(\gamma + 1)\log^2 L }{L}  \Big\} \leq \xi
\end{eqnarray*}
with probability at least $1 - L^{-\gamma}$ if $L \geq C_{\gamma} (\mu^2_h + \sigma^2 )\max\{K,  N\}\log^2L /\xi^2.$

Regarding the estimation of $\|\A^*(\be)\|$, the same calculations immediately give
\begin{equation*}
R : = \max_{1\leq l\leq L} \|e_l \bb_l\ba_l^*\|_{\psi_1} \leq \sqrt{\frac{K}{L}}\max_{1\leq l\leq L} ( |e_l| \|\ba_l\|)_{\psi_1}\leq \frac{C\sigma\sqrt{KN}}{L}
\end{equation*}
and
\begin{equation*}
\max \left\{ \left\| \E \left[ \A^*(\be) (\A^*(\be))^*\right] \right\|, \left\| \E \left[  (\A^*(\be))^*\A^*(\be)\right] \right\| \right\} \leq \frac{\sigma^2\max\{K, N\}}{L}.
\end{equation*}

Applying Bernstein inequality~\eqref{thm:bern} again, we get
\begin{equation*}
\|\A^*(\be)\| \leq C_0\sigma \max\Big\{ \sqrt{\frac{(\gamma + 1)\max\{K,N\} \log L}{L}}, \frac{(\gamma + 1)\sqrt{KN}\log^2 L }{L}  \Big\} \leq \xi
\end{equation*}
with probability at least $1 - L^{-\gamma}$ if $L \geq C_{\gamma} (\frac{\sigma^2}{\xi^2} + \frac{\sigma}{\xi})\max\{K, N\}\log^2L.$
\end{proof}

Lemma~\ref{lem:Ay-hx} lays the foundation for the initialization procedure, which says that with enough measurements, the initialization guess via spectral method can be quite close to the ground truth. 
Before moving to the proof of Theorem~\ref{thm:init}, we  introduce a property about the projection onto a closed convex set. 
\begin{lemma}[Theorem 2.8 in~\cite{RM11AP}]\label{lem:KMC}
Let $Q := \{ \bw\in\CC^K | \sqrt{L}\|\BB\bw\|_{\infty} \leq 2\sqrt{d}\mu \}$ 
be a closed nonempty convex set. There holds
\begin{equation*}
\Real( \lag \bz - \PP_Q(\bz) , \bw - \PP_Q(\bz) \rag ) \leq 0, \quad \forall \, \bw \in Q, \bz\in \CC^K 
\end{equation*}
where $\PP_{Q}(\bz)$ is the projection of $\bz$ onto $Q$.
\end{lemma}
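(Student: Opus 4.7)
The plan is to invoke the standard variational-inequality characterization of the metric projection onto a convex set, exploiting only that $\PP_Q(\bz)$ minimizes $\|\bp-\bz\|^2$ over $\bp \in Q$ and that $Q$ is convex. Convexity lets us perturb $\PP_Q(\bz)$ along a straight line toward any competitor $\bw \in Q$ while staying inside $Q$, and optimality then forces a one-sided inequality on the directional derivative of the squared distance.

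Concretely, I would first fix $\bw \in Q$ arbitrary and set $\bp_t := \PP_Q(\bz) + t(\bw - \PP_Q(\bz))$ for $t\in[0,1]$. Since $Q$ is convex, each $\bp_t$ lies in $Q$, so the minimality of $\PP_Q(\bz)$ yields
\begin{equation*}
\|\bp_t - \bz\|^2 \;\geq\; \|\PP_Q(\bz) - \bz\|^2, \qquad \forall\, t\in[0,1].
\end{equation*}
Next I would expand the left-hand side via $\|\bu\|^2 = \lag \bu,\bu\rag$, getting
\begin{equation*}
\|\PP_Q(\bz)-\bz\|^2 \;+\; 2t\,\Real\lag \PP_Q(\bz) - \bz,\ \bw - \PP_Q(\bz)\rag \;+\; t^2\,\|\bw - \PP_Q(\bz)\|^2 \;\geq\; \|\PP_Q(\bz)-\bz\|^2 .
\end{equation*}
Cancelling the common term, dividing by $t>0$, and letting $t\to 0^+$ kills the $O(t)$ quadratic remainder and leaves $\Real\lag \PP_Q(\bz)-\bz,\ \bw - \PP_Q(\bz)\rag \geq 0$, which is exactly the claimed inequality after multiplying by $-1$.

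The only real subtlety — hardly an obstacle — is that we are working over $\CC^K$ rather than $\RR^K$, so one has to be mindful that the Hermitian inner product is sesquilinear. This is handled automatically by expanding $\|\bp_t-\bz\|^2 = \lag \bp_t-\bz,\bp_t-\bz\rag$: the cross terms combine into $2\,\Real\lag \cdot,\cdot\rag$, which is precisely the real part appearing in the statement. No probabilistic tools, no measurement-operator properties, and nothing specific to the set $Q$ in the lemma beyond convexity and nonemptiness enter the argument, so the proof is essentially a one-paragraph exercise in convex analysis.
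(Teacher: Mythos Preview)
Your proposal is correct and is precisely the standard proof of the Kolmogorov criterion; the paper itself does not give an argument but simply cites Theorem~2.8 in~\cite{RM11AP}, so your variational-inequality derivation is exactly what that reference contains.
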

This is a direct result from Theorem 2.8 in~\cite{RM11AP}, which is also called \textit{Kolmogorov criterion}. Now we present the proof of Theorem~\ref{thm:init}.

\begin{proof}{[\textbf{of Theorem~\ref{thm:init}}]}
Without loss of generality, we again set $d_0 = 1$ and by definition, all $\bh_0,$ $\bx_0$, $\hbh_0$ and $\hbx_0$ are of unit norm. 
Also we set $\xi = \frac{\eps}{10\sqrt{2}}.$
By applying the triangle inequality to~\eqref{eq:Ay-hx}, it is easy to see that
\begin{equation}\label{eq:hd}
1 - \xi \leq d \leq 1 + \xi, \quad |d - 1| \leq \xi \leq \frac{\eps}{10\sqrt{2}} < \frac{1}{10}, 
\end{equation}
which gives $\frac{9}{10}d_0 \leq d \leq \frac{11}{10}d_0.$
It is easier to get an upper bound for $\|\bv_0\|$ here, i.e., 
\begin{equation*}
\| \bv_0 \| = \sqrt{d} \|\hbx_0\| =  \sqrt{d} \leq \sqrt{1 + \xi} \leq \frac{2}{\sqrt{3}},
\end{equation*}
which implies $\bv_0 \in \frac{1}{\sqrt{3}}\Kd.$ The estimation of $\bu_0$ involves Lemma~\ref{lem:KMC}.
In our case, $\bu_0$ is the minimizer to the function $f(\bz) = \frac{1}{2} \| \bz - \sqrt{d} \hbh_0 \|^2$ over $Q = \{ \bz | \sqrt{L}\|\BB\bz\|_{\infty} \leq 2\sqrt{d}\mu\}.$ Therefore, $\bu_0$ is actually the projection of $\sqrt{d} \hbh_0$ onto $Q$. Note that $\bu_0\in Q$ implies $\sqrt{L}\|\BB\bu_0\|_{\infty} \leq 2\sqrt{d}\mu\leq \frac{4\mu}{\sqrt{3}}$ and hence $\bu_0\in \frac{1}{\sqrt{3}} \Kmu.$ Moreover, $\bu_0$ yields

\begin{eqnarray}
\|\sqrt{d}\hbh_0 - \bw\|^2 
& = & \| \sqrt{d}\hbh_0 - \bu_0\|^2 + 2\Real(\lag \sqrt{d}\hbh - \bu_0, \bu_0 - \bw\rag) + \|\bu_0 - \bw \|^2 \nonumber \\
& \geq & \| \sqrt{d}\hbh_0 - \bu_0\|^2 + \|\bu_0 - \bw \|^2 \label{eq:KMC}
\end{eqnarray}
for all $\bw\in Q$ because the cross term is nonnegative due to Lemma~\ref{lem:KMC}. 
Let $\bw = \bzero\in Q$ and we get
\begin{equation*}
\|\bu_0\|^2 \leq d \leq \frac{4}{3}.
\end{equation*}

So far, we have already shown that $(\bu_0, \bv_0) \in \frac{1}{\sqrt{3}}\Kd$ and $\bu_0 \in \frac{1}{\sqrt{3}}\Kmu$. Now we will show that $\|\bu_0\bv_0^* - \bh_0\bx_0^*\|_F \leq 4\xi.$

First note that $\sigma_i(\A^*(\by)) \leq \xi$ for all $i\geq 2$, which follows from Weyl's inequality~\cite{Stewart90} for singular values where $\sigma_i(\A^*(\by))$ denotes the $i$-th largest singular value of $\A^*(\by)$. Hence there holds
\begin{equation}\label{eq:dhx-hx}
\| d \hbh_0\hbx_0^* - \bh_0\bx_0^* \| \leq \|\A^*(\by) - d \hbh_0\hbx_0^* \| + \|\A^*(\by) - \bh_0\bx_0^* \| \leq 2\xi.
\end{equation}

On the other hand,
\begin{eqnarray*}
\| (\I - \bh_0\bh_0^*)\hbh_0 \| 
& = &  \| (\I - \bh_0\bh_0^*)
\hbh_0\hbx_0^*\hbx_0\hbh_0^* \| \\
& = &  \| (\I - \bh_0\bh_0^*)
( \A^*(\by) - d \hbh_0\hbx_0^* + \hbh_0\hbx_0^* - \bh_0\bx_0^*  ) \hbx_0\hbh_0^* \| \\
& = &  \| (\I - \bh_0\bh_0^*)
( \A^*(\by)  - \bh_0\bx_0^*  ) \hbx_0\hbh_0^* \| \leq \xi
\end{eqnarray*} 
where the second equation follows from $ (\I - \bh_0\bh_0^*) \bh_0\bx_0^* = \bzero$ and $(\A^*(\by) - d \hbh_0\hbx_0^*)\hbx_0\hbh_0^* = \bzero$.
Therefore, we have
\begin{equation}\label{eq:h0-h}
\|  \hbh_0 -  \bh_0^*\hbh_0 \bh_0  \| \leq \xi, 
\quad \|\sqrt{d} \hbh_0 - \alpha_0 \bh_0 \| \leq \sqrt{d}\xi,
\end{equation}
where $\alpha_0 = \sqrt{d}\bh_0^*\hbh_0$. 
If we  substitute $\bw$ by $\alpha_0 \bh_0\in Q$ into~\eqref{eq:KMC}, 
\begin{equation}\label{eq:h0-h-2}
\|\sqrt{d}\hbh_0 - \alpha_0 \bh_0\| \geq \| \bu_0 - \alpha_0 \bh_0\|.
\end{equation}
where $\alpha_0 \bh_0\in Q$ follows from $\sqrt{L} |\alpha_0|\|\BB\bh_0\|_{\infty} \leq |\alpha_0| \mu_h \leq \sqrt{d} \mu_h \leq \sqrt{d} \mu < \sqrt{2d}\mu$.
Combining~\eqref{eq:h0-h} and~\eqref{eq:h0-h-2} leads to $\|\bu_0 - \alpha_0\bh_0\| \leq \sqrt{d}\xi.$ Now we are ready to estimate $\|\bu_0\bv_0^* - \bh_0\bx_0^* \|_F$ as follows, 
\begin{eqnarray*}
\|\bu_0\bv_0^* - \bh_0\bx_0^* \|_F & \leq & 
\|\bu_0\bv_0^* - \alpha_0\bh_0\bv_0^* \|_F + \|\alpha_0\bh_0\bv_0^* - \bh_0\bx_0^* \|_F \\
& \leq & \|\bu_0 - \alpha_0\bh_0\| \|\bv_0\| + \| d \bh_0 \bh^*_0 \hbh_0  \hbx_0^* - \bh_0\bx_0^* \|_F \\
& \leq & \xi \sqrt{d} \|\bv_0\|  + \|d \hbh_0\hbx_0^* - \bh_0\bx_0^*\|_F \\
& \leq & \xi d  + 2 \sqrt{2}\xi \leq \xi (1 + \xi) + 2\sqrt{2}\xi \\
& \leq &  4\xi \leq \frac{2}{5}\eps,
\end{eqnarray*}
where $\|\bv_0\| = \sqrt{d}$, $\bv_0 = \sqrt{d}\hbx_0$ and $\|d \hbh_0\hbx_0^* - \bh_0\bx_0^*\|_F \leq \sqrt{2}\|d \hbh_0\hbx_0^* - \bh_0\bx_0^*\| \leq 2\sqrt{2} \xi$ follows from~\eqref{eq:dhx-hx}.
\end{proof}

\section{Appendix}

\subsection{Descent Lemma}
\begin{lemma}\label{lem:DSL}
If $f(\bz, \bar{\bz})$ is a continuously differentiable real-valued function with two complex variables $\bz$ and $\bar{\bz}$, (for simplicity, we just denote $f(\bz, \bar{\bz})$ by $f(\bz)$ and keep in the mind that $f(\bz)$ only assumes real values) for $\bz := (\bh, \bx) \in \Keps\cap\KF$. 
Suppose that there exists a constant $C_L$ such that
\begin{equation*}
\|\nabla f(\bz + t \Delta \bz) - \nabla f(\bz)\| \leq C_L t\|\Delta\bz\|, \quad \forall 0\leq t\leq 1,
\end{equation*}
for all $\bz\in \Keps\cap\KF$ and $\Delta \bz$ such that  $\bz + t\Delta \bz \in \Keps\cap\KF$ and $0\leq t\leq 1$. Then
\begin{equation*}
f(\bz + \Delta  \bz) \leq f(\bz) + 2\Real( (\Delta \bz)^T \overline{\nabla} f(\bz)) + C_L\|\Delta \bz\|^2
\end{equation*}
where $\overline{\nabla} f(\bz) := \frac{\pa f(\bz, \bar{\bz})}{\pa \bz}$ is the complex conjugate of $\nabla f(\bz) = \frac{\pa f(\bz, \bar{\bz})}{\partial \bar{\bz}}$.
\end{lemma}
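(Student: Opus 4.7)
The plan is to mimic the classical descent lemma, replacing ordinary gradients with Wirtinger gradients. Define the one-variable real function $\varphi(t) = f(\bz + t \Delta \bz)$ for $t \in [0,1]$. By hypothesis the whole segment $\{\bz + t \Delta \bz : 0 \leq t \leq 1\}$ lies in $\Keps \cap \KF$, so all the Wirtinger derivatives along this path are well defined and Lipschitz with constant $C_L$.

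First I would compute $\varphi'(t)$ via the Wirtinger chain rule. Since $f$ is real-valued, $\frac{\pa f}{\pa \bz} = \overline{\frac{\pa f}{\pa \bar{\bz}}}$, i.e. $\overline{\nabla} f(\bw) = \overline{\nabla f(\bw)}$ componentwise. The chain rule gives
\begin{equation*}
\varphi'(t) = (\Delta \bz)^T \overline{\nabla} f(\bz + t \Delta \bz) + \overline{(\Delta \bz)}^T \nabla f(\bz + t \Delta \bz) = 2\Real\!\left( (\Delta \bz)^T \overline{\nabla} f(\bz + t \Delta \bz) \right),
\end{equation*}
where the second equality uses that the two summands are complex conjugates of each other.

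Next I would integrate from $0$ to $1$ and split off the $t=0$ contribution:
\begin{equation*}
f(\bz + \Delta \bz) - f(\bz) = \int_0^1 \varphi'(t)\,dt = 2\Real\!\left((\Delta \bz)^T \overline{\nabla} f(\bz)\right) + 2\int_0^1 \Real\!\left( (\Delta \bz)^T \bigl(\overline{\nabla} f(\bz + t \Delta \bz) - \overline{\nabla} f(\bz)\bigr)\right) dt.
\end{equation*}
The remainder term is controlled by Cauchy--Schwarz together with the Lipschitz hypothesis, noting that $\|\overline{\nabla} f(\bw) - \overline{\nabla} f(\bz)\| = \|\nabla f(\bw) - \nabla f(\bz)\|$ since one is the componentwise conjugate of the other:
\begin{equation*}
\left| 2\int_0^1 \Real\!\left( (\Delta \bz)^T \bigl(\overline{\nabla} f(\bz + t \Delta \bz) - \overline{\nabla} f(\bz)\bigr)\right) dt \right| \leq 2 \int_0^1 \|\Delta \bz\| \cdot C_L t \|\Delta \bz\|\, dt = C_L \|\Delta \bz\|^2.
\end{equation*}
Combining the two displays yields the claimed inequality.

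There is no real obstacle here; this is a routine adaptation of the Euclidean descent lemma. The only subtle point is bookkeeping in the Wirtinger calculus, specifically verifying that the chain rule produces $2\Real((\Delta \bz)^T \overline{\nabla} f)$ and that the Lipschitz bound on $\nabla f$ transfers to $\overline{\nabla} f$ with the same constant. Both follow immediately from the reality of $f$, so the argument goes through essentially by inspection once $\varphi'(t)$ is written out correctly.
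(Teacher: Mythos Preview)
Your proof is correct and essentially identical to the paper's: both define the scalar function $g(t)=f(\bz+t\Delta\bz)$, compute its derivative via the Wirtinger chain rule as $2\Real((\Delta\bz)^T\overline{\nabla}f(\bz+t\Delta\bz))$, integrate from $0$ to $1$, and bound the remainder using the Lipschitz hypothesis and Cauchy--Schwarz. The only cosmetic difference is that you display the integral $2\int_0^1 C_L t\,\|\Delta\bz\|^2\,dt = C_L\|\Delta\bz\|^2$ explicitly, whereas the paper writes the final bound directly.
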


\begin{proof}
The proof simply follows from proof of descent lemma (Proposition A.24 in~\cite{B99}). However it is slightly different since we are dealing with complex variables. Denote $g(t) := f(\bz + t\Delta \bz).$ Since $f(\bz, \bar{\bz})$ is a continuously differentiable function, we apply the chain rule 
\begin{equation}
\label{eq:chain}
\frac{d g(t)}{dt} = (\Delta\bz)^T\frac{\pa f}{\pa \bz} (\bz + t\Delta \bz) + (\Delta \bar{\bz})^T \frac{\pa f}{\pa \bar{\bz}} (\bz + t\Delta \bz)  = 2\Real ( (\Delta \bz)^T \overline{\nabla} f (\bz + t\Delta\bz) ).
\end{equation}
Then by the Fundamental Theorem of Calculus, 
\begin{eqnarray*}
f(\bz + t\Delta \bz) - f(\bz) & = & \int_0^1 \frac{dg(t)}{dt} dt = 2\int_0^1 \Real ( (\Delta \bz)^T \overline{\nabla} f (\bz + t\Delta \bz) ) dt \\
& \leq & 2\Real ((\Delta\bz)^T \overline{\nabla} f(\bz_0)) + 2\int_0^1 \Real((\Delta\bz)^T(\overline{\nabla} f(\bz + t\Delta \bz) - \overline{\nabla} f(\bz)))  dt \\
& \leq & 2\Real((\Delta\bz)^T \overline{\nabla} f(\bz)) + 2 \|\Delta\bz\| \int_0^1 \| \nabla f(\bz + t\Delta \bz) - \nabla f(\bz)) \| dt \\
& \leq & 2\Real((\Delta\bz)^T \overline{\nabla} f(\bz)) + C_L\|\Delta \bz\|^2.
\end{eqnarray*}

\end{proof}

\subsection{Some useful facts}
The key concentration inequality we use throughout our paper comes from Proposition 2 in~\cite{KolVal11,VK11}.

\begin{theorem}\label{BernGaussian}
Consider a finite sequence of $\CZ_l$ of independent centered random matrices with dimension $M_1\times M_2$. Assume that  $\|\CZ_l\|_{\psi_1} \leq R$ where the norm $\|\cdot\|_{\psi_1}$ of a matrix is defined as
\begin{equation}\label{def:psi-1}
\|\BZ\|_{\psi_1} := \inf_{u \geq 0} \{ \E[ \exp(\|\BZ\|/u)] \leq 2 \}.
\end{equation}
and introduce the random matrix 
\begin{equation}\label{S}
\BS = \sum_{l=1}^L \CZ_l.
\end{equation}
Compute the variance parameter
\begin{equation}\label{sigmasq}
\sigma_0^2 := \max\{ \|\E(\BS\BS^*)\|, \|\E(\BS^*\BS)\|\} = \max\Big\{ \| \sum_{l=1}^L \E(\CZ_l\CZ_l^*)\|, \| \sum_{l=1}^L \E(\CZ_l^* \CZ_l)\| \Big\},
\end{equation}
then for all $t \geq 0$, we have the tail bound on the operator norm of $\BS$, 

\begin{equation}\label{thm:bern}
\|\BS\| \leq C_0 \max\{ \sigma_0 \sqrt{t + \log(M_1 + M_2)}, R\log\left( \frac{\sqrt{L}R}{\sigma_0}\right)(t + \log(M_1 + M_2)) \}
\end{equation}
with probability at least $1 - e^{-t}$ where $C_0$ is an absolute constant. \end{theorem}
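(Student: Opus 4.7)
The plan is to reduce this sub-exponential matrix Bernstein inequality to the standard \emph{bounded} matrix Bernstein inequality of Tropp by combining Hermitian dilation with truncation; this is the Koltchinskii--Vershynin strategy. First, I would apply the Hermitian dilation trick, replacing each $\CZ_l$ by the self-adjoint matrix $\widetilde{\CZ}_l := \bigl[\begin{smallmatrix} 0 & \CZ_l \\ \CZ_l^* & 0 \end{smallmatrix}\bigr]$ of size $(M_1+M_2)\times(M_1+M_2)$. Since $\|\widetilde{\CZ}_l\|=\|\CZ_l\|$ and $\widetilde{\CZ}_l^{\,2}=\diag(\CZ_l\CZ_l^*,\CZ_l^*\CZ_l)$, both the $\psi_1$-norm control and the variance parameter $\sigma_0^2$ transfer verbatim, so the problem reduces to a self-adjoint sum in dimension $M_1+M_2$.

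Second, fix a truncation level $B>0$ to be chosen at the end and split $\CZ_l=\mathcal{Y}_l+\mathcal{W}_l$, where
\begin{equation*}
\mathcal{Y}_l := \CZ_l\mathbf{1}_{\{\|\CZ_l\|\le B\}} - \E\bigl[\CZ_l\mathbf{1}_{\{\|\CZ_l\|\le B\}}\bigr], \qquad \mathcal{W}_l := \CZ_l\mathbf{1}_{\{\|\CZ_l\|>B\}} - \E\bigl[\CZ_l\mathbf{1}_{\{\|\CZ_l\|>B\}}\bigr].
\end{equation*}
Both families are independent and centered, with $\|\mathcal{Y}_l\|\le 2B$ almost surely. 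The assumption $\|\CZ_l\|_{\psi_1}\le R$ and Markov's inequality give the sub-exponential tail $\Pr(\|\CZ_l\|>u)\le 2e^{-u/R}$, from which $\|\E[\CZ_l\mathbf{1}_{\{\|\CZ_l\|>B\}}]\|\lesssim Re^{-B/R}$ and the variance shift $\|\E\mathcal{Y}_l^2-\E\CZ_l^2\|\lesssim R(B{+}R)e^{-B/R}$, so $\sum_l\E\mathcal{Y}_l^2$ remains $\sigma_0^2(1+o(1))$ for $B$ in our target range.

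For the bounded part, Tropp's self-adjoint matrix Bernstein inequality yields
\begin{equation*}
\Pr\!\left(\Bigl\|\sum_{l=1}^L \mathcal{Y}_l\Bigr\|\ge s\right)\le 2(M_1+M_2)\exp\!\left(-\frac{s^2/2}{\sigma_0^2+2Bs/3}\right),
\end{equation*}
which solved for $s$ at level $e^{-t}/2$ produces the two regimes $s\asymp\sigma_0\sqrt{t+\log(M_1+M_2)}$ (sub-Gaussian) and $s\asymp B(t+\log(M_1+M_2))$ (sub-exponential). For the tail part, on the event $\mathcal{E}=\{\max_l\|\CZ_l\|\le B\}$ every indicator $\mathbf{1}_{\{\|\CZ_l\|>B\}}$ vanishes, so $\sum_l\mathcal{W}_l$ collapses to the bias term bounded by $2LRe^{-B/R}$; and $\Pr(\mathcal{E}^c)\le 2Le^{-B/R}$ by a union bound.

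Finally, I would set $B \asymp R\log\!\bigl(\sqrt{L}R/\sigma_0\bigr)$: this makes $Le^{-B/R}\lesssim \sigma_0/(\sqrt{L}R)$, so both $\Pr(\mathcal{E}^c)$ and the bias of $\sum_l\mathcal{W}_l$ are absorbed into the final constants, while substituting $B$ into the Bernstein tail for $\sum_l\mathcal{Y}_l$ reproduces the claimed sub-exponential term $R\log(\sqrt{L}R/\sigma_0)(t+\log(M_1+M_2))$. Taking the maximum with the sub-Gaussian term and combining the two probabilities gives the theorem. The main technical obstacle is this balance: $B$ must be small enough that $B(t+\log(M_1+M_2))$ has the advertised logarithmic form, yet large enough that the tail event and truncation bias are absorbed; tracking the constants through the variance shift and the bias term while maintaining this balance is the core of the argument, and it is precisely what forces the factor $\log(\sqrt{L}R/\sigma_0)$ into the bound.
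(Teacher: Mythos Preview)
The paper does not prove this theorem; it is quoted without proof as Proposition~2 of Koltchinskii et al.~\cite{KolVal11,VK11}, so there is no in-paper argument to compare against. Your truncation-plus-dilation sketch is the standard route used in those references: Hermitian dilation to pass to the self-adjoint case, splitting at a level $B$, applying Tropp's bounded matrix Bernstein to the truncated part, controlling the tail piece via the $\psi_1$ assumption, and finally optimizing $B\asymp R\log(\sqrt{L}R/\sigma_0)$ to balance the bias and failure-probability terms against the sub-exponential term in the Bernstein bound. The sketch is sound and matches the approach of the cited sources; the only places that demand care when filling in constants are (i) verifying that the variance perturbation $\|\E\mathcal{Y}_l^2-\E\CZ_l^2\|$ summed over $l$ is genuinely $o(\sigma_0^2)$ for the chosen $B$ (you may need $B$ a fixed multiple of $R\log(\sqrt{L}R/\sigma_0)$ rather than merely $\asymp$), and (ii) ensuring the union-bound term $2Le^{-B/R}$ is at most a fraction of $e^{-t}$ for all $t\ge 0$, which in Koltchinskii's original is handled by absorbing a small additive $\log L$ into the constant via the relation $L R^2\ge\sigma_0^2$ (since $\sigma_0^2\le L\max_l\|\E\CZ_l^*\CZ_l\|\le L R^2$).
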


For convenience we also collect some results used throughout the proofs. 
\begin{lemma} \label{lem:6JB}
Let $z$ be a random variable which obeys $\Pr\{ |z| > u \} \leq a e^{-b u }$, then
\begin{equation*}
\|z\|_{\psi_1} \leq (1 + a)/b.
\end{equation*}
which is proven in Lemma 2.2.1 in~\cite{VW96}. Moreover, it is easy to verify that for a scalar $\lambda\in \CC$
\begin{equation*}
\|\lambda z\|_{\psi_1}  = |\lambda| \|z\|_{\psi_1}.
\end{equation*}

\end{lemma}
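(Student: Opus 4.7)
The plan is to verify both claims directly from the definition of the $\psi_1$ norm given in~\eqref{def:psi-1}, namely $\|z\|_{\psi_1} = \inf\{u > 0: \E[\exp(|z|/u)] \leq 2\}$. For the first claim, the strategy is to exhibit an explicit admissible $u$, specifically $u = (1+a)/b$, for which the moment condition holds, and then invoke the definition of the infimum to conclude $\|z\|_{\psi_1} \leq (1+a)/b$.

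The key step is a standard layer-cake calculation. Since $\exp(|z|/u) \geq 1$ always, I would write
\begin{equation*}
\E\exp(|z|/u) = 1 + \int_1^\infty \Pr(\exp(|z|/u) > t)\,dt = 1 + \int_0^\infty \Pr(|z| > us)\, e^s\, ds
\end{equation*}
after the substitution $t = e^s$. Plugging in the tail hypothesis $\Pr\{|z|>us\} \leq a e^{-bus}$ reduces the bound to $1 + a\int_0^\infty e^{(1-bu)s}\,ds$, which is finite whenever $bu > 1$ and equals $1 + a/(bu-1)$. Choosing $u = (1+a)/b$ makes $bu - 1 = a$ exactly, so the right-hand side is $1 + 1 = 2$, which is precisely the threshold defining $\|z\|_{\psi_1}$.

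For the homogeneity claim, the argument is entirely manipulative. Assuming $\lambda \neq 0$ (the case $\lambda=0$ being trivial), the identity $|\lambda z| = |\lambda|\cdot |z|$ together with the change of variables $v = u/|\lambda|$ in the infimum yields
\begin{equation*}
\|\lambda z\|_{\psi_1} = \inf\{u>0 : \E\exp(|z|/(u/|\lambda|)) \leq 2\} = |\lambda|\cdot \inf\{v>0 : \E\exp(|z|/v) \leq 2\} = |\lambda|\|z\|_{\psi_1},
\end{equation*}
which is the desired equality.

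There is no real obstacle here; the only delicate point is the choice of constant in the first part, which must be calibrated so that the integral is not merely finite but bounded by $1$, so that adding the initial $1$ produces exactly the threshold $2$. The constant $(1+a)/b$ is the sharpest one arising from this direct sub-exponential tail integration, and it is tight up to the additive offset from the mandatory $1$ contribution at $s = 0$.
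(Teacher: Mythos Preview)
Your proof is correct. The paper does not actually prove this lemma but simply cites Lemma~2.2.1 of~\cite{VW96} for the tail-to-$\psi_1$ bound and declares the homogeneity ``easy to verify''; your layer-cake computation with the substitution $t=e^s$ and the choice $u=(1+a)/b$ is precisely the standard argument underlying that reference, and your change-of-variables verification of homogeneity is the intended one-line check.
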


\begin{lemma}[ Lemma 10-13 in~\cite{RR12}, Lemma 12.4 in~\cite{LS15Blind}] \label{lem:multiple1}
Let $\bu\in\CC^n \sim \mathcal{N}(0, \frac{1}{2}\I_n) + \mi \mathcal{N}(0, \frac{1}{2}\I_n) $, then $\|\bu\|^2 \sim \frac{1}{2}\chi^2_{2n}$ and
\begin{equation}
\label{lem:8JB}
\| \|\bu\|^2 \|_{\psi_1} = \| \lag\bu, \bu\rag \|_{\psi_1}  \leq C n
\end{equation}
and
\begin{equation}
\label{lem:9JB}
\E\ls (\bu\bu^* - \I_n)^2 \rs = n\I_n.
\end{equation}
Let $\bq\in\CC^n$ be any deterministic vector, then the following properties hold

\begin{equation}
\label{lem:11JB}
\| (\bu\bu^* - \I)\bq\|_{\psi_1} \leq C\sqrt{n}\|\bq\|,
\end{equation}
\begin{equation}
\label{lem:12JB}
\E\ls (\bu\bu^* - \I)\bq\bq^* (\bu\bu^* - \I)\rs = \|\bq\|^2 \I_n.
\end{equation}
Let $\bv\sim \mathcal{N}(0, \frac{1}{2}\I_m) + \mi \mathcal{N}(0, \frac{1}{2}\I_m) $  be a complex Gaussian random vector in $\CC^m$, independent of $\bu$, then 
\begin{equation}\label{lemma:psi}
\left\| \|\bu\| \cdot \|\bv\|\right\|_{\psi_1} \leq C\sqrt{mn}.
\end{equation}

\end{lemma}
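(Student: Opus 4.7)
The plan is to handle the five claims of Lemma~\ref{lem:multiple1} in turn, relying on the Wick (Isserlis) formula for moments of complex Gaussians, standard $\chi^2$ concentration, and the interaction between the Orlicz norms $\|\cdot\|_{\psi_1}$ and $\|\cdot\|_{\psi_2}$. For the distributional statement, I would write $\bu = \frac{1}{\sqrt 2}(\BX + \mi \BY)$ with $\BX, \BY \sim \mathcal{N}(0,\I_n)$ independent, so that $\|\bu\|^2 = \frac12 (\|\BX\|^2 + \|\BY\|^2) \sim \frac12 \chi^2_{2n}$. Then~\eqref{lem:8JB} follows from the Bernstein-type tail bound $\Pr(|\|\bu\|^2 - n| \geq t) \leq 2\exp(-c\min(t^2/n, t))$ for $\chi^2$ variables, combined with Lemma~\ref{lem:6JB}.

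For the two expectation identities~\eqref{lem:9JB} and~\eqref{lem:12JB}, the main tool will be Wick's formula for complex Gaussians: $\E[\bu_a \bar{\bu}_b \bu_c \bar{\bu}_d] = \delta_{ab}\delta_{cd} + \delta_{ad}\delta_{cb}$. Expanding $(\bu\bu^* - \I_n)^2 = \|\bu\|^2 \bu\bu^* - 2\bu\bu^* + \I_n$, a direct index computation gives $\E[\|\bu\|^2 \bu\bu^*] = (n+1)\I_n$, yielding $\E[(\bu\bu^* - \I_n)^2] = n\I_n$. For~\eqref{lem:12JB}, writing out the $(i,j)$-entry of $(\bu\bu^* - \I)\bq\bq^*(\bu\bu^* - \I)$ and applying Wick's formula term-by-term reduces the identity to the scalar contractions $\sum_{k,l} q_k \bar q_l \delta_{il}\delta_{jk} = \bar q_j q_i$ and $\sum_{k,l} q_k \bar q_l \delta_{ij}\delta_{kl} = \|\bq\|^2 \delta_{ij}$, whose combination cancels the $\bq\bq^*$ term and leaves $\|\bq\|^2 \I_n$.

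The sub-exponential bound~\eqref{lem:11JB} will be the most delicate step. Without loss of generality I assume $\|\bq\| = 1$ and, by unitary invariance of the complex Gaussian, take $\bq = \be_1$, so $(\bu\bu^* - \I)\bq = (|u_1|^2 - 1)\be_1 + u_1 \sum_{j \geq 2} \bar{u}_j \be_j$, and thus $\|(\bu\bu^* - \I)\bq\|^2 = (|u_1|^2 - 1)^2 + |u_1|^2 \sum_{j \geq 2}|u_j|^2$. Conditioning on $u_1$, the second term is $|u_1|^2 \cdot \frac12 \chi^2_{2(n-1)}$, so a tail bound of the form $\Pr(\|(\bu\bu^* - \I)\bq\| \geq t) \leq C \exp(-c t/\sqrt n)$ follows by conditioning and applying $\chi^2$ concentration, then integrating out $u_1$. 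Lemma~\ref{lem:6JB} then yields the claimed $\psi_1$-bound $C\sqrt n$.

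Finally, for~\eqref{lemma:psi} I plan to use the standard fact that for independent nonnegative random variables, $\| XY \|_{\psi_1} \leq \|X\|_{\psi_2}\|Y\|_{\psi_2}$ (a Cauchy--Schwarz-type inequality at the level of Orlicz norms). The concentration $\Pr(|\|\bu\| - \sqrt{n}| \geq t) \leq 2\exp(-ct^2)$ (Lipschitz concentration of $\|\cdot\|$ applied to the standard Gaussian) gives $\|\|\bu\|\|_{\psi_2} \leq C\sqrt n$, and similarly $\|\|\bv\|\|_{\psi_2} \leq C\sqrt m$, so the product bound follows. The main obstacle across all five parts is~\eqref{lem:11JB}: the quadratic-in-Gaussian structure of $(\bu\bu^* - \I)\bq$ forces a two-level argument (condition on $u_1$, then apply $\chi^2$ concentration and integrate), whereas every other claim is either a direct Wick computation or a one-shot application of $\chi^2$ tails.
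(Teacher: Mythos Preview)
The paper does not supply its own proof of Lemma~\ref{lem:multiple1}; it simply records these facts with citations to Lemmas~10--13 of~\cite{RR12} and Lemma~12.4 of~\cite{LS15Blind}. Your self-contained argument is correct and follows the standard routes one finds in those references: Wick's formula for the two expectation identities~\eqref{lem:9JB} and~\eqref{lem:12JB}, $\chi^2$ tail bounds plus Lemma~\ref{lem:6JB} for the $\psi_1$-estimates, and the product inequality $\|XY\|_{\psi_1}\leq \|X\|_{\psi_2}\|Y\|_{\psi_2}$ for~\eqref{lemma:psi}.

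Two small remarks. First, in your reduction for~\eqref{lem:11JB}, the off-diagonal part of $(\bu\bu^*-\I)\be_1$ is $\bar u_1\sum_{j\geq 2}u_j\be_j$ rather than $u_1\sum_{j\geq 2}\bar u_j\be_j$; this is irrelevant once you take norms. Second, your conditioning-and-integration argument for~\eqref{lem:11JB} works, but a slightly cleaner alternative is to reuse the very $\psi_2\cdot\psi_2\to\psi_1$ inequality you invoke for~\eqref{lemma:psi}: bound $\|(\bu\bu^*-\I)\be_1\|\leq\big||u_1|^2-1\big|+|u_1|\cdot\|\bu_{\setminus 1}\|$, note the first summand has $\psi_1$-norm $O(1)$, and bound the second via $\big\||u_1|\big\|_{\psi_2}\cdot\big\|\|\bu_{\setminus 1}\|\big\|_{\psi_2}=O(\sqrt n)$, avoiding the explicit tail integration.
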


\section*{Acknowledgement}

S.~Ling, T.~Strohmer, and K.~Wei acknowledge support from the NSF via grant DTRA-DMS 1322393.

\end{document}